\DeclarePairedDelimiter\floor{\lfloor}{\rfloor}
\newtheorem{theorem}{Theorem}[section]
\newtheorem{corollary}{Corollary}[theorem]
\newtheorem{lemma}[theorem]{Lemma}
\theoremstyle{definition}
\newtheorem{remark}{Remark}[section]
\declaretheorem[style=definition]{example}
\providecommand{\algorithmname}{Algorithm}
\begin{document}

\title{Moment Restrictions for Nonlinear \\
Panel Data Models with Feedback}
\author{Stéphane Bonhomme, Kevin Dano, and Bryan S. Graham\thanks{\underline{Bonhomme}: Department of Economics, University of Chicago, 1126 E. 59th Street, Chicago, IL 60637, e-mail: \url{sbonhomme@uchicago.edu}, web: \href{https://sites.google.com/site/stephanebonhommeresearch/}{https://sites.google.com/site/stephanebonhommeresearch/}. \newline \underline{Dano}: Department of Economics, Princeton University, Washington Rd, Princeton, NJ 08544, e-mail: \url{kdano@princeton.edu}, web:  \href{https://kevindano.github.io}{https://kevindano.github.io}. \newline  \underline{Graham}: Department of Economics, University of California - Berkeley, 530 Evans Hall \#3380, Berkeley, CA 94720-3880 and National Bureau of Economic Research, e-mail: \url{bgraham@econ.berkeley.edu}, web:  \href{http://bryangraham.github.io/econometrics/}{http://bryangraham.github.io/econometrics/}. \newline We thank Manuel Arellano, Jinyong Hahn, Bo Honor\'e, Whitney Newey, Aureo de Paula, Martin Weidner, Tiemen Woutersen, and seminar participants at various places for comments. All the usual disclaimers apply. Portions of the research reported were undertaken while Bonhomme and Graham were visiting CEMFI in the Fall of 2022 with support from the Spanish State Research Agency under the María de Maeztu Unit of Excellence Programme (Project No: CEX2020-001104-M).}} 
	
\date{\today 	}
	
	
\maketitle
	\thispagestyle{empty}
	
\begin{abstract}
    \footnotesize{Many panel data methods, while allowing for general dependence between covariates and time-invariant agent-specific heterogeneity, place strong \emph{a priori} restrictions on \emph{feedback}: how past outcomes, covariates, and heterogeneity map into future covariate levels. Ruling out feedback entirely, as often occurs in practice, is unattractive in many dynamic economic settings. We provide a general characterization of all \emph{feedback and heterogeneity robust} (FHR) moment conditions for nonlinear panel data models and present constructive methods to derive feasible moment-based estimators for specific models. We also use our moment characterization to compute semiparametric efficiency bounds, allowing for a quantification of the information loss associated with accommodating feedback, as well as providing insight into how to construct estimators with good efficiency properties in practice. Our results apply both to the finite dimensional parameter indexing the parametric part of the model as well as to estimands that involve averages over the distribution of unobserved heterogeneity. We illustrate our methods by providing a complete characterization of all FHR moment functions in the multi-spell mixed proportional hazards model. We compute efficient moment functions for both model parameters and average effects in this setting.}
\end{abstract}
	
\textbf{{JEL Codes:}} C23, C33
	
\textbf{{Keywords:}} {Sequential Exogeneity, Feedback, Panel Data, Duration Models, Incidental Parameters, Semiparametric Efficiency Bounds.}
	
\newpage
	
\pagenumbering{arabic}
\onehalfspacing

\renewcommand\thmcontinues[1]{Continued}

\newpage
	
\pagenumbering{arabic}
\onehalfspacing


An econometrican randomly samples units from a population of interest. For each sampled unit $i=1,\ldots,N$, let $Y_{it}$ denote a period $t=1,\ldots,T$ outcome, $X_{it}$ a corresponding vector of covariates, and $A_{i}$, a latent variable representing unmeasured unit-specific attributes. Importantly, $A_i$ is constant over time and may freely covary with the regressors, $X_{i1},\ldots,X_{iT}$. An initial condition, $Y_{i0}$, is also observed. Panel data of this type feature prominently in empirical research in economics and other fields. 
That panel data offers the possibility to ``control for" the correlated heterogeneity, $A_{i}$, is a key attraction. 

While ``fixed effects” panel data methods place no restrictions (beyond mild regularity conditions) on the joint distribution of the initial condition $Y_{i0}$ and latent heterogeneity $A_i$, they generally \emph{do} place strong restrictions on how the outcomes $Y_{i1},\ldots,Y_{iT}$ and regressors $X_{i1},\ldots,X_{iT}$ relate to each other. These restrictions involve more than substantive modeling assumptions; they also constrain what we will call the \emph{feedback process}, whereby past outcomes and covariates $Y_{it-1},Y_{it-2},\ldots,Y_{i0},X_{it-1},X_{it-2},\ldots,X_{i1}$, as well as heterogeneity $A_i$, influence current covariates $X_{it}$.


Feedback arises naturally in many dynamic economic problems. For example, a firm's optimal investment rule typically varies with its current capital stock (and hence past investment decisions) as well as past productivity shocks (and hence its output history), see, e.g., \citet{Olley_Pakes_EM96} and \citet{Blundell_Bond_ER00}. A doctor may adjust a patient's treatment protocol in a way which depends on her perceptions of their health response to past treatments \citep[e.g.,][]{robins1986new}. A worker's decision to participate in job training may, as is typically the focus in evaluation studies, influence their future labor market outcomes, but participation may also depend on their past labor market experiences \citep[e.g.,][]{Ashenfelter_RESTAT1978,Ashenfelter_Card_RESTAT1985}. 

One approach to handling feedback, indeed the leading one in empirical work, rules it out \emph{a priori}. This corresponds to maintaining the \emph{strict exogeneity} assumption formulated by \cite{Chamberlain_EM1982,Chamberlain_JE82}. Strict exogeneity assumptions underpin, albeit generally implicitly, many panel data based approaches to program evaluation (see \citealp{ghanem2022selection} on difference-in-differences methods). The overwhelming majority of nonlinear panel data estimators also require strict exogeneity \citep[see][]{arellano2001panel,arellano2011nonlinear}. Strict exogeneity, while a convenient assumption for estimation, is restrictive in many economic applications. Ironically, although \cite{Chamberlain_JE82,Chamberlain_HBE84} emphasized the testable implications of strict exogeneity, today the assumption is so common as to often go unmentioned in applied work.

A different approach, pioneered by \citet{robins1986new}, assumes that the feedback process is \emph{homogeneous}. By homogeneous we mean that the mapping from past outcomes, $Y_{it-1},Y_{it-2},\ldots,Y_{i0}$ and covariates $X_{it-1},X_{it-2},\ldots,X_{i1}$ to the current covariate, $X_{it}$, \emph{does not} vary with $A_i$: it is identical across agents. This is a powerful simplification, leading to feasible nonparametric and semiparametric estimators \citep[e.g.,][]{robins2000marginal}. However, the restriction to homogeneous feedback, like strict exogeneity, is a strong assumption. It rules out, for example, a firm's investment rule varying with its persistent productivity level. Robin's \citeyearpar{robins1986new} setup is often plausible in environments where the researcher controls $X_{it}$, such as in a dynamic experiment. Strict exogeneity and homogeneous feedback are non-nested assumptions; but both restrictions correspond to a subset of the data generating processes covered by our results.

In this paper we study nonlinear panel data models with \emph{unrestricted} heterogeneous feedback and correlated heterogeneity. Almost 25 years ago, surveying the then extant work on nonlinear panel data analysis, \citet[][p. 3265]{arellano2001panel} observed:
\begin{quote}
    The main limitation of much of the literature on nonlinear panel data methods is that it is assumed that the explanatory variables are strictly exogenous in the sense that some assumptions will be made on the errors conditional on all (including future) values of the explanatory variables. 
\end{quote}
Arellano and Honor\'e's \citeyearpar{arellano2001panel} observation remains largely true today. \cite{Chamberlain_JOE2022}, in a paper first circulated in the early 1990s, studied a class of panel data models with multiplicative heterogeneity defined by sequential moment restrictions. Certain panel data count models are covered by his results \citep[see][]{Chamberlain_JBES92,wooldridge1997multiplicative, Blundell_Griffith_Windmeijer_JOE2002,Windmeijer_EPD2008}. Feedback in dynamic \emph{linear} panel data models with sequential moment restrictions is also well understood \citep[e.g.,][]{arellano1991some,arellano1995another,Chamberlain_JBES92,Hahn_ET1997,ai2012semiparametric}. However, outside the setting studied by \cite{Chamberlain_JOE2022} that includes linear models as a special case, very little is known about panel data models with unrestricted feedback.\footnote{\citet{Buchinsky_et_al_EL2010} show how to compute semiparametric efficiency bounds in dynamic discrete choice models that feature feedback. Recently, in independent work, \citet{botosaru2024adversarialapproachidentification} and \citet{chesher2024robust} propose innovative approaches to static and dynamic nonlinear panel data models.}


We characterize the set of feedback and heterogeneity robust (FHR) moment conditions in nonlinear panel data models with feedback. We work in a likelihood setting where the outcome density depends on a finite-dimensional parameter, and both the feedback process and the unobserved heterogeneity distribution are unrestricted. Our results cover both the finite-dimensional parameter indexing the parametric part of the model, as well as estimands which involve averages over the distribution of unobserved heterogeneity and feedback (e.g., average partial effects, average treatment effects and other average effects). We also characterize semiparametric efficiency bounds for the common parameter and average effects. We demonstrate how these results may be used to find feasible  estimating equations with good efficiency properties in practice.

To illustrate the power of our approach, we include a complete characterization of all FHR moments in the multi-spell mixed proportional hazards (MPH) model with both feedback and lagged duration dependence. \cite{heckman1980does} emphasized the importance of incorporating these phenomena into duration analysis, although we are not aware of methods for doing so beyond those appearing in the unpublished dissertation of \citet{woutersen2000essays}. \cite{hahn1994efficiency} studied efficiency bounds in the multi-spell MPH model under strict exogeneity and no lagged duration dependence (see also \citealp{Ridder_Woutersen_EM2003} for related work on estimation of MPH models).

In the next section we formally define the class of semiparametric nonlinear panel data models with feedback. Section \ref{sec: moments} presents our first main result: a complete characterization of the set of all feedback and heterogeneity robust (FHR) moment conditions. This extends the characterization obtained by functional differencing (\citealp{Bonhomme_EM12}), which requires strict exogeneity, to models with unrestricted feedback. Section \ref{sec: moments_average} provides a similar characterization for average effects. Section \ref{sec: SEBs} presents the semiparametric efficiency bound analysis. There we demonstrate that the orthogonal complement of the nuisance tangent set coincides with the set of all FHR estimating equations. This result has important implications for efficient estimation. Specifically, we show how to construct FHR moment functions that have good efficiency properties, leading to locally efficient estimators in the sense of \cite{newey1990semiparametric}. Throughout we use the MPH model to illustrate key results in a concrete setting. Our MPH results are novel and of independent interest. Finally, in Section \ref{sec: twilight_zone_material}, we touch on a number of important additional issues, including existence of FHR moments, regularization, and additional examples (several of which are novel). A version of the paper that combines the main text, appendices, and all supplementary materials in a single file is available [\href{https://kevindano.github.io/assets/files/BDG_feedback.pdf}{here}].

\paragraph{Notation.}


In what follows we generally suppress the $i$ subscript when referring to a single random draw from the cross-sectional population. Hence, for example, $Y_t$ denotes the period $t$ outcome of a randomly sampled unit and $A$ its unobserved, time-invariant, attribute. We let $Z^{t}=\left(Z_{t},Z_{t-1},\ldots\right)$ denote the entire observed history of $Z_{t}$ and $Z^{s:t}=\left(Z_{s},\ldots,Z_{t}\right)$ its history from periods $s\leq t$ to $t$.



\section{Panel data models with feedback} \label{sec: model}

In this section we introduce the semiparametric panel data model with feedback, connect this model to the more restrictive one which maintains strict exogeneity, and formally state our main research questions. Throughout this section, and those that follow, we illustrate key results in the context of a multi-spell mixed proportional hazards (MPH) model with lagged duration dependence and feedback.

\subsection{Setup}
Let $\left\{ \left(X_{i1},\ldots,X_{iT},Y_{i0},Y_{i1},\ldots,Y_{iT},A_{i}\right)\right\} _{i=1}^{\infty}$ be an independently and identically distributed random sequence drawn from some distribution function $F$. The sole prior restriction on $F$ is that the conditional density
of $Y_{t}$ at $y_{t}$ given the past $y^{t-1}$, regressor history
$x^{t}$, and latent unit-specific heterogeneity $a$, belongs to
a known parametric family indexed by the unknown parameter $\theta\in\Theta\subset\mathbb{R}^{K}:$
\begin{equation}
   f\left(\left.y_{t}\right|y^{t-1},x^{t},a\right)=f_{\theta}\left(\left.y_{t}\right|y^{t-1},x^{t},a\right)=f_{\theta}\left(\left.y_{t}\right|y_{t-1},x_{t},a\right),\ t=1,\ldots,T,\label{eq: parametric_part_of_likelihood_feedback}
\end{equation}
for some $\theta\in\Theta$. The density $f_{\theta}\left(\left.y_{t}\right|y_{t-1},x_{t},a\right)$ is the parametric component of our setup.\footnote{While (\ref{eq: parametric_part_of_likelihood_feedback}) imposes that only the contemporaneous regressor and the first lag of the outcome matter, additional lags could be easily accommodated in what follows.}

Familiar nonlinear examples of \eqref{eq: parametric_part_of_likelihood_feedback} include binary choice logit models 
\citep{Chamberlain_ReStud80,chamberlain2010binary,bonhomme2023identification, honore2024moment} and count models \citep{Chamberlain_JBES92,Chamberlain_JOE2022, wooldridge1997multiplicative}. Specific forms for $f_{\theta}\left(\left.y_{t}\right|y_{t-1},x_{t},a\right)$ also arise in the context of dynamic structural models \citep[e.g.,][]{Aguirregabiria_Mira_JOE2010}. Observe that the parametric families, $f_{\theta}\left(\left.y_{t}\right|y_{t-1},x_{t},a\right)$ and $f_{\theta}\left(\left.y_{s}\right|y_{s-1},x_{s},a\right)$ need not coincide for $s \neq t$; this allows for time effects and other forms of nonstationarity.



\begin{example} {(\textsc{Mixed Proportional Hazards (MPH) Model})}\label{ex: mph_intro} Let $\left\{Y_{t}\right\}_{t=1}^{T}$ denote a sequence of  durations, or spell lengths, with $X_{t}$ a corresponding vector of beginning-of-spell covariates; $Y_{0}$ is an ``initial duration''. For example, $Y_{t}$ might equal time to re-arrest following an agent's $t^{th}$ release from prison while $X_{t}$ might include measures of their post-release support and supervision. Since support and supervision, $X_{t}$, might depend on the previous time to re-arrest, $Y_{t-1}$, as well as unmeasured agent attributes, $A$, heterogeneous feedback is plausible.

The MPH model was introduced by \cite{Lancaster_EM1979} and \cite{Nickell_EM1979} for single-spell data. \cite{chamberlain1985heterogeneity} studied identification and estimation with multi-spell data under strict exogeneity. \cite{hahn1994efficiency} derived the semiparametric efficiency bound for both the single- and multi-spell case (the latter under strict exogeneity). \cite{heckman1980does} emphasize the relevance of lagged duration dependence and feedback in labor market applications.

The instantaneous conditional hazard rate is given by
\begin{equation}\label{eq: MPH_instantaneous_hazard}
    \lambda\left(\left.y_{t}\right|Y^{t-1}=y^{t-1},X^{t}=x^{t},A=a\right)=\lambda_{\alpha}\left(y_{t}\right)\exp\left(\gamma y_{t-1}+x_{t}'\beta+a\right),
\end{equation}
with $\lambda_{\alpha}\left(y_{t}\right)$ a known parametric family of baseline hazard functions indexed by $\alpha$ (we emphasize the Weibull case with $\lambda_{\alpha}\left(y_{t}\right)=\alpha y_{t}^{\alpha-1}$ below). Under \eqref{eq: MPH_instantaneous_hazard} the conditional density at $Y_{t}=y_{t}$ is
\begin{equation}
    f_{\theta}\left(\left.y_{t}\right|y^{t-1},x^{t},a\right)=\lambda_{\alpha}\left(y_{t}\right)\exp\left(\gamma y_{t-1}+x_{t}'\beta+a\right)\exp\left(-\rho_{\theta}\left(z_{t}\right)e^a\right),\label{eq_dens_MPH}    
\end{equation}
for $z_t=(y_t,y_{t-1},x_t')'$, $\rho_{\theta}\left(z_{t}\right)=\Lambda_{\alpha}\left(y_{t}\right)\exp\left(\gamma y_{t-1}+x_{t}'\beta\right)$, $\theta=\left(\alpha',\beta',\gamma\right)'$, and $\Lambda_{\alpha}\left(y_{t}\right)=\int_{0}^{y_{t}}\lambda_{\alpha}\left(u\right)\mathrm{d}u$ the integrated baseline hazard.
\end{example}

\begin{example} {(\textsc{Poisson Model})}\label{ex: count_intro}
    Let $\left\{ Y_{t}\right\} _{t=1}^{T}$ denote a sequence of counts, for example the number of patents awarded to a firm in a year, as in \cite{Blundell_Griffith_Windmeijer_JOE2002}, and $X_{t}$ a vector of time-varying regressors. For $t=1,\ldots,T$ we have
   \begin{equation} \label{eq: poisson_panel_data_model}
         \left.Y_{t}\right|Y^{t-1},X^{t},A\sim\mathrm{Poisson}\left(\exp\left(\gamma Y_{t-1}+X_{t}'\beta+A\right)\right), 
   \end{equation} 
   with $Y_{0}$ an initial count. \cite{Chamberlain_JBES92} and \cite{wooldridge1997multiplicative} proposed GMM estimators for $\theta=\left(\beta',\gamma \right)'$ in this model. \cite{Windmeijer_EPD2008} reviews extant results.
\end{example}

Returning to our general setup, sequentially factorizing the joint density of $Y_{0},Y_{1},\ldots,Y_{T},X_{1},\ldots,X_{T},A$
at $y_{0},y_{1},\ldots,y_{T},x_{1},\ldots,x_{T},a$ yields the following expression for the likelihood contribution of a single unit:
\begin{align}
    \ell\left(\left.\theta,g,\pi,\nu\right|y^{T},x^{T}\right)= & \left[\prod_{t=2}^{T}f\left(\left.x_{t},y_{t}\right|x^{t-1},y^{t-1},a\right)\right]f\left(\left.y_1\right|y_{0},x_{1},a\right)f\left(\left.a\right|y_{0},x_{1}\right)f\left(y_{0},x_{1}\right)\nonumber \\
    = & \left[\prod_{t=1}^{T}\underset{\text{Parametric component}}{\underbrace{f_{\theta}\left(\left.y_{t}\right|y_{t-1},x_{t},a\right)}}\right]\times\left[\prod_{t=2}^{T}\underset{\text{Feedback process}}{\underbrace{g\left(\left.x_{t}\right|y^{t-1},x^{t-1},a\right)}}\right]\nonumber \\
    & \quad \times\underset{\text{Heterogeneity}}{\underbrace{\pi\left(\left.a\right|y_{0},x_{1}\right)}}\times\underset{\text{Initial condition}}{\underbrace{\nu\left(y_{0},x_{1}\right)}},\label{eq: feedback_complete_data_likelihood}
\end{align}
where the second equality follows by imposing the parametric assumption
\eqref{eq: parametric_part_of_likelihood_feedback} and establishing
the following notations: 
\begin{enumerate}
  \item[(i)] $g\left(\left.x_{t}\right|y^{t-1},x^{t-1},a\right)$, denotes the density of $X_{t}$ at $x_{t}$ given the past $\left(y^{t-1},x^{t-1}\right)$ and heterogeneity $a$;
  \item[(ii)] $\pi\left(\left.a\right|y_{0},x_{1}\right)$, the conditional
    density of $A$ at $a$ given the initial condition $\left(y_{0},x_{1}\right)$; and
  \item[(iii)] $\nu\left(y_{0},x_{1}\right)$, the initial condition density.
\end{enumerate}
While $f_{\theta}\left(\left.y_{t}\right|y^{t-1},x^{t},a\right)$ belongs to
a parametric family, the remaining components, items (i) to
(iii) above, are all unrestricted. We call
this model the semiparametric panel data model with feedback.

In what follows we call the $T-1$ densities $g\left(\left.x_{T}\right|y^{T-1},x^{T-1},a\right)$, $g\left(\left.x_{T-1}\right|y^{T-2},x^{T-2},a\right)$, ..., $g\left(\left.x_{2}\right|y^{1},x_{1},a\right)$
the \emph{feedback process.} This process describes how past values of the outcome influence current regressor values. Because it depends on $A$, the feedback process is \emph{heterogeneous }across units. We do not impose any stationarity over $t=1,\ldots,T$, so $g\left(\left.x_{t}\right|y^{t-1},x^{t-1},a\right)$ and $g\left(\left.x_{s}\right|y^{s-1},x^{s-1},a\right)$ for $s \neq t$ may differ arbitrarily. When $X_t$ is a policy variable, such flexibility accommodates un-modeled regime shifts (e.g., as when a change in tax policy alters firm investment behavior). We use the notation $g$ to denote a generic element of the set of all allowable feedback processes $\mathcal{G}$. We call $\pi\left(\left.a\right|y_{0},x_{1}\right)$
the \emph{heterogeneity distribution}; this density describes the distribution of unobserved heterogeneity, $A$, across units as well as any dependence of this heterogeneity on the {initial condition}, $\left(y_{0},x_{1}\right)$. Let $\pi$ denote a generic element of the set of all allowable heterogeneity distributions, $\Pi$. Finally we let $\nu\in\mathcal{N}$ denote an element of the set of all possible initial condition densities.

Although not emphasized in our exposition, it is straightforward to incorporate strictly exogenous regressors into the feedback model. Similarly, additional sources of heterogeneity, beyond $A$, can enter the feedback process. This generality, while important in some applications, involves no new issues and clutters notation.\footnote{To be specific: let $W^{T}$ contain all leads and lags of a vector of strictly exogeneous regressors and $B$ an additional source of heterogeneity. The results that follow are easily modified to accommodate the richer model:
\begin{align*}
    \ell\left(\left.\theta,g,\pi,\nu\right|y^{T},x^{T},w^{T}\right)=&\Biggl\{\int\int\prod_{t=1}^{T}f_{\theta}\left(\left.y_{t}\right|y_{t-1},x_{t},w_{t},a\right)\times\left[\prod_{t=2}^{T}g\left(\left.x_{t}\right|y^{t-1},x^{t-1},w^{T},a,b\right)\right]\\&\times\pi\left(\left.a,b\right|y_{0},x_{1},w^{T}\right)\mathrm{d}a\mathrm{d}b\Biggr\}\times\nu\left(y_{0},x_{1},w^{T}\right),
\end{align*}
where we assume that only the contemporaneous value of $W_{t}$ enters the parametric part of the model for simplicity.} We also note that, although not emphasized in most of our examples, $Y_t$ may be vector-valued in some settings.

Panel data models with feedback and heterogeneity arise frequently in economic applications. In structural dynamic choice models, agents' dynamic optimization typically leads to a likelihood function of the form (\ref{eq: feedback_complete_data_likelihood}), where $Y_t$ contains choice variables of interest, as well as payoff variables, and $X_t$ contains dynamic state variables and un-modeled choice variables; see \citet{Aguirregabiria_Mira_JOE2010} for an exposition in the case of models with discrete outcomes. The moment conditions we derive are robust to any possible process for the dynamic state variables and distribution of unobserved heterogeneity. 

Feedback also arises in program evaluation settings.
Let $Y_t$ denote earnings and $X_t$ recent past participation in a job-training program. \cite{Ashenfelter_RESTAT1978} observed that Manpower Development and Training Act (MDTA) trainees had unusually low earnings in the year prior to undertaking training, consistent with a behavioral model where poor labor market outcomes (low values of $Y_{t-1}$) induced agents to seek out training in the next period ($X_{t}=1$). In contrast, maintaining no feedback would require that, conditional on the latent attribute, $A$, a worker's labor market history has no bearing on the decision to undertake training; a rather strong assumption \citep{Ashenfelter_Card_RESTAT1985}.

\begin{remark}{\textsc{(Strict exogeneity)}} In applications, researchers routinely maintain strict exogeneity of the regressors, $X_t$, conditional on the latent attribute, $A$. Strict exogeneity imposes independence of $Y_t$ and $\left(X_{t+1},\ldots,X_{T}\right)$ conditional on $\left(Y_{0},X_{1},\ldots,X_{t},A\right)$. \cite{Chamberlain_EM1982} shows that strict exogeneity is equivalent to the following \emph{no feedback} condition:
\begin{equation}
    g\left(\left.x_{t}\right|y^{t-1},x^{t-1},a\right)=g\left(\left.x_{t}\right|y_{0},x^{t-1},a\right),\ t=2,\ldots,T.\label{eq: no_feedback_restriction}
\end{equation}
While (\ref{eq: feedback_complete_data_likelihood}) allows
$Y_{s}$ to influence $X_{t}$ for $s<t$, \eqref{eq: no_feedback_restriction} rules out such feedback (we allow dependence on the initial condition throughout). Restriction \eqref{eq: no_feedback_restriction} is a counterpart of Granger's \citeyearpar{Granger_EM1969} definition of ``$Y$ does not cause $X$" in the panel data setting with latent heterogeneity. If $X_{t}$ is a choice variable, and $Y^{t-1}$ is a component of the agent's begining-of-period $t$ information set, then \eqref{eq: no_feedback_restriction} typically implies strong
restrictions on economic behavior \citep[e.g.,][]{Ashenfelter_Card_RESTAT1985} and/or the structure of agent's information sets \citep[e.g.,][]{Chamberlain_HBE84, chamberlain1985heterogeneity}.\footnote{Under strict exogeneity, the likelihood function becomes \begin{equation}\ell^{\mathrm{SE}}\left(\left.\theta,\pi,\nu\right|y^{T},x^{T}\right)=\left\{ \int\left[\prod_{t=1}^{T}f_{\theta}\left(\left.y_{t}\right|y_{t-1},x_{t},a\right)\right]\pi\left(\left.a\right|x_{1}\ldots,x_{T},y_{0}\right)\mathrm{d}a\right\} \nu\left(y_{0},x^{T}\right),\label{eq: no_feedback_observed_likelihood}
\end{equation}where now the distribution of unobserved heterogeneity $\pi\left(\left.a\right|x_{1}\ldots,x_{T},y_{0}\right)$ is conditional on covariates in \emph{all} periods.} Our approach, by accommodating unrestricted feedback and heterogeneity, allows the researcher to proceed without maintaining such assumptions.
\end{remark}

\subsection{The goal: feedback and heterogeneity robust estimation}

In this paper we study estimation of the common parameter, $\theta$, in panel data models when \eqref{eq: parametric_part_of_likelihood_feedback}
is the \emph{only} prior restriction on $F$ (except for some mild regularity conditions). We wish 
 to construct estimators that are consistent irrespective of the precise instances of the feedback process, $g$, heterogeneity distribution, $\pi$, and initial condition, $\nu$, which describe the sampled data. In what follows we call such estimators
\emph{feedback and heterogeneity robust} (FHR). Because strict exogeneity obtains as a special case, any FHR estimator remains valid under strict exogeneity. FHR estimators are natural extensions of familiar ``fixed effects” approaches to estimation in panel data models without feedback. We fully characterize the set of moment-based FHR estimators. We also derive semiparametric efficiency bounds for $\theta$. In particular, our results allow for a precise quantification of the information loss associated with accommodating unrestricted feedback relative to maintaining strict exogeneity.


One alternative to FHR estimation involves assuming that both the feedback process, $g$, and heterogeneity density, $\pi$, belong to parametric families indexed by some parameter vector $\eta$. With these additional maintained assumptions, the econometrician can then maximize the resulting likelihood, conditional on $Y_{0}$ and $X_{1}$, with respect to both $\theta$ and $\eta$. This transforms the problem from a semiparametric to a parametric one. Consistency of such parametric ``random effects'' estimators typically requires that the additionally maintained parametric restrictions on $g$ and $\pi$ hold in the sampled population. Consequently, such estimators are \emph{not} generally feedback and heterogeneity robust.



We next formally define the FHR property. Let $\theta\in\Theta$, and  $\omega=\left(g,\pi,\nu\right)\in\Omega$ collect all the nuisance parameters in our model. We assume that all elements $\omega\in\Omega$ have a common support, known to the econometrician (which may be unbounded and include the full real line, for example, for the support of the heterogeneity $A$). Let $\mathbb{E}_{\theta,\omega}\left[\cdot\right]$ denote an expectation taken under the DGP at $\left(\theta,\omega\right)$. Let $\phi_{\theta}\left(y_{0},y_{1},\ldots,y_{T},x_{1},\ldots,x_{T}\right)$ be a function of the observed data indexed by $\theta$. We say that $\phi_{\theta}$ is a \emph{FHR moment function} if, \emph{for all} $\omega$ such that $\phi_{\theta}$ is absolutely integrable under DGP $\left(\theta,\omega\right)$, we have
\begin{equation}           
\mathbb{E}_{\theta,\omega}\left[\phi_{\theta}\left(Y_{0},Y_{1},\ldots,Y_{T},X_{1},\ldots,X_{T}\right)\right]=0.\label{eq: valid_FHR_moment}
\end{equation}

The main goal of the paper is to derive moment restrictions $\phi_{\theta}$ that have the FHR property. We will first provide a complete characterization of FHR moment restrictions for $\theta$ in Section \ref{sec: moments}.\footnote{In fact, our characterization of FHR moment functions remains valid if $\theta$ is infinite-dimensional (for example, if $\theta$ contains nonparametric elements such as functions). However, the finite-dimensional $\theta$ case contains many important models, and all the examples we will use as illustrations feature a finite-dimensional $\theta$ vector. We also focus on this case in our analysis of efficiency.} Then, in Section \ref{sec: moments_average} we will show how essentially the same analysis can be used to provide a characterization of FHR moment functions for average effects of the form $$\mu(\theta,\omega)=\mathbb{E}_{\theta,\omega}\left[h_{\theta}\left(Y^{T},X^{T},A\right)\right],$$ where $h_{\theta}\left(Y^{T},X^{T},A\right)$ is a known function of $(Y^{T},X^{T},A)$ indexed by $\theta$. Average effects include a variety of causal or structural parameters, such as average partial
effects, as in \citet{Chamberlain_HBE84}, and average structural
functions, as in \citet{Blundel_Powell_WC03}.

Since any FHR moment function is also valid under strict exogeneity, the set of all FHR moment conditions for a given panel data model will be a subset of the corresponding set of moment conditions derived under strict exogeneity (and characterized in \citealp{Bonhomme_EM12}). In some cases the latter set may be non-trivial and the former empty. For example, \cite{chamberlain2010binary} showed point identification of the panel logit model under strict exogeneity, while \cite{bonhomme2023identification} show a failure of identification in this model under feedback (see also Section \ref{sec: twilight_zone_material} below).

\section{FHR moment restrictions for common parameters}\label{sec: moments}

This section presents our first main result: a characterization of all FHR moment conditions for $\theta$. As an application, we also specialize our results to provide a constructive characterization of the set of all possible FHR moment conditions for the MPH model.

\subsection{Main characterization of FHR moments for $\theta$}
We begin by characterizing the set of FHR moments for $\theta$.


\begin{theorem}{\textsc{(FHR Moment Characterization)}.}\label{theo_charact} Let $\theta\in\Theta$. (A) Suppose that the following $T$ restrictions hold: (i)
		\begin{equation}\int \phi_{\theta}(y^T,x^T){{\prod_{t=1}^T f_{\theta}(y_t\,|\, y_{t-1},x_{t},a)}}\mathrm{d}y^{1:T}=0,\label{eq_phi_hetero_robust}\end{equation}and (ii), for all $s=2,...,T$,
		\begin{align}
			&\int  \phi_{\theta}(y^T,x^T)\prod_{t=s}^Tf_{\theta}(y_{t}\,|\, y_{t-1},x_{t},a)\mathrm{d}y^{s:T} \textrm{ does not depend on }x^{s:T}.\label{eq_phi_feedback_robust}
		\end{align}
Then, for all $\omega=(g,\pi,\nu)\in\Omega$ such that $\mathbb{E}_{\theta,\omega}\left[\abs{\phi_{\theta}(Y^T,X^T)}\right]<\infty$, we have $$\mathbb{E}_{\theta,\omega}[\phi_{\theta}(Y^T,X^T)]=0.$$ 
(B) Suppose that, (i) the root density $\omega\mapsto \ell^{1/2}\left(\left.\theta,\omega\right|y^{T},x^{T}\right)$ is differentiable in quadratic mean at $\omega^*$ for some $\omega^*=(g^*,\pi^*,\nu^*)\in\Omega$, and (ii) there is a neighborhood $\mathcal{N}$ of $\omega^{*}$ such that $\sup_{\omega \in \mathcal{N}}\mathbb{E}_{\theta,\omega}[\lvert{\phi_{\theta}(Y^T,X^T)}\rvert^2]<\infty$. Then, the converse is also true.
\end{theorem}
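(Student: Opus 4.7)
My plan is to expand $\mathbb{E}_{\theta,\omega}[\phi_{\theta}]$ via the factorization \eqref{eq: feedback_complete_data_likelihood} and integrate over $(y^{1:T},x^{2:T})$ in the alternating order $dy_T,dx_T,dy_{T-1},dx_{T-1},\ldots,dy_2,dx_2,dy_1$. At the first step, integrating $\phi_{\theta}$ against $f_{\theta}(y_T\mid y_{T-1},x_T,a)\,dy_T$ produces, by (ii) with $s=T$, a quantity that is $x_T$-free, so the adjacent factor $g(x_T\mid y^{T-1},x^{T-1},a)\,dx_T$ then integrates to one and disappears. Inductively, after peeling off periods $T,\ldots,s+1$, the partial integral is $\int\phi_{\theta}\prod_{t=s+1}^T f_{\theta}\,dy^{s+1:T}$, which is $x^{s+1:T}$-free by (ii), so the next $g(\cdot)\,dx_s$ again integrates to one. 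After $T-1$ such cancellations I am left with $\int\phi_{\theta}\prod_{t=1}^T f_{\theta}\,dy^{1:T}$, which vanishes by (i); integrating zero against $\pi$ and $\nu$ preserves it.

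\textbf{Part (B): scores and the nuisance tangent space.} Because $\mathbb{E}_{\theta,\omega}[\phi_{\theta}]=0$ identically in $\omega$, DQM of $\ell^{1/2}$ at $\omega^*$ together with the uniform-in-$\omega$ $L^2$ bound on $\phi_{\theta}$ justifies interchanging derivative and expectation and yields $\mathbb{E}_{\theta,\omega^*}[\phi_{\theta}\,S]=0$ for every tangent score $S$ at $\omega^*$. Since $g$, $\pi$, $\nu$ are otherwise unrestricted, the nuisance tangent space splits orthogonally into three blocks: (a) zero-mean $s_{\nu}(y_0,x_1)$; (b) $s_{\pi}(a,y_0,x_1)$ with conditional mean zero under $\pi^*(\cdot\mid y_0,x_1)$; and (c), for each $s=2,\ldots,T$, $s_{g,s}(x_s,y^{s-1},x^{s-1},a)$ with conditional mean zero under $g^*_s(\cdot\mid y^{s-1},x^{s-1},a)$. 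A Hilbert-space projection argument then converts orthogonality into $\mathbb{E}_{\theta,\omega^*}[\phi_{\theta}\mid Y_0,X_1]=0$ from (a), $\mathbb{E}_{\theta,\omega^*}[\phi_{\theta}\mid A,Y_0,X_1]=0$ from (a) and (b), and, from (c), the conclusion that for every $s\ge 2$ the integral $\int\phi_{\theta}\prod_{t=s}^T f_{\theta}\prod_{t=s+1}^T g^*_t\,dy^{s:T}dx^{s+1:T}$ does not depend on $x_s$.

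\textbf{Extracting (i) and (ii); anticipated obstacle.} The extraction runs by downward induction on $s$. At $s=T$ the (c) conclusion is already (ii). Feeding it into the $s=T-1$ (c) conclusion collapses $g^*_T(dx_T\mid\cdot)$ to one, leaves $\int\phi_{\theta}\prod_{t=T-1}^T f_{\theta}\,dy^{T-1:T}$, which is $x_T$-free by the $s=T$ step and $x_{T-1}$-free by the current step---hence (ii) at $s=T-1$. Iterating downward to $s=2$ delivers (ii) in full, and then combining $\mathbb{E}_{\theta,\omega^*}[\phi_{\theta}\mid A,Y_0,X_1]=0$ with (ii) collapses every remaining $g^*_t$ factor exactly as in the Part (A) argument, producing (i). The most delicate point will be upgrading the \emph{$\omega^*$-almost-everywhere} nature of the orthogonality conclusions to the \emph{pointwise} identities required by the FHR definition; the common-support hypothesis on $\Omega$ is the natural vehicle, since all admissible $\omega$ share that support, so the cascade of $g^*_t$-integrations that turns conditional expectations into the unweighted $\int\prod f_{\theta}\,dy$ form remains valid at every point of the domain on which (i) and (ii) are asserted.
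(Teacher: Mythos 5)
Your proposal is correct and follows essentially the same route as the paper: Part (A) is the identical telescoping argument (integrate out $y_T,x_T,\ldots$ using condition (ii) from $s=T$ down to $s=2$ so each feedback density integrates to one, then invoke (i)), and Part (B) matches the paper's use of the generalized information equality (Lemma 5.4 of Newey--McFadden) to obtain orthogonality of $\phi_{\theta}$ to all nuisance scores, followed by the same block-by-block variational argument — testing against the centered conditional expectations of $\phi_{\theta}$ in the feedback, heterogeneity, and initial-condition directions and inducting downward in $s$ — to recover (ii) and then (i).
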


\vskip .3cm

As an implication of Part (A) of Theorem \ref{theo_charact}, suppose the data is generated according to some $(\theta_0,\omega_0)$. Then any function $\phi_{\theta}$ that satisfies \eqref{eq_phi_hetero_robust} and \eqref{eq_phi_feedback_robust} and is absolutely integrable under the population DGP ($\theta_0,\omega_0$) has zero mean at the true $\theta_0$, that is,
$$\mathbb{E}_{\theta_0,\omega_0}[\phi_{\theta_0}(Y^T,X^T)]=0.$$
To see this, simply apply Part (A) of Theorem \ref{theo_charact} with $\theta=\theta_0$ and $\omega=\omega_0$.

The first condition for the FHR property, restriction \eqref{eq_phi_hetero_robust}, ensures robustness of $\phi_\theta$ to the presence of heterogeneity of an unknown form. This condition is highlighted in \cite{Bonhomme_EM12}, in a setting with strictly exogenous covariates, as the key condition ensuring valid moment functions for $\theta$. Under strict exogeneity, \eqref{eq_phi_hetero_robust} ensures that $\phi_{\theta}(Y^T,X^T)$ is conditionally mean zero given $X^T$ and $A$. By the law of iterated expectations, this suffices to ensure that it is unconditionally mean zero as well. The functional differencing approach then provides a general recipe for constructing functions $\phi_{\theta}$ satisfying \eqref{eq_phi_hetero_robust}.

To illustrate how the presence of feedback modifies the interpretation of condition \eqref{eq_phi_hetero_robust}, consider the $T=2$ setting. In this case, the condition is
$$\int\left[\int \phi_{\theta}\left(y_0,y_1,y_2,x_1,x_2\right)f_{\theta}(y_2\,|\, y_1,x_2,a)dy_2\right]f_{\theta}(y_1\,|\, y_0,x_1,a)dy_1=0,$$
which coincides with
\begin{equation}                   \mathbb{E}\left[\mathbb{E}\left[\left.\phi_{\theta}\left(Y_0,Y_1,Y_2,X_1,X_2\right)\right|Y_0,Y_1,X_1,X_2=x_2,A\right]\,|\,Y_0,X_1,A \right]=0 \quad \text{ for all }x_2.\label{eq_phi_hetero_robust_T2}
\end{equation}
Here the inner expectation corresponds to an average over $y_2$ with respect to its model density, $f_{\theta}(y_{2}\,|\, y_{1},x_{2},a)$. 
This inner expectation conditions on $X_2=x_2$, while the outer expectation, which averages over $y_1$ with respect to its model density, $f_{\theta}(y_{1}\,|\, y_{0},x_{1},a)$, does not condition on $x_2$. This is a key difference between the heterogeneous feedback case, considered here, and the setting with strict exogeneity studied by \cite{Bonhomme_EM12}. Under strict exogeneity, $Y_1$ is independent of $X_2$ conditional on $(Y_0,X_1,A )$, so (\ref{eq_phi_hetero_robust_T2}) coincides with 
$$\mathbb{E}\left[\mathbb{E}\left[\left.\phi_{\theta}\left(Y_0,Y_1,Y_2,X_1,X_2\right)\right|Y_0,Y_1,X_1,X_2,A\right]\,|\,Y_0,X_1,X_2,A \right]=0,$$
with both the inner and outer expectations conditioning on $X_2$. By iterated expectations, this conditional expectation implies a zero mean condition given $A$, initial conditions, and the \emph{entire sequence of covariates},
$$\mathbb{E}\left[\left.\phi_{\theta}\left(Y_0,Y_1,Y_2,X_1,X_2\right)\right|Y_0,X_1,X_2,A\right]=0.$$
However, under feedback it is not generally the case that (\ref{eq_phi_hetero_robust_T2}) can be written as a zero mean condition given the covariates sequence $(X_1,X_2)$.

In the presence of feedback, Theorem \ref{theo_charact} additionally requires the $T-1$ conditions \eqref{eq_phi_feedback_robust}, to ensure that $\phi_\theta$ is a valid moment function. To explain this additional requirement, consider again the $T=2$ setting, in which case \eqref{eq_phi_feedback_robust} reads
$$\int \phi_{\theta}\left(y_0,y_1,y_2,x_1,x_2\right)f_{\theta}(y_2\,|\, y_1,x_2,a)dy_2\quad \text{does not depend on }x_2,$$
which corresponds to the single mean independence restriction
\begin{equation}                   \mathbb{E}\left[\left.\phi_{\theta}\left(Y_0,Y_1,Y_2,X_1,X_2\right)\right|Y_0,Y_1,X_1,X_2,A\right]=\mathbb{E}\left[\left.\phi_{\theta}\left(Y_0,Y_1,Y_2,X_1,X_2\right)\right|Y_0,Y_1,X_1,A\right].\label{eq_phi_feedback_robust_T2}
\end{equation}
Equation \eqref{eq_phi_feedback_robust_T2} implies that, at the the population parameter, $X_2$ does not predict $\phi_{\theta}\left(Y_0,Y_1,Y_2,X_1,X_2\right)$ conditional on $Y_0,Y_1,X_1,A$. Under this condition, which we interpret as ``feedback robustness'' of $\phi_{\theta}$, the conditioning on $X_2=x_2$ disappears in (\ref{eq_phi_hetero_robust_T2}), which implies
\begin{equation*}                   \mathbb{E}\left[\mathbb{E}\left[\left.\phi_{\theta}\left(Y_0,Y_1,Y_2,X_1,X_2\right)\right|Y_0,Y_1,X_1,A\right]\,|\,Y_0,X_1,A \right]=0,
\end{equation*}
ensuring, by iterated expectations, that $\phi_{\theta}$ is a valid moment function. Importantly, $\phi_\theta$ is then a valid moment under both unrestricted heterogeneity and unrestricted feedback.

Lastly, Theorem \ref{theo_charact} also establishes that \eqref{eq_phi_hetero_robust} and \eqref{eq_phi_feedback_robust} are, under suitable conditions, \emph{necessary} for the FHR property. To show the necessity property, we assume quadratic mean differentiability at $(\theta,\omega^*)$ as stated in Part (B) Condition (i). This is a standard regularity condition in semiparametric estimation, see for example \citet[][Ch. 25]{van2000asymptotic}. Part (B) Condition (ii), which imposes square-integrability of the moment function in a neighborhood of $\omega^*$, is similarly standard; e.g., see the assumptions for the generalized information equality in Lemma 5.4 of \cite{newey1994large}.\footnote{One can show that, if $\phi_{\theta}$ is bounded, then the necessity of \eqref{eq_phi_hetero_robust} and \eqref{eq_phi_feedback_robust} obtains directly without reference to $\omega^*$ or differentiability in quadratic mean.} We will rely on differentiability in quadratic mean in our analysis of efficiency in Section \ref{sec: SEBs}.

\subsection{Alternative characterization of FHR moments for $\theta$}

The following corollary to Theorem \ref{theo_charact} facilitates the construction of FHR moment functions in practice.

\begin{corollary}{\textsc{(Alternative FHR Moment Characterization)}}\label{coro_charact} Let $\omega^*\in\Omega$. An absolutely-integrable function $\phi_{\theta}$ under $(\theta,\omega^*)$  satisfies \eqref{eq_phi_hetero_robust} and \eqref{eq_phi_feedback_robust} if and only if there exist absolutely-integrable functions $\psi_{\theta,t}$ under $(\theta,\omega^*)$, for $t=1,\ldots,T-1$, such that:
\begin{align}\mathbb{E}\left[\left.\phi_{\theta}(Y^T,X^T)\right|Y^{T-1},X^T,A\right]=\sum_{t=1}^{T-1}\psi_{\theta,t}(Y^t,X^t,A), \label{eq_phi_coro}
\end{align}
with $\mathbb{E}\left[\left.\psi_{\theta,t}\left(Y^{t},X^{t},A\right)\right|Y^{t-1},X^{t},A\right]=0$ for $t=1,\ldots,T-1$.
\end{corollary}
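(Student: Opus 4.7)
The plan is to introduce the partial integrals $m_s(y^{s-1}, x^T, a) := \int \phi_\theta(y^T, x^T) \prod_{t=s}^T f_\theta(y_t\,|\,y_{t-1}, x_t, a)\,\mathrm{d}y^{s:T}$ for $s=1,\ldots,T$, and make explicit the telescoping structure hidden in \eqref{eq_phi_coro}. These $m_s$'s satisfy the tower recursion $m_s = \int m_{s+1}\, f_\theta(y_s\,|\,y_{s-1}, x_s, a)\,\mathrm{d}y_s$. By the parametric assumption \eqref{eq: parametric_part_of_likelihood_feedback}, the conditional density of $Y_t$ given $(Y^{t-1}, X^t, A)$ under any $(\theta,\omega)$ is exactly $f_\theta(y_t\,|\,y_{t-1}, x_t, a)$, so $m_T$ coincides with $\mathbb{E}[\phi_\theta\,|\,Y^{T-1}, X^T, A]$, the left-hand side of \eqref{eq_phi_coro}. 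Condition \eqref{eq_phi_hetero_robust} reads $m_1 \equiv 0$, and condition \eqref{eq_phi_feedback_robust} says $m_s$ is free of $x^{s:T}$ for $s \geq 2$; denote the reduced function by $\tilde m_s(y^{s-1}, x^{s-1}, a)$, with the convention $\tilde m_1 := 0$.

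For the direction \eqref{eq_phi_hetero_robust}, \eqref{eq_phi_feedback_robust} $\Rightarrow$ \eqref{eq_phi_coro}, I set $\psi_{\theta,t}(y^t, x^t, a) := \tilde m_{t+1}(y^t, x^t, a) - \tilde m_t(y^{t-1}, x^{t-1}, a)$ for $t=1,\ldots,T-1$. Summing telescopes to $\sum_{t=1}^{T-1}\psi_{\theta,t} = \tilde m_T - \tilde m_1 = \tilde m_T = m_T$, which is precisely \eqref{eq_phi_coro}. For the mean-zero property, the parametric-density identity again lets me compute $\mathbb{E}[\tilde m_{t+1}(Y^t, X^t, A)\,|\,Y^{t-1}, X^t, A] = \int \tilde m_{t+1}(y^t, x^t, a) f_\theta(y_t\,|\,y_{t-1}, x_t, a)\,\mathrm{d}y_t$, and by the tower recursion combined with feedback robustness this integral equals $m_t(y^{t-1}, x^T, a) = \tilde m_t(y^{t-1}, x^{t-1}, a)$ (for $t=1$ this uses $m_1\equiv 0$). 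Subtracting $\tilde m_t$ gives zero. Absolute integrability of each $\psi_{\theta,t}$ under $(\theta,\omega^*)$ follows because, once feedback robustness is in hand, every $\tilde m_s(Y^{s-1}, X^{s-1}, A)$ can be read as a genuine conditional expectation of $\phi_\theta$ under $(\theta,\omega^*)$, so Jensen yields $\mathbb{E}_{\theta,\omega^*}[|\tilde m_s|]\le \mathbb{E}_{\theta,\omega^*}[|\phi_\theta|]<\infty$.

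For the converse, suppose \eqref{eq_phi_coro} holds with the stated mean-zero conditions. Reading the left-hand side of \eqref{eq_phi_coro} as $m_T$, we find $m_T(y^{T-1}, x^T, a) = \sum_{t=1}^{T-1}\psi_{\theta,t}(y^t, x^t, a)$, which has no dependence on $x_T$ and gives \eqref{eq_phi_feedback_robust} at $s=T$. Apply the tower recursion $m_{s-1} = \int m_s f_\theta(y_{s-1}\,|\,y_{s-2}, x_{s-1}, a)\,\mathrm{d}y_{s-1}$: summands $\psi_{\theta,t}$ with $t\le s-2$ do not involve $y_{s-1}$ and pass through unchanged, while the $t=s-1$ summand integrates to zero by the mean-zero hypothesis (re-recognizing the integral as a conditional expectation under $(\theta,\omega^*)$). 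Downward induction on $s$ therefore yields $m_s(y^{s-1}, x^T, a) = \sum_{t=1}^{s-1}\psi_{\theta,t}(y^t, x^t, a)$ for every $s$, which is free of $x^{s:T}$ — establishing \eqref{eq_phi_feedback_robust} in full — and collapses to the empty sum at $s=1$, yielding \eqref{eq_phi_hetero_robust}. The whole proof is essentially bookkeeping around the telescoping identity; the only conceptual hinge is the parametric identity $f(y_t\,|\,y^{t-1}, x^t, a) = f_\theta(y_t\,|\,y_{t-1}, x_t, a)$, which lets the $m_s$'s serve simultaneously as algebraic objects (for the hypothesis/conclusion \eqref{eq_phi_hetero_robust}–\eqref{eq_phi_feedback_robust}) and as probabilistic conditional expectations under $(\theta,\omega^*)$ (for the mean-zero condition and integrability), bridging the two sides of the equivalence.
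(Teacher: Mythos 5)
Your proof is correct and follows essentially the same route as the paper's: your partial integrals $m_s$ (and their reduced versions $\tilde m_s$) are exactly the paper's $\overline{\phi}_{\theta,s-1}$, your choice $\psi_{\theta,t}=\tilde m_{t+1}-\tilde m_t$ is the paper's decomposition, and both directions proceed by the same successive integration against $f_{\theta}(y_t\,|\,y_{t-1},x_t,a)$ combined with the mean-zero conditions. The telescoping bookkeeping is just a cleaner packaging of the paper's argument, not a different one.
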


\vskip .3cm

To understand Corollary \ref{coro_charact}, it is helpful to return to the $T=2$ setting. In this case, letting
$$\psi_{\theta,1}(Y_0,Y_1,X_1,A)=\mathbb{E}[\phi_{\theta}(Y_0,Y_1,Y_2,X_1,X_2,A)\,|\, Y_0,Y_1,X_1,A],$$
it follows from (\ref{eq_phi_hetero_robust}) that
$$\psi_{\theta,1}(Y_0,Y_1,X_1,A)=\mathbb{E}[\phi_{\theta}(Y_0,Y_1,Y_2,X_1,X_2,A)\,|\, Y_0,Y_1,X_1,X_2,A],$$
which implies (\ref{eq_phi_coro}), whereas it follows from (\ref{eq_phi_feedback_robust}) that
$$\mathbb{E}[\psi_{\theta,1}(Y_0,Y_1,X_1,A)\,|\, Y_0,X_1,A]=0$$
(a requirement for $\psi_{\theta,1}$ given in the corollary).

The representation provided by Corollary
\ref{coro_charact} suggests a systematic recipe for constructing FHR moment functions. The first step involves finding functions  $\psi_{\theta,t}(Y^{t},X^t,A)$ that are mean zero conditional on $Y^{t-1},X^t,A$ for $t=1,\ldots,T-1$. This is straightforward since any function of $(Y^{t},X^t,A)$, suitably de-meaned, automatically fulfills this requirement (see Section \ref{sec: twilight_zone_material} for several examples). Then, given a collection of $\psi_{\theta,t}(Y^{t},X^t,A)$ functions, the second step requires solving a linear integral equation to recover a valid moment function $\phi_{\theta}$. Indeed, (\ref{eq_phi_coro}) can equivalently be written as
\begin{align}
    	\int \phi_{\theta}(y^T,x^T) f_{\theta}(\left.y_T\right|y_{T-1},x_{T},a)\mathrm{d}y_T=\sum_{t=1}^{T-1}\psi_{\theta,t}(y^t,x^t,a), \label{eq_phi_coro_int}
\end{align}
which is known as an inhomogeneous Fredholm equation of the first kind. Note that the integral operator on the left-hand side of \eqref{eq_phi_coro_int} is known given the parameter $\theta$. Solution methods for this type of linear integral equation are the subject of a large literature \citep[e.g.,][]{engl1996regularization,carrasco2007linear}. In Section \ref{sec: twilight_zone_material} we will show how to construct FHR moment functions in some specific models using Corollary
\ref{coro_charact}, and discuss when such functions exist.

A special case of Corollary
\ref{coro_charact} obtains when one is able to find functions $\eta_{\theta,t}$ such that  
\begin{equation}
\mathbb{E}\left[\left.\eta_{\theta,t}\left(Y^{t},X^{t}\right)\right|Y^{t-1},X^{t},A\right]=b\left(Y_0,X_1,A\right),\label{eq: AB_condition}
\end{equation}
for some function $b$ of the heterogeneity and initial conditions. Then, the instrumented first difference 
\begin{equation*}\phi_{\theta}(y^T,x^T)=[\eta_{\theta,T}\left(y^{T},x^{T}\right)-\eta_{\theta,T-1}\left(y^{T-1},x^{T-1}\right)]\cdot m(y^{T-2},x^{T-1})\end{equation*}
satisfies (\ref{eq_phi_coro}), for $\psi_{\theta,T-1}(y^{T-1},x^{T-1},a)=[b\left(y_0,x_1,a\right)-\eta_{\theta,T-1}\left(y^{T-1},x^{T-1}\right)]\cdot m(y^{T-2},x^{T-1})$ and $\psi_{\theta,t}=0$ for all $t<T-1$ (for an arbitrary function $m$). This provides a FHR moment function on $\theta$. More generally, one can check that \begin{equation}\phi_{\theta}(y^t,x^t)=\left[\eta_{\theta,t}\left(y^{t},x^{t}\right)-\eta_{\theta,t-1}\left(y^{t-1},x^{t-1}\right)\right]\cdot m(y^{t-2},x^{t-1}),\quad t=2,...,T,\label{eq_AB}\end{equation} all satisfy (\ref{eq_phi_coro}), thus providing additional moments.

We now illustrate this particular recipe with the MPH model.

\begin{example}[continues=ex: mph_intro] (\textsc{Simple Moments for the MPH})
It is convenient to first develop some additional implications of the MPH model under feedback. Recall the notations $z_t=(y_t,y_{t-1},x_t')'$ and $\rho_{\theta}\left(z_{t}\right)=\Lambda_{\alpha}\left(y_{t}\right)\exp\left(\gamma y_{t-1}+x_{t}'\beta\right)$. Adapting arguments appearing in \cite{Lancaster_EATD1990}, \cite{hahn1994efficiency} and \cite{Ridder_Woutersen_EM2003}, it is straightforward to show that
\begin{equation} \label{eq: rho_is_exponential}
    \left.\rho_{\theta}\left(Z_{t}\right)\right|Y^{t-1},X^{t},A\sim\mathrm{Exponential}\left(e^{A}\right),\thinspace t=1,\ldots,T,
\end{equation}
(see Lemma \ref{lem: P1_PT_are_exponential} in Supplemental Appendix \ref{app: MPH_derivations}). From this observation we have
\begin{equation} \label{eq: mean_of_rho}
    \mathbb{E}\left[\left.\rho_{\theta}\left(Z_{t}\right)\right|Y^{t-1},X^{t},A\right]=e^{-A},\thinspace t=1,\ldots,T,
\end{equation}
which coincides with \eqref{eq: AB_condition} after setting $\eta_{\theta,t}(Y^t,X^t)=\rho_{\theta}(Z_t)$ and $b(Y_0,X_1,A)=e^{-A}$. This yields the FHR moment functions 
\begin{equation} \label{eq: AB_moments_MPH}
    \phi_{\theta}\left(Y^{t},X^{t}\right)=\left[\rho_{\theta}\left(Z_{t}\right)-\rho_{\theta}\left(Z_{t-1}\right)\right]\cdot m\left(Y^{t-2},X^{t-1}\right),\quad t=2,...,T,
\end{equation}
where $m$ is an arbitrary function. In unpublished dissertation research, \citet[][Ch. 1]{woutersen2000essays} presented moments similar to \eqref{eq: AB_moments_MPH}.
\end{example}

Moments of the form (\ref{eq_AB}) were considered by \citet{arellano1991some}
in the linear model context, by \citet{Chamberlain_JBES92,Chamberlain_JOE2022}
and \citet{wooldridge1997multiplicative} for Poisson models, and by \cite{Al-Sadoon_et_al_ER2017} for certain binary choice models. However, this family of estimating equations does not exhaust all available FHR moments, and may lead to estimators with low levels of asymptotic precision in practice. By comparison, Theorem \ref{theo_charact} and Corollary \ref{coro_charact} characterize \emph{all} available FHR moments, as we will now illustrate in the case of the MPH model. Moreover, as we will establish in later sections, our characterization can be used to derive efficient estimators (i.e., we extend to nonlinear models efficiency arguments which appear in the linear panel data literature such as in \citealp[][]{arellano1995another,arellano2001panel, Arellano_RIE2016}).

\subsection{FHR moments for $\theta$ in the MPH model}
For specific models it is sometimes possible to use Theorem \ref{theo_charact} to provide a direct characterization of all FHR moment functions. We show how this can be done in the MPH model with feedback in Lemma \ref{lem_moment_charac_MPH} below. Our result provides insight into the MPH model and simplifies the derivation of new FHR moments. 

Our characterization makes use of several special features of the MPH model, which we introduce first (details can be found in Supplemental Appendix \ref{app: MPH_derivations}). Let $P_{\theta,t}=\rho_{\theta}\left(Z_t\right)$. As indicated in \eqref{eq: rho_is_exponential}, conditional on $Y_{0},X_{1},A$, the $P_{\theta,t}$ for $t=1,\ldots,T$ are independent exponential random variables; each with a common rate parameter of $e^{A}$.

Next, we define a one-to-one transformation of the vector $P_{\theta}^T$ into a ``forward orthogonal deviations'' part and a ``between'' part, as follows:
\begin{equation} \label{eq: MPH_within_between}
	\begin{split}
		&\widetilde{P}_{\theta,t}=\frac{\rho_{\theta}\left(Z_{t}\right)}{\sum_{s=t}^{T}\rho_{\theta}\left(Z_{s}\right)},\ t=1,\ldots,T-1,\\ &\overline{P}_{\theta}=\sum_{t=1}^{T}\rho_{\theta}\left(Z_{t}\right).
    \end{split}
\end{equation}
Observe that $\widetilde{P}_{\theta,t}$ involves the ratio of $\rho_{\theta}\left(Z_{t}\right)$ to the sum of itself and the \emph{future} values of $\rho_{\theta}\left(Z_{s}\right)$ for $s=t+1,\ldots,T$. Lemma \ref{lem: helmert_tranformation_MPH} in Supplemental Appendix \ref{app: MPH_derivations} establishes that, conditionally on $Y_{0},X_{1},A$:
\begin{equation} \label{eq: MPH_within_between2}
	\begin{split}
			&\widetilde{P}_{\theta,t}\sim  \mathrm{Beta}\left(1,T-t\right), t=1,\ldots,T-1,\\ &\overline{P}_{\theta}\sim  \mathrm{Gamma}\left(T,e^{A}\right),
	\end{split}
\end{equation}
where, throughout, $\theta$ is the parameter indexing the DGP. Lemma \ref{lem: helmert_tranformation_MPH} additionally establishes that the elements of $\left(\widetilde{P}_{\theta,1},\ldots,\widetilde{P}_{\theta,T-1},\overline{P}_{\theta}\right)$ are mutually independent of one another.

Transformation \eqref{eq: MPH_within_between} can be thought of as a MPH-specific analog of the forward orthogonal deviations transformation used by \cite{arellano1995another} in the context of linear panel data models with predetermined regressors. It has the property that the random variables $\widetilde{P}_{\theta,t}$ are independent of both contemporaneous and lagged predetermined covariates as well as lagged values of the spell outcomes $\{(X_{is},Y_{is-1})\}_{s=1}^t$ (see part (i) of Lemma \ref{lem: P1_PT_are_exponential}). Appealing to the same analogy, we can think of $\overline{P}_{\theta}$ as containing the ``between'' variation in $P_{\theta}^T$.

With these preliminaries in place, we can state the following FHR moment characterization for the MPH model with feedback.

\begin{lemma}
\textsc{(Characterization of FHR moment restrictions for the MPH model)} \label{lem_moment_charac_MPH}Consider the MPH model with $T\geq2$. Let $\omega^*\in\Omega$. Then an absolutely-integrable $\phi_{\theta}$ under $(\theta,\omega^*)$ satisfies \eqref{eq_phi_hetero_robust} and \eqref{eq_phi_feedback_robust} if and only if there exist absolutely-integrable $\psi_{\theta,t}$ under $(\theta,\omega^*)$, for $t=1,...,T$, such that:
\begin{align*}
\phi_{\theta}\left(Y^{T},X^{T}\right)=&\sum_{t=1}^{T-1}\psi_{\theta,t}(Y_{0},\widetilde{P}_{\theta}^{t},\overline{P}_{\theta},X^{t}),
\end{align*}
where, for $t=1,\ldots,T-1$,
\[
    \mathbb{E}\left[\left.\psi_{\theta,t}(Y_{0},\widetilde{P}_{\theta}^{t},\overline{P}_{\theta},X^{t})\right|Y_{0},\widetilde{P}_{\theta}^{t-1},\overline{P}_{\theta},X^{t}\right]=0.
\]
\end{lemma}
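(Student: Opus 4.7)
My plan is to exploit the MPH-specific bijection $Y^T \leftrightarrow (\widetilde P_\theta^{T-1}, \overline P_\theta)$ given $(Y_0, X^T)$, together with the fact (from the cited appendix lemmas) that, under the parametric likelihood conditional on $(Y_0, X^T, A)$, the $\widetilde P_{\theta, t}$ are mutually independent $\mathrm{Beta}(1, T-t)$ variables whose distribution depends on neither $X^T$ nor $A$, while $\overline P_\theta \sim \mathrm{Gamma}(T, e^A)$ is independent of the $\widetilde P_{\theta,t}$. Via the bijection I would write $\phi_\theta(Y^T, X^T) = \bar\phi_\theta(Y_0, \widetilde P_\theta^{T-1}, \overline P_\theta, X^T)$, so that the lemma reduces to a statement about $\bar\phi_\theta$.

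\textbf{Sufficiency.} Given the hypothesized representation with the stated mean-zero condition, I would verify \eqref{eq_phi_hetero_robust} and \eqref{eq_phi_feedback_robust} by direct computation after the change of variables. The parametric density $\prod_t f_\theta(y_t|y_{t-1},x_t,a)\,dy^{1:T}$ becomes $\prod_t \mathrm{Beta}(1,T-t)(\widetilde p_t)\,\mathrm{Gamma}(T,e^a)(\bar p)\,d\widetilde p\,d\bar p$; substituting $\sum_t \psi_{\theta,t}$ and integrating over $\widetilde p_t$ against its Beta density vanishes by the mean-zero assumption, yielding \eqref{eq_phi_hetero_robust}. For \eqref{eq_phi_feedback_robust} at $s$, integration over $y^{s:T}$ with $P_\theta^{1:s-1}$ held fixed becomes integration over $(\widetilde P_\theta^{s:T-1}, \sum_{t=s}^T P_{\theta,t})$ with their standard Beta/Gamma densities; the $t \geq s$ terms vanish by the same mean-zero argument, while the $t < s$ terms reduce to functions of the fixed $P_\theta^{1:s-1}$ and the integrated remainder, so they depend on $X$ only through $X^{s-1}$.

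\textbf{Necessity via iterative peeling.} The harder direction proceeds by an iterative argument driven by \eqref{eq_phi_feedback_robust} and a Laplace-transform step on the rate $e^A$. Applied to \eqref{eq_phi_hetero_robust} after the change of variables, the condition reads $\int h(Y_0,\bar p,X^T)\,\mathrm{Gamma}(T,e^a)(\bar p)\,d\bar p = 0$ for all $a$, where $h := \int \bar\phi_\theta \prod \mathrm{Beta}(\widetilde p_t)\,d\widetilde p$; since $\{\mathrm{Gamma}(T,e^a)\}$ is a complete exponential family in $e^a$, uniqueness of the Laplace transform forces $h \equiv 0$ almost everywhere. Next, \eqref{eq_phi_feedback_robust} for $s=T$ says that integrating $\bar\phi_\theta$ along the 1-parameter curve in $(\widetilde P_\theta^{T-1}, \overline P_\theta)$-space (obtained by fixing $P_\theta^{1:T-1}$ and varying $P_{\theta,T} \sim \mathrm{Exp}(e^A)$) is independent of $X_T$; a second Laplace argument in $e^A$ then yields that $\bar\phi_\theta$ itself is independent of $X_T$, because as $P_\theta^{1:T-1}$ varies over $(0,\infty)^{T-1}$ the curves foliate the support of $(\widetilde P_\theta^{T-1}, \overline P_\theta)$. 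Setting $H_0 := \bar\phi_\theta$, I define $\psi_{\theta,T-1} := H_0 - \int H_0\,\mathrm{Beta}(1,1)(\widetilde p_{T-1})\,d\widetilde p_{T-1}$ (mean-zero by construction) and $H_1 := H_0 - \psi_{\theta,T-1}$; a parallel Laplace argument applied to \eqref{eq_phi_feedback_robust} for $s=T-1$ (now involving $\mathrm{Gamma}(2,e^A)$) shows $H_1$ is independent of $X_{T-1}$. Iterating extracts $\psi_{\theta,T-2},\ldots,\psi_{\theta,1}$ and leaves a terminal residual $H_{T-1}(Y_0, \overline P_\theta)$, which vanishes by the already-established $h \equiv 0$, completing the decomposition.

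\textbf{Main obstacle.} The key technical challenge is the inductive step: after extracting $\psi_{\theta,T-1},\ldots,\psi_{\theta,T-j+1}$, showing that the residual $H_j$ depends on $X$ only through $X^{T-j-1}$. This requires verifying that \eqref{eq_phi_feedback_robust} for $s = T-j+1$, combined with the mean-zero properties of the previously-extracted $\psi$'s, reduces to an integral of $H_j$ against $\mathrm{Gamma}(j+1,e^A)$ being independent of $X_{T-j}$, after which the Laplace argument removes the $X_{T-j}$ dependence. Secondary issues include verifying that the foliation by one-parameter curves sweeps out the full $(\widetilde P_\theta^{T-1}, \overline P_\theta)$-support, ensuring the support of $A$ is rich enough for Laplace uniqueness (analyticity of the Laplace transform suffices whenever the support of $e^A$ has an accumulation point in $(0,\infty)$), and handling almost-everywhere vs.~pointwise equality, where null sets can be absorbed into the definition of $\psi_{\theta,1}$.
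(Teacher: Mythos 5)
Your proposal is correct and follows essentially the same route as the paper's proof: the bijection to $(\widetilde{P}_{\theta}^{T-1},\overline{P}_{\theta})$, repeated Laplace-transform uniqueness arguments (in $p_T$, then in the partial sums $\sum_{k=s}^{T}P_{\theta,k}\sim\mathrm{Gamma}(T-s+1,e^{A})$) to strip the dependence on $x^{s:T}$, and the telescoping decomposition of the conditional expectations into martingale-difference increments $\psi_{\theta,t}$, with the terminal residual killed by \eqref{eq_phi_hetero_robust} and Gamma/Laplace completeness. Your "iterative peeling" merely reorders the paper's bookkeeping (the paper writes down the full telescoping sum first and then verifies the $X$-dependence claims for each term), and the technical caveats you flag — richness of the support of $e^{A}$ for Laplace uniqueness, the foliation/bijection argument, and a.e.\ versus pointwise identities — are exactly the points the paper relies on implicitly.
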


\vskip .3cm

The proof of Lemma \ref{lem_moment_charac_MPH} is available in Supplemental Appendix \ref{app: proof_mainlemma_MPH}. It represents a useful simplification, induced by the special structure of the MPH model, relative to the general result of Theorem \ref{theo_charact}. Note in particular that the latent heterogeneity $A$ does not appear in the characterization of Lemma \ref{lem_moment_charac_MPH}. To illustrate, consider the $T=2$ setting. In this case the lemma implies that all FHR moment functions are of the form
$$\phi_{\theta}\left(Y_{0},Y_1,Y_2,X_1,X_2\right)=\psi_{\theta,1}(Y_{0},\widetilde{P}_{\theta,1},\overline{P}_{\theta},X_1),$$
where $\mathbb{E}\left[\left.\psi_{\theta,1}(Y_{0},\widetilde{P}_{\theta,1},\overline{P}_{\theta},X_1)\right|Y_{0},\overline{P}_{\theta},X_1\right]=0$. We seek functions $\psi_{\theta,1}$ of the forward orthogonal deviations
$\widetilde{P}_{\theta,1}$, that are mean zero conditional on the between variation
$\overline{P}_{\theta}$, as well as the initial condition $\left(Y_{0},X_{1}\right)$.
It is straightforward to construct such functions by de-meaning, and we will exploit this property in our analysis of efficiency.

For the general case of an arbitrary number $T$ of periods, $\phi_{\theta}$ is the sum of functions
$\psi_{\theta,t}\left(Y_{0}, \widetilde{P}_{\theta}^{t},\overline{P}_{\theta},X^{t}\right)$,
for $t=1,\ldots,T-1$, that are mean independent of the between variation, $\overline{P}_{\theta}$, current
and past values of the predetermined regressors, $X^{t}$, and past
values of the spell outcomes, $Y^{t-1}$. This structure
is reminiscent of how moment conditions are typically constructed
for linear panel data models with predetermined regressors (see, especially,
\citealp{arellano1995another}).

\section{FHR moment restrictions for average effects}\label{sec: moments_average}

In this section we characterize all FHR moment conditions for average effects, $$\mu(\theta,\omega)=\mathbb{E}_{\theta,\omega}\left[h_{\theta}\left(Y^{T},X^{T},A\right)\right],$$ where $h_{\theta}\left(Y^{T},X^{T},A\right)$ is a known function of $Y^{T},X^{T},A$ indexed by $\theta$.

\begin{example}[continues=ex: mph_intro] (\textsc{Average Effects in the MPH Model}) Consider the \emph{average structural hazard (ASH)} function
\begin{align} \label{WMPH_ashf}
        \overline{\lambda}(y_{t}|x_t,y_{t-1})&=\mathbb{E}_{\theta,\omega}\left[\lambda_{\alpha}(y_{t})e^{x_t'\beta+\gamma y_{t-1}+A}\right].
\end{align} 
The ASH appears to be a new estimand, albeit a natural one given concerns about spurious duration dependence \citep{Lancaster_EATD1990, Heckman_AER1991}. In the context of the MPH model, the ASH corresponds to the mean survival time for a unit exogenously assigned to policy $X_t=x_t$ and history $Y_{t-1}=y_{t-1}.$ Like other quantities relevant to causal analysis, the ASH depends on the (marginal) distribution of unobserved heterogeneity $A$.\footnote{A related average effect of interest is the \emph{average structural function (ASF)}  (\citealp{blundell2004endogeneity}):
\begin{align} \label{WMPH_asf}
        \mu(x_t,y_{t-1})&=\mathbb{E}_{\theta,\omega}\left[m(x_t,y_{t-1},A;\theta)\right], 
\end{align}
where $m(x_t,y_{t-1},a;\theta)=\mathbb{E}\left[Y_t|X_t=x_t,Y_{t-1}=y_{t-1},A=a\right]$.}
\end{example}

In nonlinear panel data models, knowledge of $\theta$ does not suffice to identify the effect of an external manipulation of a regressor's value on the probability distribution of the outcome. This follows from the nonseparable way in which the unobserved heterogeneity enters such models. In contrast, estimands which average over the marginal distribution of $A$, such as \eqref{WMPH_ashf}, do provide easy-to-interpret \emph{summaries} of such effects.

Until recently the identifiability of such averages was not well understood. Recent work by \cite{honore2006bounds}, \cite{chernozhukov2013average}, \cite{aguirregabiria2021identification}, \cite{davezies2021identification}, \cite{dobronyi2021identification}, \cite{pakel2023bounds} and others, however, has shown that average effects are (partially) identified in several specific settings of interest. The study of average effects in more general settings, such as those with feedback, as we consider here, remains underdeveloped \citep[see, for example, ][]{bonhomme2023identification}.

\subsection{Characterization of FHR moments for $\mu(\theta,\omega)$}

Our first result is an analog of Theorem \ref{theo_charact} for $\mu(\theta,\omega)$.
\begin{theorem}{\textsc{(Characterization of moment restrictions for average effects)}}\label{theo_charact_aveff} Let $\theta\in\Theta$. (A) Suppose that the following $T$ restrictions hold: (i)
		\begin{equation}\int \left(\varphi_{\theta}(y^T,x^T)-h_{\theta}(y^T,x^T,a)\right){{\prod_{t=1}^T f_{\theta}(y_t\,|\, y_{t-1},x_{t},a)}}\mathrm{d}y^{1:T}=0,\label{eq_aveff}\end{equation}and (ii), for all $s=2,...,T$,
		\begin{align}
			&\int \left(\varphi_{\theta}(y^T,x^T)-h_{\theta}(y^T,x^T,a)\right)\prod_{t=s}^Tf_{\theta}(y_{t}\,|\, y_{t-1},x_{t},a)\mathrm{d}y^{s:T} \textrm{ does not depend on }x^{s:T}.\label{eq_aveff_feedback_robust}
		\end{align}
Then, for all $\omega\in\Omega$ such that $\mathbb{E}_{\theta,\omega}\left[\abs{\varphi_{\theta}(Y^T,X^T)}\right]<\infty$ and $\mathbb{E}_{\theta,\omega}\left[\abs{h_{\theta}(Y^T,X^T,A)}\right]<\infty$ we have $$\mathbb{E}_{\theta,\omega}[\varphi_{\theta}(Y^T,X^T)]=\mu(\theta,\omega).$$ 
(B) Suppose (i) the root density $\omega\mapsto \ell^{1/2}\left(\left.\theta,\omega\right|y^{T},x^{T}\right)$ is differentiable in quadratic mean at $\omega^*$ for some $\omega^*=(g^*,\pi^*,\nu^*)\in\Omega$, (ii) there is a neighborhood $\mathcal{N}$ of $\omega^{*}$ such that $\sup_{\omega \in \mathcal{N}}\mathbb{E}_{\theta,\omega}[\norm{\phi_{\theta}(y^T,x^T)}^2]<\infty$ and $\sup_{\omega \in \mathcal{N}}\mathbb{E}_{\theta,\omega}[\|h_{\theta}(Y^T,X^T,A)\|^2]<\infty$. Then, the converse is also true. 
\end{theorem}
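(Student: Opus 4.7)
The plan is to view Theorem \ref{theo_charact_aveff} as Theorem \ref{theo_charact} applied to the augmented moment function
\[
\tilde{\varphi}_{\theta}(y^T,x^T,a) := \varphi_{\theta}(y^T,x^T) - h_{\theta}(y^T,x^T,a),
\]
so that $\mathbb{E}_{\theta,\omega}[\varphi_{\theta}(Y^T,X^T)] = \mu(\theta,\omega)$ is equivalent to $\mathbb{E}_{\theta,\omega}[\tilde{\varphi}_{\theta}(Y^T,X^T,A)] = 0$, and \eqref{eq_aveff}--\eqref{eq_aveff_feedback_robust} are the exact analogs of \eqref{eq_phi_hetero_robust}--\eqref{eq_phi_feedback_robust} for $\tilde{\varphi}_{\theta}$. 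The only new feature is the dependence of $\tilde{\varphi}_{\theta}$ on the latent $a$; the proof of Theorem \ref{theo_charact} nevertheless carries over because $a$ is treated as a conditioning variable throughout.

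For Part (A), I would substitute the factorization \eqref{eq: feedback_complete_data_likelihood} into $\mathbb{E}_{\theta,\omega}[\tilde{\varphi}_{\theta}] = \int \tilde{\varphi}_{\theta}\,\ell\, \mathrm{d}y^{0:T}\,\mathrm{d}x^{1:T}\,\mathrm{d}a$ and integrate out the time-indexed variables in the reverse order $y_T, x_T, y_{T-1}, x_{T-1}, \ldots, y_2, x_2, y_1$. Defining the partial integrals
\[
B_s(y^{0:s-1},x^{1:s-1},a) := \int \tilde{\varphi}_{\theta}(y^T,x^T,a) \prod_{t=s}^{T} f_{\theta}(y_t \mid y_{t-1}, x_t, a)\, \mathrm{d}y^{s:T},
\]
condition \eqref{eq_aveff_feedback_robust} ensures $B_s$ is indeed independent of $x^{s:T}$ for $s\geq 2$. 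At each step, integration of $x_s$ against $g(x_s\mid\cdot)$ is trivial since the integrand is $x_s$-independent and $g$ integrates to one, while integration of $y_{s-1}$ against $f_{\theta}(y_{s-1}\mid\cdot)$ reduces $B_s$ to $B_{s-1}$. Iterating down to $s=1$, condition \eqref{eq_aveff} gives $B_1 \equiv 0$, and a final integration against $\pi(a\mid y_0,x_1)\,\nu(y_0,x_1)$ yields $\mathbb{E}_{\theta,\omega}[\tilde{\varphi}_{\theta}] = 0$.

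For Part (B), the strategy mirrors the necessity step of Theorem \ref{theo_charact}(B). Under DQM at $\omega^{*}$ together with condition (ii), the map $\omega \mapsto \mathbb{E}_{\theta,\omega}[\tilde{\varphi}_{\theta}]$ is differentiable at $\omega^{*}$ with directional derivative $\mathbb{E}_{\theta,\omega^{*}}[\tilde{\varphi}_{\theta}\,s_{\omega}]$ along any score $s_{\omega}$ in the nuisance tangent set. Because $\mathbb{E}_{\theta,\omega}[\tilde{\varphi}_{\theta}]$ vanishes identically in $\omega$ by hypothesis, this derivative is zero in every direction, so $\tilde{\varphi}_{\theta}$ is orthogonal to the entire nuisance tangent set at $\omega^{*}$. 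Decomposing this set into the score subspaces corresponding to perturbations of $g_t$ ($t = 2,\ldots,T$), $\pi$, and $\nu$, orthogonality to the $g_t$-scores delivers mean-independence statements that match \eqref{eq_aveff_feedback_robust}, while orthogonality to the $\pi$- and $\nu$-scores, combined with the identity $\mathbb{E}_{\theta,\omega^{*}}[\tilde{\varphi}_{\theta}] = 0$, yields \eqref{eq_aveff}.

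The main obstacle lies in this last step: the orthogonality relations produced by perturbing $\omega^{*}$ are naturally expressed as integrals against the joint density determined by $(g^{*},\pi^{*},\nu^{*})$, whereas \eqref{eq_aveff} and \eqref{eq_aveff_feedback_robust} are \emph{pointwise} statements involving only the parametric kernel $\prod_{t=s}^{T} f_{\theta}$. Bridging this gap requires exploiting the richness of the nuisance set $\Omega$ (e.g., by localizing $g^{*}$, $\pi^{*}$, and $\nu^{*}$ along sequences isolating specific configurations $(y^{0:s-1},x^{1:T},a)$), exactly as in the analogous step of the proof of Theorem \ref{theo_charact}(B).
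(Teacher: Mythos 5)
Your proposal is correct and follows essentially the same route as the paper, which proves this result by substituting $\varphi_{\theta}(y^T,x^T)-h_{\theta}(y^T,x^T,a)$ for $\phi_{\theta}(y^T,x^T)$ throughout the proof of Theorem \ref{theo_charact} (using the triangle inequality to transfer the two integrability hypotheses to the difference). The iterated-integration argument for Part (A) and the score-orthogonality argument for Part (B) that you sketch are exactly the steps of that proof, and your observation that the dependence of the augmented function on $a$ is harmless because $a$ is conditioned on everywhere is the only point that needs checking.
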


\vskip .3cm

Note the strong parallel between this theorem and Theorem \ref{theo_charact}. As in the case of $\theta$, \eqref{eq_aveff} and \eqref{eq_aveff_feedback_robust} imply that, under absolute integrability, the true value $\mu_0=\mu(\theta_0,\omega_0)$ satisfies a moment condition. Indeed, if 
$\mathbb{E}_{\theta_0,\omega_0}\left[\abs{\varphi_{\theta_0}(Y^T,X^T)}\right]<\infty$ and $\mathbb{E}_{\theta_0,\omega_0}\left[\abs{h_{\theta_0}(Y^T,X^T,A)}\right]<\infty$, then Part (A) implies
$$\mathbb{E}_{\theta_0,\omega_0}[\varphi_{\theta_0}(Y^T,X^T)]=\mu_0.$$
This moment condition has the FHR property: \eqref{eq_aveff} ensures that $\varphi_{\theta}$ is robust to an unknown distribution of unobserved heterogeneity, whereas the $T-1$ conditions \eqref{eq_aveff_feedback_robust} endow $\varphi_{\theta}$ with robustness to heterogeneous feedback of an unknown form.

We additionally state the following corollary, which mimics Corollary \ref{coro_charact} and suggests a systematic recipe for constructing functions $\varphi_{\theta}$.
\begin{corollary}{\textsc{(Alternative  characterization of moment restrictions for average effects)}}\label{coro_charact_aveff} Let $\omega^*\in\Omega$. Suppose that $h_{\theta}$ is absolutely integrable under $(\theta,\omega^*)$. An absolutely-integrable $\varphi_{\theta}$ satisfies (\ref{eq_aveff})-(\ref{eq_aveff_feedback_robust}) if and only if there exist absolutely-integrable $\zeta_{\theta,t}$, for $t=1,\ldots,T-1$, such that: 
\begin{align}
    	\mathbb{E}\left[\varphi_{\theta}(Y^T,X^T)-h_{\theta}(Y^T,X^T,A)\,|\,  Y^{T-1},X^T,A\right]=\sum_{t=1}^{T-1}\zeta_{\theta,t}(Y^{t},X^{t},A), \label{eq_phi_alt}
\end{align}
with $\mathbb{E}\left[\zeta_{\theta,t}(Y^{t},X^{t},A)\,|\,Y^{t-1},X^{t},A\right]=0$ for $t=1,\ldots,T-1$.
\end{corollary}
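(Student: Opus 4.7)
My plan is to mimic the telescoping argument underlying Corollary \ref{coro_charact}, simply carrying the latent variable $a$ through every integration as a fixed parameter. Write $\widetilde{\phi}_\theta(y^T,x^T,a) := \varphi_\theta(y^T,x^T) - h_\theta(y^T,x^T,a)$; with this shorthand the hypotheses (\ref{eq_aveff})--(\ref{eq_aveff_feedback_robust}) read exactly like (\ref{eq_phi_hetero_robust})--(\ref{eq_phi_feedback_robust}), except that the integrand now carries an explicit dependence on $a$. Since the integrations touch only the $y$-variables, this extra dependence is inert and the construction goes through verbatim. A direct reduction to Corollary \ref{coro_charact} is not quite available because the density and the integrand share the same $a$ argument, but the proof structure is identical.

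For the ``only if'' direction, define the backward-filtered integrals
\[
G_t(y^t,x^t,a) := \int \widetilde{\phi}_\theta(y^T,x^T,a) \prod_{s=t+1}^T f_\theta(y_s \mid y_{s-1}, x_s, a)\, dy^{t+1:T}, \quad t = 0, 1, \ldots, T.
\]
Condition (\ref{eq_aveff_feedback_robust}) guarantees that $G_t$ is independent of $x^{t+1:T}$ (so the notation is legitimate), condition (\ref{eq_aveff}) gives $G_0 \equiv 0$, and $G_T = \widetilde{\phi}_\theta$. Nesting the integrals yields the telescoping identity $\mathbb{E}[G_t(Y^t,X^t,A) \mid Y^{t-1},X^t,A] = G_{t-1}(Y^{t-1},X^{t-1},A)$, so setting $\zeta_{\theta,t} := G_t - G_{t-1}$ for $t = 1, \ldots, T-1$ produces $T-1$ absolutely integrable functions with $\mathbb{E}[\zeta_{\theta,t} \mid Y^{t-1},X^t,A] = 0$ and $\sum_{t=1}^{T-1}\zeta_{\theta,t} = G_{T-1} - G_0 = \mathbb{E}[\widetilde{\phi}_\theta \mid Y^{T-1},X^T,A]$, which is precisely (\ref{eq_phi_alt}). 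Integrability of the $\zeta_{\theta,t}$ is inherited from that of $\varphi_\theta$ and $h_\theta$ by Fubini.

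For the ``if'' direction, I suppose (\ref{eq_phi_alt}) holds and iteratively take conditional expectations given $(Y^{s-1}, X^T, A)$ for $s = T, T-1, \ldots, 1$. The one point requiring a moment's thought---and the main obstacle, such as it is---is that the hypothesized mean-zero property of $\zeta_{\theta,t}$ conditions on $X^t$, not $X^T$. It upgrades to $\mathbb{E}[\zeta_{\theta,t} \mid Y^{t-1}, X^T, A] = 0$ because, by (\ref{eq: parametric_part_of_likelihood_feedback}), the conditional density of $Y_t$ given $(Y^{t-1}, X^T, A)$ equals $f_\theta(y_t \mid y_{t-1}, x_t, a)$ and hence does not involve $x^{t+1:T}$---this is really just the Markov structure baked into the model. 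With this upgrade, each iteration strips off one $\zeta_{\theta,t}$, leaving
\[
\mathbb{E}[\widetilde{\phi}_\theta \mid Y^{s-1}, X^T, A] = \sum_{t=1}^{s-1} \zeta_{\theta,t}(Y^t, X^t, A),
\]
whose right-hand side is a function of $(Y^{s-1}, X^{s-1}, A)$ alone. This delivers (\ref{eq_aveff_feedback_robust}) for $s = 2, \ldots, T$, and at $s = 1$ it reduces to (\ref{eq_aveff}).
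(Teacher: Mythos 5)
Your ``only if'' direction is correct and is essentially the paper's own argument: the paper proves this corollary by substituting $\varphi_{\theta}-h_{\theta}$ for $\phi_{\theta}$ in the proof of Corollary \ref{coro_charact}, whose converse part builds the decomposition exactly as your telescoping differences $G_t-G_{t-1}$ of the backward-filtered integrals.

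The ``if'' direction, however, contains a genuine gap. You justify the upgrade $\mathbb{E}\left[\zeta_{\theta,t}\mid Y^{t-1},X^T,A\right]=0$ by asserting that the conditional density of $Y_t$ given $(Y^{t-1},X^T,A)$ equals $f_{\theta}(y_t\mid y_{t-1},x_t,a)$. Under feedback this is false for $t<T$: by Bayes' rule that density is proportional to $f_{\theta}(y_t\mid y_{t-1},x_t,a)$ times a factor built from the feedback densities $g(x_s\mid y^{s-1},x^{s-1},a)$, $s>t$, which depend on $y_t$. Restriction (\ref{eq: parametric_part_of_likelihood_feedback}) only pins down the density of $Y_t$ given $(Y^{t-1},X^{t},A)$; conditioning additionally on \emph{future} covariates is precisely what is not innocuous in this paper (see the discussion around (\ref{eq_phi_hetero_robust_T2})). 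Relatedly, the object you iterate toward, the DGP conditional expectation $\mathbb{E}\left[\widetilde{\phi}_{\theta}\mid Y^{s-1},X^T,A\right]$, is not the nested integral $\int\widetilde{\phi}_{\theta}\prod_{t=s}^{T}f_{\theta}(y_t\mid y_{t-1},x_t,a)\,\mathrm{d}y^{s:T}$ appearing in (\ref{eq_aveff_feedback_robust}), except when $s=T$. The repair is to treat (\ref{eq_phi_alt}) and the mean-zero hypotheses as integral identities against the model densities and integrate (\ref{eq_phi_alt}) successively against $f_{\theta}(y_{T-1}\mid y_{T-2},x_{T-1},a)$, then $f_{\theta}(y_{T-2}\mid y_{T-3},x_{T-2},a)$, and so on, killing one $\zeta_{\theta,t}$ at each step; this yields exactly the quantities in (\ref{eq_aveff_feedback_robust}) and, at the final step, (\ref{eq_aveff}). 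That is how the paper argues; your conclusion is right, but the step as written would fail.
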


\vskip .3cm

\subsection{FHR moments for the average structural hazard} \label{subsec: ASH}

The conditional hazard function
\[
\lambda\left(\left.y_{t}\right|y_{t-1},x_{t},a\right)=\lambda_{\alpha}(y_{t})e^{x_{t}'\beta+\gamma y_{t-1}+a}
\]
gives the instantaneous exit rate of a unit, at duration $y_{t}$,
with lagged duration $y_{t-1}$, beginning-of-spell covariate $x_{t}$,
\emph{and} latent attribute $a$. Unfortunately, although easily identified,
the observed hazard function
\[
\lambda\left(\left.y_{t}\right|y_{t-1},x_{t}\right)=\lambda_{\alpha}(y_{t})e^{x_{t}'\beta+\gamma y_{t-1}}\mathbb{E}\left[\left.e^{A}\right|Y_{t}>y_{t},Y_{t-1}=y_{t-1},X_{t}=x_{t}\right]
\]
suffers from spurious duration dependence (see \citealp{Lancaster_EATD1990,Heckman_AER1991}): units with higher values of $A$ will exit
earlier, implying that the mean $\mathbb{E}\left[\left.e^{A}\right|Y_{t}>y_{t},Y_{t-1}=y_{t-1},X_{t}=x_{t}\right]$
declines with $y_{t}$. In contrast, the average structural hazard (ASH),
\begin{equation}
\overline{\lambda}\left(\left.y_{t}\right|y_{t-1},x_{t}\right)=\lambda_{\alpha}(y_{t})e^{x_{t}'\beta+\gamma y_{t-1}}\mathbb{E}\left[e^{A}\right],\label{eq: ashf_take2}
\end{equation}
which equals the (expected) hazard function for a randomly sampled
unit when externally assigned lagged duration $y_{t-1}$ and covariate
$x_{t}$, does not suffer from heterogeneity bias.

Since $\left.\overline{P}_{\theta}\right|Y_{0},X_{1},A\sim\mathrm{Gamma}\left(T,e^{A}\right)$
we have\begin{align}
\mathbb{E}\left[\left.\overline{P}_{\theta}^{\delta}\right|Y_{0},X_{1},A\right] & =e^{-\delta A}\frac{\Gamma\left(T+\delta\right)}{\Gamma\left(T\right)}\label{eq: gamma_moments}
\end{align}
for $\delta>-T$. Using \eqref{eq: gamma_moments} with $\delta=-1$
shows that $\mathbb{E}\left[e^{A}\right]=\mathbb{E}\left[\frac{T-1}{\overline{P}_{\theta}}\right]$, so
\begin{equation}
\overline{\lambda}\left(\left.y_{t}\right|y_{t-1},x_{t}\right)=\lambda_{\alpha}(y_{t})e^{x_{t}'\beta+\gamma y_{t-1}}\mathbb{E}\left[\frac{T-1}{\overline{P}_{\theta}}\right].\label{eq: ashf_identified}
\end{equation}

We have thus found one FHR moment function for the ASH. Now, for any other FHR moment function  for the ASH, $\varphi_{\theta}$, we have that
$$\phi_{\theta}(Y^T,X^T)=\varphi_{\theta}(Y^T,X^T)-\lambda_{\alpha}(y_{t})e^{x_{t}'\beta+\gamma y_{t-1}}\frac{T-1}{\overline{P}_{\theta}}$$
is a FHR moment function for $\theta$ (where here $y_t,y_{t-1},x_t$ are fixed values at which the ASH is evaluated). Since all such moment functions $\phi_{\theta}$ are fully characterized in the MPH case by Lemma \ref{lem_moment_charac_MPH}, this characterizes all FHR moment functions for the ASH. It turns out, as we will shortly see, that estimation based upon the
sample analog of \eqref{eq: ashf_identified} is efficient when $\theta$
is replaced by an efficient estimate. 




\section{Efficient moment restrictions\label{sec: SEBs}}

Theorems \ref{theo_charact} and \ref{theo_charact_aveff} characterize the complete set of moment conditions available for estimating $\theta$ and $\mu(\theta,\omega)$. When this set is nonempty, estimation at parametric rates may be (and often is) feasible. Moreover, many valid moment restrictions may be available (see Lemma \ref{lem_moment_charac_MPH} for the case of the MPH model). In such settings semiparametric efficiency bound theory provides a useful tool for selecting specific moments for estimation purposes. 

In this section we derive the form of the efficient scores for $\theta$ and $\mu(\theta,\omega)$ and, consequently, their corresponding semiparametric efficiency bounds. As we shall see, the characterizations presented earlier facilitate the derivation of these bounds. We illustrate the application of our results via a detailed analysis of efficiency bounds for the MPH model.\footnote{\cite{hahn1994efficiency} derived the information bound for $\theta$ in the single-spell case studied by \cite{elbers1982true} and \cite{Heckman_Singer_ReStud1984}. He also derived the bound for the multi-spell case under strict exogeneity (see also \citealp{chamberlain1985heterogeneity}). Relative to prior work, our analysis covers the case with feedback and, additionally, considers average effects.}

\subsection{Efficiency for common parameters and average effects\label{subsec: efficiency}} 

We begin with a brief review of basic concepts in semiparametric efficiency theory; see, for example, \citet[Chap. 25]{van2000asymptotic} and \citet{newey1990semiparametric}. Let $\theta_0$, $g_0$, $\pi_0$, and $\nu_0$ denote, respectively, the common parameter, feedback process, heterogeneity distribution, and initial condition prevailing in the sampled population. Let $\omega=(g,\pi,\nu)\in\Omega$, with true value $\omega_0=(g_0,\pi_0,\nu_0)$, denote the nonparametric components of the model.\footnote{To characterize the efficiency bound for $\theta$, by ancillarity it is sufficient to consider the conditional density given $(y_{0},x_{1})$ (see, e.g., \citealp{hahn1994efficiency}).} A regular parametric submodel is defined by a likelihood function for a single random draw, $\ell(\theta,\omega_{\eta}\,|\,y^T,x^{T})$, where $\omega_{\eta_0}=\omega_0$ for some scalar $\eta_0$. The likelihood satisfies mean-square differentiability of its square root with respect to $(\theta,\eta)$, with its information matrix nonsingular. The semiparametric variance bound is the supremum of the Cramer Rao bounds for $\theta$ over all such regular parametric submodels. 
        
Let $S^{\theta}$ denote the score for $\theta$, for a submodel evaluated at $\theta=\theta_0$ and $\eta=\eta_0$:
$$S^{\theta}(Y^T,X^T)=\frac{\partial \ln \ell(\theta_0,{\omega}_0\,|\, Y^T,X^T)}{\partial \theta},$$
where we leave the dependence of $S^{\theta}$ on $(\theta_0,\omega_0)$ implicit. Likewise, let $S^{\eta}$ denote the score for $\eta$. The nonparametric tangent set $\mathcal{T}_{\theta_0,\omega_0,K}$ is the mean-square closure of the $K\times 1$ linear combinations of scores $S^{\eta}$ across all regular parametric submodels (where $K$ is the dimension of $\theta$). Let $\mathcal{T}^{\perp}_{\theta_0,\omega_0,K}$ denote the orthocomplement of $\mathcal{T}_{\theta_0,\omega_0,K}$ in the Hilbert space of square-integrable mean-zero functions with inner product $\left<s_1,s_2\right>=\mathbb{E}_{\theta_0,\omega_0}\left[s_1(Y^T,X^T)'s_2(Y^T,X^T)\right]$. By definition this set consists of all $K\times 1$ elements $\phi_{\theta}$ such that $\left<\phi,s\right>=0$ for all $s\in \mathcal{T}_{\theta_0,\omega_0,K}$.

The next theorem provides a characterization of the orthocomplement of the tangent set, $\mathcal{T}^{\perp}_{\theta_0,\omega_0,K}$, which is key to the analysis of efficiency in our context.

    \begin{theorem}{\textsc{(Orthocomplement of tangent set)}}\label{theo_eff}
 		$\mathcal{T}_{\theta_0,\omega_0,K}^{\perp}$ is the linear span of square-integrable, $K$-dimensional moment functions $\phi_{\theta_0}$ that satisfy \eqref{eq_phi_hetero_robust} and \eqref{eq_phi_feedback_robust}.
  	\end{theorem}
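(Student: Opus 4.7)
The plan is to identify the nuisance tangent set $\mathcal{T}_{\theta_0,\omega_0,K}$ explicitly in terms of complete-data scores, and then translate the resulting orthogonality conditions into \eqref{eq_phi_hetero_robust} and \eqref{eq_phi_feedback_robust} via a backward-in-time iterated integration.

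First, differentiating the likelihood \eqref{eq: feedback_complete_data_likelihood} under the integral sign along a regular parametric submodel of $\omega=(g,\pi,\nu)$ shows that every observed-data nuisance score can be written as $S^{\eta}=\mathbb{E}[\widetilde{S}^{\eta}\,|\,Y^T,X^T]$, where $\widetilde{S}^{\eta}$ is the sum of three pieces: (a) feedback perturbations $s_{g,t}(X_t,Y^{t-1},X^{t-1},A)$ with $\mathbb{E}[s_{g,t}\,|\,Y^{t-1},X^{t-1},A]=0$ for $t=2,\ldots,T$; (b) a heterogeneity perturbation $s_\pi(A,Y_0,X_1)$ with $\mathbb{E}[s_\pi\,|\,Y_0,X_1]=0$; and (c) an initial-condition perturbation $s_\nu(Y_0,X_1)$ with $\mathbb{E}[s_\nu]=0$. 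Differentiability in quadratic mean then ensures that $\mathcal{T}_{\theta_0,\omega_0,K}$ coincides with the mean-square closure of such scores. Because $\phi_{\theta_0}$ is a function of $(Y^T,X^T)$, iterated expectations give $\langle\phi_{\theta_0},S^\eta\rangle=\mathbb{E}[\phi_{\theta_0}\cdot\widetilde{S}^\eta]$, so $\phi_{\theta_0}\in\mathcal{T}^\perp$ reduces to three complete-data orthogonality conditions: (i) $\mathbb{E}[\phi_{\theta_0}\,|\,X_t,Y^{t-1},X^{t-1},A]$ does not depend on $X_t$, for each $t=2,\ldots,T$; (ii) $\mathbb{E}[\phi_{\theta_0}\,|\,Y_0,X_1,A]$ does not depend on $A$; and (iii) $\mathbb{E}[\phi_{\theta_0}\,|\,Y_0,X_1]=0$.

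The next step is to convert (i)--(iii) into \eqref{eq_phi_hetero_robust} and \eqref{eq_phi_feedback_robust} by a backward induction on $t$. Under the true DGP,
\begin{equation*}
\mathbb{E}[\phi_{\theta_0}\,|\,X_t=x_t,Y^{t-1},X^{t-1},A=a]=\int\phi_{\theta_0}\prod_{s=t}^T f_{\theta_0}\prod_{s=t+1}^T g_0\,\mathrm{d}y^{t:T}\mathrm{d}x^{t+1:T},
\end{equation*}
with the obvious arguments of the $f_{\theta_0}$ and $g_0$ factors. At $t=T$ no $g_0$ factor appears, so (i) directly yields \eqref{eq_phi_feedback_robust} at $s=T$. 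Inductively, once \eqref{eq_phi_feedback_robust} has been established at $s=t+1,\ldots,T$, integrating out $(y_s,x_s)$ in succession for $s=T,T-1,\ldots,t+1$ allows each $g_0$ factor to integrate to $1$, because the partial integral of $\phi_{\theta_0}$ against the $f_{\theta_0}$'s is independent of the relevant $x_s$ at that stage. This telescoping reduces the right-hand side to $\int\phi_{\theta_0}\prod_{s=t}^T f_{\theta_0}\,\mathrm{d}y^{t:T}$, and (i) then forces this integral to be independent of $x_t$, which is \eqref{eq_phi_feedback_robust} at $s=t$. Applying the same telescoping once more to $\mathbb{E}[\phi_{\theta_0}\,|\,Y_0,X_1,A]$ and invoking (ii)--(iii) gives $\int\phi_{\theta_0}\prod_{s=1}^T f_{\theta_0}\,\mathrm{d}y^{1:T}=0$, that is, \eqref{eq_phi_hetero_robust}. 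Sufficiency runs the same chain of integrations in reverse, so it introduces no new idea.

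The main technical obstacle is precisely this telescoping: the $g_0$ factors only integrate to $1$ cleanly because the running integrals of $\phi_{\theta_0}$ against the $f_{\theta_0}$'s are demonstrably free of the corresponding $x_s$, which is why the induction must proceed in the reverse temporal order and why the conditions at $s=t+1,\ldots,T$ must be in place before the one at $s=t$ can be extracted. A standard secondary ingredient is the appeal to differentiability in quadratic mean from Part (B) of Theorem \ref{theo_charact} to guarantee that the mean-square closure of regular parametric submodel scores realizes the full set of directions $s_{g,t},s_\pi,s_\nu$ described above, following \citet[][Ch. 25]{van2000asymptotic}.
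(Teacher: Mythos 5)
Your proposal is correct and follows essentially the same route as the paper: the paper's proof reduces, via Newey's characterization of the orthocomplement, to its Lemma A.2, which likewise writes each nuisance score as a conditional expectation of complete-data perturbations $(s_{g,t},s_\pi,s_\nu)$, exploits that these directions are unrestricted beyond conditional mean-zero and square-integrability to extract the three orthogonality conditions, and then performs the same backward-in-time telescoping so that the $g_0$ factors integrate to one and conditions \eqref{eq_phi_hetero_robust}--\eqref{eq_phi_feedback_robust} emerge in terms of the parametric densities alone.
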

\vskip .3cm


An implication of Theorem \ref{theo_eff} is that, if $\phi_{\theta_0} \in \mathcal{T}_{\theta_0,\omega_0,K}^{\perp}$, then it is also an element of the orthogonal complement of the nuisance tangent set associated with \emph{any other} data generating process $(\theta_0, \omega_*)$ with $\omega_* \neq \omega_0$ (i.e., we also have $\phi_{\theta_0} \in \mathcal{T}_{\theta_0,\omega_*,K}^{\perp}$). This follows from the fact that $\mathcal{T}_{\theta_0,\omega_0,K}^{\perp}$ consists of the set of FHR moments characterized in Theorem \ref{theo_charact} earlier; a set which does not vary with $\omega_0$. Indeed, it is precisely this feature of the model which makes feedback and heterogeneity robust estimation possible. Knowledge, whether \emph{a priori} or up to sampling error, of the form of the feedback process and/or heterogeneity distribution is not required for consistent estimation. This is a crucial feature of the class of models we study in this paper.

One subtlety is that, although the elements of $\mathcal{T}_{\theta_0,\omega_0,K}^{\perp}$ do not depend on the precise instance of $\omega_{0}$, the definition of the reference Hilbert space does depend on it. Let $\phi_{\theta_{0}}$ be a valid FHR moment that is absolutely integrable under $(\theta_0,\omega^*)$. Then, while $\phi_{\theta_{0}}(Y^{T},X^{T})$ has zero mean under both $(\theta_0,\omega_0)$ and $(\theta_0,\omega^*)$, in general its variance differs under $\omega_0$ and $\omega^*$. Consequently, the ability to precisely estimate $\theta_{0}$ using a particular $\phi_{\theta}$ generally varies with the population feedback process and heterogeneity distribution, although the validity of $\phi_{\theta}$ as a moment function does not. This connects to the discussion of locally efficient estimation below.

To understand Theorem \ref{theo_eff} it is helpful to consider the implications of restricting $\omega_0$ such that it belongs to a parametric family (indexed by, say, $\eta$). This is the approach taken in, for example, parametric random-effects analysis \citep[e.g.,][]{Chamberlain_ReStud80, chamberlain1985heterogeneity}. In that setting the residualized score, $\widetilde{S}^{\theta}=S^{\theta}-\mathbb{E}\left[S^{\theta}S^{\eta\prime}\right]\times \mathbb{E}\left[S^{\eta}S^{\eta\prime}\right]^{-1}S^{\eta}$, will  generally vary with $\eta$: consistent estimation of $\theta$ requires knowledge of $\eta$ (up to sampling error), and it requires correct specification of the parametric models of feedback and heterogeneity. This is not required in our approach; indeed (elements of) $\omega$ may even be unidentified, while $\theta$ remains $\sqrt{N}$-estimable.  

Theorem \ref{theo_eff} is reminiscent of the situation which arises in average treatment effect (ATE) estimation under unconfoundedness with a \emph{known} propensity score \citep{robins1994estimation,hahn1994efficiency}. In that setting the set of consistent estimating equations for the ATE does not depend on the form of the conditional distribution of the potential outcomes given covariates. Theorem \ref{theo_eff} extends prior work for the case of strictly exogenous regressors showing that $\mathcal{T}^{\perp}_{\theta_0,\omega_0,K}$ is characterized by moment functions that have zero means conditional on \emph{all} covariates, initial conditions, and heterogeneity (see, e.g., \citealp{hahn1994efficiency} for the MPH model, and \citealp{dano2023transition} for dynamic logit models). 

Of course, in many semiparametric estimation problems, consistent estimation of (features of) the nonparametric model component \emph{is} a requirement for consistent estimation of $\theta$. Examples include the binary choice model with random utility components drawn from an unknown distribution independent of the regressors \citep[see][]{newey1990semiparametric} and ATE estimation under unconfoundedness with an \emph{unknown} propensity score. 


\paragraph{Efficient score for $\theta$.} Under regularity conditions,\footnote{Namely that $\ell(\theta,\omega|y^T,x^{T})$ is smooth in a neighborhood of $(\theta_0,\omega_0)$, and that the information matrix is nonsingular (see for example Theorem 3.2 in \citealp{newey1990semiparametric}).} the semiparametric variance bound for $\theta_0$ is equal to the inverse of the variance of the efficient score
\begin{equation}\label{eq_effscore}
\phi_{\theta_0,\omega_0}^{\rm eff}(Y^T,X^T)=\Pi\left(S^{\theta}(Y^T,X^T)\,|\,\mathcal{T}_{\theta_0,\omega_0,K}^{\perp}\right),
\end{equation} 
where $\Pi\left(.\,|\,\mathcal{T}_{\theta_0,\omega_0,K}^{\perp}\right)$ denotes the orthogonal projection onto $\mathcal{T}_{\theta_0,\omega_0,K}^{\perp}$. Note that the projection is well defined since $\mathcal{T}_{\theta_0,\omega_0,K}^{\perp}$ is closed and linear.\footnote{An equivalent and more frequent formulation of the efficient score is $\phi_{\theta_0}^{\rm eff}=S^{\theta}-\Pi\left(S^{\theta}\,|\,\mathcal{T}_{\theta_0,\omega_0,K}\right)$ (e.g., \citealp{hahn1994efficiency}), which is interpreted as the residual from the population regression of $S^{\theta}$ on the nuisance tangent set. The equivalent formulation based on $\mathcal{T}^{\perp}_{\theta_0,\omega_0,K}$ is convenient to work with in our setting; see \citet[][pp. 57-58]{VanDerLaan_Robin_Book2003}.} As a result, the efficient moment restriction for $\theta_0$ is
\begin{equation}\mathbb{E}_{\theta_0,\omega_0}\left[\phi_{\theta_0,\omega_0}^{\rm eff}(Y^T,X^T)\right]=0.\label{eq_eff_theta}\end{equation}The efficient score $\phi_{\theta_0,\omega_0}^{\rm eff}$ in (\ref{eq_effscore}) is defined through a projection onto the orthocomplement $\mathcal{T}_{\theta_0,\omega_0,K}^{\perp}$, which we have fully characterized in Theorem \ref{theo_eff}. Below we show, via examples, how to use this observation to calculate -- whether analytically or numerically -- efficient scores in models with unknown heterogeneity and feedback.

\paragraph{Efficient score for $\mu$.} We now turn to an analysis of efficiency for average effects. Suppose there exists a moment function $\varphi_{\theta_0}$ that identifies an $L\times 1$ average effect of interest $\mu(\theta_0,\omega_0)$ given by\footnote{Standard implicit smoothness conditions are required, namely that, for a regular parametric submodel, $\sup_{(\theta,\eta) \in \mathcal{N}} \mathbb{E}_{\theta,\omega_{\eta}}\left[\norm{\varphi_{\theta}(Y^T,X^T)}^2\right]<\infty$ in a neighborhood $\mathcal{N}$ of $(\theta_0,\eta_0)$, such that a generalized information equality holds (see \citealp{brown1998efficient}).}
\begin{equation}\mu(\theta_0,\omega_0)=\mathbb{E}_{\theta_0,\omega_0}\left[h_{\theta_0}(Y^T,X^T,A)\right]=\mathbb{E}_{\theta_0,\omega_0}\left[\varphi_{\theta_0}(Y^T,X^T)\right],\label{eq_varphi}
\end{equation}
and suppose that $\theta_0$ is identified from \eqref{eq_eff_theta}. Let
$${\varphi}^{\rm eff}_{\theta_0,\omega_0}(Y^T,X^T)=\varphi_{\theta_0}(Y^T,X^T)-\Pi(\varphi_{\theta_0}(Y^T,X^T)\,|\,\mathcal{T}_{\theta_0,\omega_0,L}^{\perp}),$$
where the orthocomplement of the tangent set, $\mathcal{T}_{\theta_0,\omega_0,L}^{\perp}$, is given by Theorem \ref{theo_eff}, except for the fact that the relevant dimension is $L$ instead of $K$. 

Theorem 1 in \cite{brown1998efficient} shows that the joint efficient moment restrictions for $\theta_0$ and $\mu_0=\mu(\theta_0,\omega_0)$ are given by \eqref{eq_eff_theta} and
\begin{equation}
\mathbb{E}_{\theta_0,\omega_0}\left[{\varphi}^{\rm eff}_{\theta_0,\omega_0}(Y^T,X^T)-\mu_0\right]=0.\label{eq_eff_mu}
\end{equation}
The semiparametric variance bound for $\mu_0$ equals the lower $L\times L$ block of the asymptotic variance of the joint GMM estimator $(\widehat{\theta},\widehat{\mu})$ based on \eqref{eq_eff_theta} and \eqref{eq_eff_mu}.

The construction of $\varphi^{\rm eff}_{\theta_0,\omega_0}$ relies on a function  $\varphi_{\theta_0}$ that satisfies \eqref{eq_varphi}. In some models one can find such a function (a Riesz representer), which does not depend on the nonparametric component $\omega_0$. This can be done by exploiting Theorem \ref{theo_charact_aveff}. See, for example, the discussion of the average structural hazard function in the MPH earlier. Moreover, $\varphi^{\rm eff}_{\theta_0,\omega_0}$ is not affected by the particular choice of $\varphi_{\theta_0}$.\footnote{This follows from noting that $\varphi_{\theta_0}(Y^T,X^T)-\Pi(\varphi_{\theta_0}(Y^T,X^T)\,|\,\mathcal{T}_{\theta_0,\omega_0,L}^{\perp})=\mu_{0}+\Pi(\varphi_{\theta_0}(Y^T,X^T)\,|\,\mathcal{T}_{\theta_0,\omega_0,L})$, and that, if $\varphi_{\theta_0}^{1}$ and $\varphi_{\theta_0}^{2}$ both satisfy \eqref{eq_varphi}, then $\Pi(\varphi_{\theta_0}^{1}(Y^T,X^T)\,|\,\mathcal{T}_{\theta_0,\omega_0,L})=\Pi(\varphi_{\theta_0}^{2}(Y^T,X^T)\,|\,\mathcal{T}_{\theta_0,\omega_0,L})$ as we show in Supplemental Appendix Lemma \ref{lemma_unique}.} Of course not all average effects will have non-zero efficiency bounds, but when a Riesz representer for an effect of interest is available, the bound can be calculated using extant results about expectations \citep{brown1998efficient}.

\subsection{Moment restrictions based on working models \label{subseq:workingmodels}} 
Although the characterization of feasible moment conditions provided by Theorem \ref{theo_eff} is invariant to the specific instance of $\omega_0$ indexing the sampled population, the projection \eqref{eq_effscore} generally \emph{does} vary with $\omega_0$. Consequently, although knowledge of $\omega_0$ is not required for consistent estimation, it is generally valuable for improving asymptotic precision. Moreover, constructing an estimator which attains the semiparametric efficiency bound for all possible feedback processes, $g_0$, heterogeneity distributions, $\pi_0$, and initial conditions, $\nu_0$ (i.e., for all $\omega_0 \in \Omega$) generally requires nonparametrically estimating features of these model components. This may be practically difficult, or even impossible, in short panels as considered here.


An alternative approach involves constructing \emph{locally efficient} estimators \citep[e.g.,][]{newey1990semiparametric, Graham_Pinto_Egel_ReStud12}. Let $\widetilde{\omega}=(\widetilde{g},\widetilde{\pi},\widetilde{\nu})\in\Omega$ denote candidate ``working models" for the feedback process, heterogeneity distribution, and initial condition. We show how to construct method-of-moments estimators that (i) attain the bound for $\theta_0$ (or $\mu_0$) when these working models ``happen to characterize the sampled population" (i.e., $\widetilde{\omega}=\omega_0$, but this is not part of the prior restriction) and (ii) remain $\sqrt{N}$-consistent irrespective of the true $\omega_0$ characterizing the sampled population (i.e., when $\widetilde{\omega}\neq\omega_0$). A key property in our setting is that consistency holds for arbitrary $\widetilde{\omega}$ (i.e., our working models may be misspecified), only subject to regularity conditions.

Given working models $\widetilde{\omega}$, let 
$$\widetilde{S}^{\theta}(Y^T,X^T)=\frac{\partial \ln \ell(\theta_0,\widetilde{\omega}\,|\, Y^T,X^T)}{\partial \theta}$$
denote the score for $\theta_0$. Next define the counterpart, \emph{under the working models}, to the efficient score $\phi_{\theta_0}^{\rm eff}$ for $\theta_0$ as
$$\widetilde{\phi}_{\theta_0,\widetilde{\omega}}^{\rm eff}(Y^T,X^T)=\widetilde{\Pi}\left(\widetilde{S}^{\theta}(Y^T,X^T)\,|\, \mathcal{T}_{\theta_0,\widetilde{\omega},K}^{\perp}\right),$$
where $\widetilde{\Pi}$ denotes the projection operator under the working models, that is, 
\begin{align} \label{def_loceff_score}
    \widetilde{\phi}_{\theta_0,\widetilde{\omega}}^{\rm eff}=\arg\underset{\phi\in\mathcal{T}_{\theta_{0},\widetilde{\omega},K}^{\perp}}{\min} \mathbb{E}_{\theta_0,\widetilde{\omega}}\left[\left( \widetilde{S}^{\theta}(Y^T,X^T)-\phi\left(Y^T,X^T\right)\right)^2\right].
\end{align}
We next proceed similarly for $\mu(\theta,\omega)$: the counterpart to the efficient score $\varphi_{\theta_0,\omega_0}^{\rm eff}$ is
\begin{align} \label{def_loceff_score_mu}
    \widetilde{\varphi}^{\rm eff}_{\theta_0,\widetilde{\omega}}(Y^T,X^T)=\varphi_{\theta_0}(Y^T,X^T)-\widetilde{\Pi}(\varphi_{\theta_0}(Y^T,X^T)\,|\, \mathcal{T}_{\theta_0,\widetilde{\omega},L}^{\perp}).
\end{align}

\noindent Finally, consider the following moment restrictions for $\theta_0$ and $\mu_0=\mu(\theta_0,\omega_0)$:
\begin{align}\mathbb{E}_{\theta_0,\omega_0}\left[\widetilde{\phi}^{\rm eff}_{\theta_0,\widetilde{\omega}}(Y^T,X^T)\right]=0,\label{def_loceff_score_est1}\\
 \mathbb{E}_{\theta_0,\omega_0}\left[\widetilde{\varphi}^{\rm eff}_{\theta_0,\widetilde{\omega}}(Y^T,X^T)-\mu_0\right]=0,\label{def_loceff_score_est2}
\end{align}
where we note that the expectations are taken under the true DGP $(\theta_0,\omega_0)$.

We can now state the following result.

\begin{theorem}\label{theo_working_model}
For any working models $\widetilde{\omega}\in\Omega$ such that $\widetilde{\phi}^{\rm eff}_{\theta_0,\widetilde{\omega}}$ and $\widetilde{\varphi}^{\rm eff}_{\theta_0,\widetilde{\omega}}$ are absolutely integrable under DGP $(\theta_0,\omega_0)$, the moment restrictions \eqref{def_loceff_score_est1} and \eqref{def_loceff_score_est2} hold. Moreover, if $\widetilde{\omega}=\omega_0$, then \eqref{def_loceff_score_est1} and \eqref{def_loceff_score_est2} coincide with the efficient moment restrictions for $\theta_0$ and $\mu_{0}$.
\end{theorem}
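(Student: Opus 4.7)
The plan is to exploit the key structural fact, established by Theorem \ref{theo_eff}, that the orthocomplement $\mathcal{T}^{\perp}_{\theta_0,\widetilde{\omega},K}$ (resp.\ $\mathcal{T}^{\perp}_{\theta_0,\widetilde{\omega},L}$) consists precisely of square-integrable functions satisfying the FHR conditions \eqref{eq_phi_hetero_robust}--\eqref{eq_phi_feedback_robust}. Crucially, these conditions are intrinsic to the function $\phi$ and do not reference the nuisance parameter indexing the sampled population. This means that any object constructed to lie in the orthocomplement of the tangent set under the \emph{working model} $\widetilde{\omega}$ is automatically an FHR moment function in the sense of Theorem \ref{theo_charact}(A), and hence has mean zero under the true DGP $(\theta_0,\omega_0)$ as soon as it is absolutely integrable there. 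This bridge between the working-model Hilbert space and the true DGP is the engine of the argument.

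For the first claim, I would argue as follows. By definition $\widetilde{\phi}^{\rm eff}_{\theta_0,\widetilde{\omega}}\in \mathcal{T}^{\perp}_{\theta_0,\widetilde{\omega},K}$, so by Theorem \ref{theo_eff} it satisfies \eqref{eq_phi_hetero_robust} and \eqref{eq_phi_feedback_robust}. Since it is assumed absolutely integrable under $(\theta_0,\omega_0)$, Theorem \ref{theo_charact}(A) (applied with $\theta=\theta_0$, $\omega=\omega_0$) yields $\mathbb{E}_{\theta_0,\omega_0}[\widetilde{\phi}^{\rm eff}_{\theta_0,\widetilde{\omega}}(Y^T,X^T)]=0$, which is \eqref{def_loceff_score_est1}. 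For \eqref{def_loceff_score_est2}, write
\[
\widetilde{\varphi}^{\rm eff}_{\theta_0,\widetilde{\omega}}(Y^T,X^T)=\varphi_{\theta_0}(Y^T,X^T)-\widetilde{\Pi}\bigl(\varphi_{\theta_0}(Y^T,X^T)\,\big|\,\mathcal{T}^{\perp}_{\theta_0,\widetilde{\omega},L}\bigr).
\]
The second term lies in $\mathcal{T}^{\perp}_{\theta_0,\widetilde{\omega},L}$, so by the same reasoning it has zero expectation under $(\theta_0,\omega_0)$ (under the integrability assumption). Since $\varphi_{\theta_0}$ is by assumption a valid moment representation for $\mu(\theta_0,\omega_0)$, subtracting a mean-zero term gives $\mathbb{E}_{\theta_0,\omega_0}[\widetilde{\varphi}^{\rm eff}_{\theta_0,\widetilde{\omega}}]=\mu_0$, which is \eqref{def_loceff_score_est2}.

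For the ``moreover'' part, when $\widetilde{\omega}=\omega_0$ the working-model score $\widetilde{S}^{\theta}$ coincides with the true score $S^{\theta}$, the Hilbert space used to define the projection coincides with the one used in \eqref{eq_effscore}, and the orthocomplement $\mathcal{T}^{\perp}_{\theta_0,\widetilde{\omega},K}$ equals $\mathcal{T}^{\perp}_{\theta_0,\omega_0,K}$ as a closed linear subspace. Consequently $\widetilde{\Pi}=\Pi$ and the definitions \eqref{def_loceff_score} and \eqref{def_loceff_score_mu} collapse to $\phi^{\rm eff}_{\theta_0,\omega_0}$ and $\varphi^{\rm eff}_{\theta_0,\omega_0}$ respectively, so \eqref{def_loceff_score_est1}--\eqref{def_loceff_score_est2} reduce to \eqref{eq_eff_theta} and \eqref{eq_eff_mu}. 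The argument is therefore essentially a bookkeeping exercise once Theorems \ref{theo_charact} and \ref{theo_eff} are in hand; the only mild obstacle is to confirm that the integrability hypotheses are strong enough to license invoking Theorem \ref{theo_charact}(A) under a DGP different from the one defining the projection, and that Theorem \ref{theo_eff}'s characterization of the orthocomplement as the (closed) linear span of FHR functions is preserved under mean-square limits so the projection itself inherits the FHR property.
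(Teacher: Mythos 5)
Your proposal is correct and follows essentially the same route as the paper's proof: membership of the projection in $\mathcal{T}^{\perp}_{\theta_0,\widetilde{\omega},\cdot}$ plus Theorem \ref{theo_eff} gives the FHR conditions, Theorem \ref{theo_charact}(A) then delivers the zero mean under the true DGP, the average-effect case follows by subtracting the mean-zero projection from the Riesz representer $\varphi_{\theta_0}$, and the ``moreover'' part is the observation that the projections coincide when $\widetilde{\omega}=\omega_0$. The closure concern you raise at the end is already resolved by the statement of Theorem \ref{theo_eff}, which identifies the (automatically closed) orthocomplement with the span of FHR functions, so every element of it satisfies \eqref{eq_phi_hetero_robust}--\eqref{eq_phi_feedback_robust}.
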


Theorem \ref{theo_working_model} articulates a \emph{locally} efficient approach to estimation. The moment functions $\widetilde{\phi}^{\rm eff}_{\theta_0,\widetilde{\omega}}$ and $\widetilde{\varphi}^{\rm eff}_{\theta_0,\widetilde{\omega}}$ have the FHR property, irrespective of whether the working models used to derive them actually characterize the sampled population. However, if $\widetilde{\omega}=\omega_0$ ``happens to hold" in the sampled population, then \eqref{def_loceff_score_est1} and \eqref{def_loceff_score_est2} coincide with the efficient moment restrictions for $\theta_0$ and $\mu_{0}$ (when $\widetilde{\omega}=\omega_0$ is \emph{not} part of the prior restriction used to calculate the efficiency bound).

The functions $\widetilde{\phi}^{\rm eff}_{\theta_0,\widetilde{\omega}}$ and $\widetilde{\varphi}^{\rm eff}_{\theta_0,\widetilde{\omega}}$ are FHR because calculation \eqref{def_loceff_score} returns an element in the orthogonal complement of the nuisance tangent set by construction. Calculation \eqref{def_loceff_score} provides a principled way to select a particular FHR moment, one that is optimal if the working model happens to hold in the sampled population. Heuristically, the method-of-moments estimator based upon \eqref{def_loceff_score_est1} and \eqref{def_loceff_score_est2} will be more precise when the working models are ``approximately true", but this -- to repeat -- is not required for consistency.

In practice $\widetilde{\omega}$ may be a fixed set of working models chosen by the researcher. Alternatively the researcher may posit that these models belong to parametric families $\omega_{\eta}$ indexed by an unknown finite-dimensional (not necessarily scalar) parameter $\eta$. These models may be misspecified, in the sense that there may not exist any $\eta_{0}$ such that $\omega_{\eta_0}=\omega_0$. Nevertheless the moments \eqref{def_loceff_score_est1} and \eqref{def_loceff_score_est2} will be valid for any $\eta$. There are different strategies for picking a particular $\eta$. First, the researcher may choose a particular (non-stochastic) $\eta$ via introspection. Second, she might maximize the likelihood under the working models with respect to $\theta$ and $\eta$. While the resulting estimate of $\theta$ will generally be inconsistent, the corresponding estimate of $\eta$ can be used to define the working models $\widetilde{\omega}$ under which $\widetilde{\phi}^{\rm eff}_{\theta_0,\widetilde{\omega}}$ and $\widetilde{\varphi}^{\rm eff}_{\theta_0,\widetilde{\omega}}$ are calculated.

A third approach is to select $\eta$ by maximizing an empirical counterpart to the information for $\theta_0$, as in \citet{lindsay1985using}. Let $\widehat{\eta}$ be such an estimator of $\eta$, and let $\eta^*$ be its large-$N$ probability limit. If one defines $\widetilde{\omega}=\omega_{{\eta}^*}$, then Theorem \ref{theo_working_model} implies that (\ref{def_loceff_score_est1})-(\ref{def_loceff_score_est2}) are satisfied at true parameter values $\theta_0$ and $\mu_{0}$ (under absolute integrability of the functions). This suggests that the GMM estimators $\widehat{\theta}$ and $\widehat{\mu}$ based on \eqref{def_loceff_score_est1} and \eqref{def_loceff_score_est2} that uses $\omega_{\widehat\eta}$ in lieu of $\widetilde{\omega}$ is consistent and asymptotically normal under standard conditions. We leave details about efficient estimation, using working models of increasing dimensions (i.e., ``sieves''), to future work.

Lastly, it is important to stress that our approach based on working models is fundamentally different from (parametric) random-effects maximum likelihood estimation. Indeed, plugging in misspecified parametric models $\omega_{\eta}$ into the likelihood function (\ref{eq: feedback_complete_data_likelihood}), and maximizing that likelihood with respect to $\theta$ and $\eta$, generally results in an inconsistent estimator of $\theta_0$. In contrast, the moment restrictions \eqref{def_loceff_score_est1} and \eqref{def_loceff_score_est2} remain valid even when the working models are (globally) misspecified. 

\section{Efficiency in the multi-spell MPH with unrestricted feedback}\label{sec: mph_deep_dive}
In this section we specialize the general results presented above in order to analyze semiparametric efficiency in the MPH model. We focus on the $T=2$ special case in what follows. 

\subsection{Efficiency bounds analysis}

\textbf{Efficient score for $\mathbf{\theta}$}. Using Lemma~\ref{lem_moment_charac_MPH} and Theorem~\ref{theo_eff}, we show in Supplemental Appendix \ref{app: MPH_derivations} that, for \( T = 2 \), the efficient score for $\theta$ in the MPH model with feedback equals
\begin{align} \label{eq: efficient_score_theta_mph_T=2_feedback}
    \phi_{\theta_0}^{\rm eff}(Y^2,X^2)
    = \mathbb{E}\left[S^{\theta}(Y^2, X^2) \mid Y_0, \widetilde{P}_{\theta_0,1}, \overline{P}_{\theta_0}, X_1\right]
    - \mathbb{E}\left[S^{\theta}(Y^2, X^2) \mid Y_0, \overline{P}_{\theta_0}, X_1\right].
\end{align}
With some additional algebra we show that the efficient score for the \( \beta \) subvector is
\begin{align}
    \phi_{\theta_0}^{\rm eff,\beta}(Y^2,X^2) ={}&
    -X_1\left(\widetilde{P}_{\theta_0,1} - \tfrac{1}{2}\right) \overline{P}_{\theta_0} \mathbb{E}\left[e^A \mid Y_0, \overline{P}_{\theta_0}, X_1\right] \notag \\
    &+ \mathbb{E}\left[X_2 \left(1 - (1 - \widetilde{P}_{\theta_0,1}) \overline{P}_{\theta_0} e^A\right) \mid Y_0, \widetilde{P}_{\theta_0,1}, \overline{P}_{\theta_0}, X_1\right] \notag \\
    &- \mathbb{E}\left[X_2 \left(1 - (1 - \widetilde{P}_{\theta_0,1}) \overline{P}_{\theta_0} e^A\right) \mid Y_0, \overline{P}_{\theta_0}, X_1\right].
    \label{eq: efficient_score_beta_mph_T=2_feedback}
\end{align}
The first term in \eqref{eq: efficient_score_beta_mph_T=2_feedback} does not involve the feedback process and resembles the efficient score for $\beta$ under strict exogeneity originally derived by \cite{hahn1994efficiency}:
\begin{align}
    \phi_{\theta_0}^{\rm eff,SE,\beta}(Y^2,X^2)
    = \left(X_2 - X_1\right)\left(\widetilde{P}_{\theta_0,1} - \tfrac{1}{2}\right) \overline{P}_{\theta_0} \mathbb{E}\left[e^A \mid Y_0, \overline{P}_{\theta_0}, X_1, X_2\right].
    \label{eq: efficient_score_beta_mph_T=2_strictexog}
\end{align}
The second and third terms of \eqref{eq: efficient_score_beta_mph_T=2_feedback}, in contrast, are specific to the feedback case, involving averages over the second-period covariate $X_{2}$. More generally, the efficient score for $\theta$ under strict exogeneity equals:
\begin{align}
    \phi_{\theta_0}^{\rm eff,SE}(Y^2,X^2)
    &= \mathbb{E}\left[S^{\theta}(Y^2, X^2) \mid Y_0, \widetilde{P}_{\theta_0,1}, \overline{P}_{\theta_0}, X_1, X_2\right] - \mathbb{E}\left[S^{\theta}(Y^2, X^2) \mid Y_0, \overline{P}_{\theta_0}, X_1, X_2\right].
    \label{eq: efficient_score_theta_mph_T=2_strictexog}
\end{align}
In the presence of feedback and latent heterogeneity, $X_2$ is an endogenous variable and cannot be conditioned on. \\

\noindent \textbf{Efficient estimation of the ASH}. An interesting average effect in the context of the MPH is the average structural hazard (ASH) defined in \eqref{WMPH_ashf}. The latter is identified by the FHR moment function in \eqref{eq: ashf_identified}, namely $\varphi_{\theta_0}(Y^T,X^T)=\lambda_{\alpha_{0}}(y_{t})e^{x_{t}'\beta_{0}+\gamma_{0} y_{t-1}}\frac{T-1}{\overline{P}_{\theta_{0}}}$. Applying Lemma \ref{proj_lemma_MPH} in the Supplemental Appendix, which characterizes projections onto the orthocomplement of the tangent set in the MPH model, one can readily show that $\Pi(\varphi_{\theta_0}(Y^T,X^T)\,|\,\mathcal{T}_{\theta_0,\omega_0,L}^{\perp})=0$. It then follows from the discussion in Section~\ref{subsec: efficiency} that the efficient moment function for the ASH is
\begin{align*}
    {\varphi}^{\rm eff}_{\theta_0,\omega_0}(Y^T,X^T)=\varphi_{\theta_0}(Y^T,X^T)=\lambda_{\alpha_{0}}(y_{t})e^{x_{t}'\beta_{0}+\gamma_{0} y_{t-1}}\frac{T-1}{\overline{P}_{\theta_{0}}}.
\end{align*}
In turn, a semiparametrically efficient estimator of the ASH is
\begin{align*}
    \widehat{\overline{\lambda}}(y_{t}|x_t,y_{t-1};\widehat{\theta})=\lambda_{\widehat{\alpha}}(y_{t})e^{x_t'\widehat{\beta}+\widehat{\gamma} y_{t-1}}\frac{1}{N}\sum_{i=1}^N\frac{T-1}{\overline{P}_{i\widehat{\theta}}},
\end{align*}
where $\widehat{\theta}=(\widehat{\alpha},\widehat{\beta}',\widehat{\gamma})'$ is semiparametrically efficient for $\theta$.\footnote{Note that this coincides with the method of moments estimator of \cite{brown1998efficient} when the condition \( \varphi_{\theta_0}(Y^T, X^T) - \mu_0 \in \mathcal{T}_{\theta_0,\omega_0,L} \) holds, where \( \mu_0 =\mu(\theta_0,\omega_0)\) denotes the average effect of interest and \( \varphi_{\theta_0} \) is an identifying moment function. This condition is satisfied in the case of the ASH for the choice $\varphi_{\theta_0}(Y^T,X^T)=\lambda_{\alpha_{0}}(y_{t})e^{x_{t}'\beta_{0}+\gamma_{0} y_{t-1}}\frac{T-1}{\overline{P}_{\theta_{0}}}$.}\\

\subsection{Numerical illustrations}
In this subsection we summarize our findings from two numerical experiments designed to (i) illustrate the efficiency loss associated with accommodating feedback, and (ii) assess the performance of locally efficient estimators based on working models. Our context is the MPH model.
In both experiments we impose a Weibull baseline hazard, set $T=2$, and fix the common parameter  at $\theta_{0}=(\alpha_{0},\gamma_{0},\beta_{0})=(\frac{3}{4},\frac{3}{4}\ln 2, -\frac{1}{10})$. 
Our experiments are meant to numerically approximate the asymptotic precision of various methods of estimation, not to assess the accuracy of such approximations in small samples. Details on implementation can be found in Supplemental Appendix \ref{app: numericalexperiment_details}.

The initial duration is drawn from an exponential distribution: $Y_{0}\sim \text{Exponential}(\frac{3}{2})$, and the first-period covariate is a randomized binary treatment: $X_{1}\sim \text{Bernoulli}(\frac{1}{2})$. The heterogeneity distribution equals $V= e^{A}\sim \text{Gamma}(5,5)$, independent of $Y_0,X_1$. The second-period covariate, $X_2$, is a Bernoulli random variable with success probability $p(Y_0,Y_1,X_1,V)$, specified differently across the two experiments to reflect alternative assumptions about the DGP:
\begin{itemize}
    \item In Experiment (A), we set $p(Y_0,Y_1,X_1,V)=1-\exp(-(Y_{0}+X_{1})V)$, which produces a DGP with strictly exogenous covariates but correlated unobserved heterogeneity.
    \item In Experiment (B), we instead let $p(Y_0,Y_1,X_1,V)=1-\exp(-(Y_{0}+X_{1}+Y_{1})V)$, which introduces feedback from past outcomes to future covariates, while continuing to include correlated heterogeneity. 
\end{itemize}

One goal of our experiments is to assess the efficiency loss that arises when a researcher wishes to accommodate the possibility of heterogeneous feedback, but no such feedback is actually present in the sampled population. Put differently, this exercise gives us a sense of the benefit, in terms of asymptotic precision, of using the strict exogeneity assumption when it is valid (as is the case for the DGP in Experiment (A)).

We compare the asymptotic standard errors of two GMM estimators: the first uses the efficient score under strict exogeneity \eqref{eq: efficient_score_theta_mph_T=2_strictexog}, while the second uses the efficient score under feedback \eqref{eq: efficient_score_theta_mph_T=2_feedback}.

The close connection between our FHR moment characterization (Theorem \ref{theo_charact}) and the relevant semiparametric efficiency bound theory (Theorem \ref{theo_eff}), raises interesting practical questions regarding estimation. While many possible FHR moments are available in the MPH setting, the precision with which they recover $\theta$ varies with the population values of the feedback process and heterogeneity distribution.

The complex form of the efficient score (those for the baseline hazard parameter, $\alpha$, and the coefficient on the lagged duration, $\gamma$ -- both reported in Supplemental Appendix \ref{app: MPH_derivations} -- are particularly complicated), suggests that crafting a globally efficient estimator would be difficult. As a principled, yet practical, alternative, we instead explore the properties of a locally efficient estimator based upon simple working models for $\widetilde{\omega}=(\widetilde{g},\widetilde{\pi})$ (a model for $\nu$ is not needed in this case). 

Our chosen working models are deliberately rudimentary, intended to illustrate how a researcher might build parsimonious working models while retaining favorable efficiency propertie in more realistic settings. Specifically, in contrast to what prevails in the sampled population, the working model for the feedback process maintains that $X_{2}\sim\mathrm{Bernoulli}(p)$, for a constant $p$. Observe that the working model for the feedback process involves no feedback. For the heterogeneity distribution, we calculate the locally efficient score under a vague Gamma prior of $\widetilde{\pi}(v) = \frac{1}{v} \mathds{1}\{v > 0\}$, independent of initial conditions. 



Putting all these pieces together yields the following locally efficient score for $\beta$:
\begin{align*}
\widetilde{\phi}_{\theta_0,\widetilde{\omega}}^{\rm eff,\beta}(Y^2,X^2) =
2\left(\mathbb{E}_{\widetilde{g}}\left[X_2\right] - X_1\right)\left(\widetilde{P}_{\theta_0,1} - \tfrac{1}{2}\right)=2\left(p - X_1\right)\left(\widetilde{P}_{\theta_0,1} - \tfrac{1}{2}\right),
\end{align*}
which is simpler than its optimal counterpart \eqref{eq: efficient_score_beta_mph_T=2_feedback} yet, of course, still feedback and heterogeneity robust. Additional details, along with the full expression for the score $\widetilde{\phi}_{\theta_0,\widetilde{\omega}}^{\rm eff}(Y_0,\widetilde{P}_{\theta_0,1}, \overline{P}_{\theta_0}, X_1)$, are provided in Supplemental Appendix \ref{app: numericalexperiment_details}. 
\\\indent As a final point of comparison, we also report the limiting standard errors of a just-identified GMM estimator that employs the moment function:
\begin{equation}
    \label{eq_simple_moment}
    \phi_{\theta_{0}}(Y^2,X^2)=\left(\begin{array}{c}
    2+\ln(1-\widetilde{P}_{\theta_{0},1})+\ln(\widetilde{P}_{\theta_{0},1})\\
    X_{1}\left(\ln(1-\widetilde{P}_{\theta_{0},1})-\ln(\widetilde{P}_{\theta_{0},1})\right)\\
    Y_{0}\left(\ln(1-\widetilde{P}_{\theta_{0},1})-\ln(\widetilde{P}_{\theta_{0},1})\right)
    \end{array}\right).
\end{equation}
The first entry in \eqref{eq_simple_moment} is a mean-zero function that exploits the fact that $\widetilde{P}_{\theta_0,1}\sim U[0,1]$ and leverages symmetry (it is also, coincidentally, a component of the efficient score for $\alpha$ under the Weibull baseline hazard, derived and presented in Supplemental Appendix \ref{app: MPH_derivations}). The second and third entries correspond to the log-transformed analogs of \eqref{eq: AB_moments_MPH}. While this set of moment conditions lacks an overt efficiency justification, it reflects a common strategy of choosing a small set of ``simple" moments for estimation purposes. We include it primarily to benchmark the efficiency gains provided by the locally efficient approach based on working models. 

Table \ref{tab1_montecarlo_SE} reports the asymptotic standard errors for each estimate of $\theta_0$ in Experiment (A). We make several observations. First, comparing the first and second rows reveals that accommodating feedback -- when it is, in fact, absent from the DGP -- results in some efficiency loss for the slope coefficient $\beta$, with a 29\% increase in standard error. Strict exogeneity is a strong assumption and imposing it, when it is valid to do so, improves asymptotic precision.

Although allowing for feedack degrades the precision with which we can learn $\beta$, this is  not really the case for $\alpha$ (the Weibull baseline hazard parameter) and $\gamma$ (the lagged duration dependence parameter) in design (A). This finding is consistent with the structure of the efficient scores: those for $\alpha$ and $\gamma$ are very similar under both strict exogeneity and feedback.\footnote{Compare \eqref{eq: efficient_score_alpha_specific_T=2} to \eqref{eq: efficient_score_alpha_specific_T=2_strictexo} and \eqref{eq: efficient_score_gamma_specific_T=2} to \eqref{eq: efficient_score_gamma_specific_T=2_strictexo} in Supplemental Appendix \ref{app: MPH_derivations}.} 
By contrast, the efficient score for $\beta$ differs markedly across the two settings.

A second observation, inspecting the third row of the table, is that the precision loss associated with using the locally efficient estimator based on the working models $\widetilde{\omega}$ is moderate. Recall that our working models do not characterize the sampled population in design (A). For $\beta$ and $\gamma$, respectively, we observe a 28\% and 26\% increase in standard error when comparing rows 2 and 3. The efficiency losses are concentrated on the coefficients for the predetermined covariate and the lagged dependent variable, with only minimal deterioration for the parameter of the baseline hazard $\alpha$. 

Finally, the approach based on working models leads to large improvements relative to using the ``simple" moment functions \eqref{eq_simple_moment}. As seen in row 4, the standard errors associated with the simple GMM estimator are substantially larger for all three parameters. For example, the standard error for $\gamma$ is nearly 7 times higher than that obtained using the estimator based on the working models $\widetilde{\omega}$ (compare rows 3 and 4, column 3).

\begin{table}[!htb]
\caption{Asymptotic standard errors relative to the semiparametric efficiency bound with strict exogeneity in Experiment (A)} \label{tab1_montecarlo_SE}
\centerline{
\scalebox{1}{
\begin{tabular}{lccccccccccccc} \toprule  \toprule 
  & $\alpha$ & $\beta$ & $\gamma$  \\ 
Eff.score SE &1.0&1.0&1.0\\ 
Eff.score FB &1.000&1.292&1.003\\ 
Locally Eff.score FB &1.035&1.660&1.264\\ 
Simple moments &2.711&2.954&8.767\\ 
\bottomrule \bottomrule 
\end{tabular}
}}
\par
{\footnotesize\textit{Notes: SE denotes strict exogeneity, FB denotes feedback. The DGP satisfies strict exogeneity.}}
\end{table}

\indent In Experiment (B), we repeat our analysis, but for a population where heterogeneous feedback is, in fact, present. Accordingly, Table \ref{tab1_montecarlo_FB} compares the asymptotic standard errors of the locally efficient estimator based on $\widetilde{\phi}_{\theta_0,\widetilde{\omega}}^{\rm eff,\beta}$ to that of the ``simple'' GMM estimator based on (\ref{eq_simple_moment}) under the second DGP described above (an estimate based on the efficient score under strict exogeneity would not be consistent in this design). As a benchmark, we use the globally semiparametrically efficient estimator based upon the true efficient score that uses \eqref{eq: efficient_score_theta_mph_T=2_feedback}. The fourth column of Table \ref{tab1_montecarlo_FB} also reports the corresponding asymptotic standard errors for the average structural hazard (ASH) when using the efficient moment function presented in the previous subsection. The key takeaways mirror those of Table \ref{tab1_montecarlo_SE}. First, the efficiency loss from relying on locally efficient scores is modest, both for common parameters and for average effects. Second, and perhaps more importantly, locally efficient estimators significantly outperform the alternative of using a set of ``simple" moments, reaffirming the practical advantages of approaches based on working models.
\begin{table}[!htb]
\caption{Asymptotic standard errors relative to the semiparametric efficiency bound with feedback in Experiment (B)} \label{tab1_montecarlo_FB}
\centerline{
\scalebox{1}{
\begin{tabular}{lccccccccccccc} \toprule  \toprule 
  & $\alpha$ & $\beta$ & $\gamma$ & ASH \\ 
Eff.score FB &1.0&1.0&1.0&1.0\\ 
Locally Eff.score FB &1.030&1.188&1.317&1.012\\ 
Simple moments &3.372&3.223&9.965&2.675\\ 
\bottomrule \bottomrule 
\end{tabular}
}}
\vspace{0.1cm}
\textit{{\footnotesize Notes: FB denotes feedback. The DGP does not satisfy strict exogeneity. The ASH is evaluated at
$y_1=y_{0}=x_1=1$.}}
\end{table}

\section{FHR moment restrictions in other models}\label{sec: twilight_zone_material}

Our characterizations can be used to find FHR moment functions for many models. We have already analyzed the MPH model in detail. In this section we provide additional analytical examples and discuss how to obtain moment functions more generally. Given that the mathematical structure for model parameters and average effects is similar, in this section we focus our discussion on $\theta$.

\subsection{Existence of moment functions}

For certain models, it may be that the only solution to the system of equations in Theorem \ref{theo_charact}, or equivalently in Corollary \ref{coro_charact}, is the degenerate one, $\phi_{\theta}=0$. We now provide two examples where only trivial moment functions exist. For simplicity we focus on the $T=2$ case.

 \begin{example} {(\textsc{Binary Choice Model})}   
	Consider a binary choice logit model with continuous heterogeneity and sequentially exogenous covariates:
	$$\Pr(Y_{t}=1\,|\, Y_{t-1}=y_{t-1},X_{t}=x_t,A=a;\theta)=\frac{\exp(\gamma y_{t-1}+\beta'x_t+a)}{1+\exp(\gamma y_{t-1}+\beta'x_t+a)},\quad t=1,2.$$
	Any valid moment function of $\theta=(\gamma,\beta')'$ needs to satisfy (\ref{eq_phi_feedback_robust}), that is,
	\begin{align*}
		\sum_{y_2=0}^1\phi_{\theta}(y_0,y_1,y_2,x_1,x_2)\frac{\exp(\gamma y_1+\beta'x_2+a)^{y_2}}{1+\exp(\gamma y_1+\beta'x_2+a)}	\text{ does not depend on $x_2$.}
	\end{align*}
	Suppose that $\beta\neq 0$. The only functions $\phi_{\theta}$ satisfying this restriction do not depend on $y_2$ or $x_2$. Hence, by (\ref{eq_phi_hetero_robust}) we obtain
	\begin{align*}
		\sum_{y_1=0}^1\phi_{\theta}(y_0,y_1,x_1)\frac{\exp(\gamma y_0+\beta'x_1+a)^{y_1}}{1+\exp(\gamma y_0+\beta'x_1+a)}=0,
	\end{align*}
	from which we obtain that $\phi_{\theta}=0$. This shows the absence of non-trivial moment restrictions for $\theta$ in this model. Building on \citet{Chamberlain_JOE2022,chamberlain2023identification}, \citet{bonhomme2023identification} study the failure of point-identification in binary choice models with sequentially exogenous covariates, and show how to compute identified sets on the parameters and average effects. For this reason, our examples in the next subsections will focus on continuous outcomes.\footnote{As  \citet{bonhomme2023identification} note, imposing assumptions on the feedback process, such as Markovianity, may lead to non-trivial moment restrictions in discrete choice and other models where an approach allowing for fully unrestricted feedback is uninformative. Extending our approach to accommodate such additional assumptions is an interesting topic for future work.}
	\end{example}


 \begin{example} {(\textsc{Random Coefficients Model})}   
Consider the Gaussian linear random coefficients model
	\begin{equation}
		Y_{t}=\gamma Y_{t-1}+B'X_{t}+C+\varepsilon_{t},\quad \varepsilon_{t}\,|\, Y^{t-1},X^t,A \sim {\cal{N}}(0,\sigma^2),
	\end{equation}
	where $A=(B',C)'$ is multidimensional. Any moment function on $\theta=(\gamma,\sigma^2)'$ needs to satisfy
	\begin{equation}\label{eq_lin2}
		\int \phi_{\theta}(y_0,y_1,y_2,x_1,x_2)\exp\left(-\frac{1}{2\sigma^2}\left(y_2-\gamma y_1-b'x_2-c\right)^2\right)\mathrm{d}y_2 \text{ does not depend on $x_2$.}
	\end{equation}
	Note that, if (\ref{eq_lin2}) holds, then, for all $b$ and $x_2,\widetilde{x}_2$,
	$$ \phi_{\theta}(y_0,y_1,y_2+b'x_2,x_1,x_2)=\phi_{\theta}(y_0,y_1,y_2+b'\widetilde{x}_2,x_1,\widetilde{x}_2).$$
	This implies that $\phi_{\theta}$ does not depend on $y_2$ or $x_2$. Then (\ref{eq_phi_hetero_robust}) implies
	\begin{equation*}
		\int \phi_{\theta}(y_0,y_1,x_1)\exp\left(-\frac{1}{2\sigma^2}\left(y_1-\gamma y_0-b'x_1-c\right)^2\right)\mathrm{d}y_1 =0,
	\end{equation*}
	from which it follows that $\phi_{\theta}=0$. This shows the only FHR moment function on $\theta$ in this model is the null function. This negative result echoes an example in \citet{Chamberlain_JOE2022}. For this reason, our examples in the next subsections, which all involve scalar outcomes, will feature one-dimensional unobserved heterogeneity.
\end{example}

It is important to note that, even when there exist non-zero functions $\phi_{\theta}$, $\theta$ may fail to be identified. For example, in the MPH model the covariates may be collinear, in which case identification fails. This is of course not specific to our setting. As in any nonlinear GMM problem, identification needs to be verified on a case-by-case basis, and while rank conditions for local identification of $\theta$ are available, verifying global identification may be difficult. Conversely, it may also be that the only $\phi_{\theta}$ satisfying the conditions of Theorem \ref{theo_charact} is $\phi_{\theta}=0$, yet $\theta$ is point-identified.\footnote{As an example, let $T=2$ and let $Y_{t}=\theta+A X_{t}+\varepsilon_{t}$ with $X_{t}$ continuously distributed on $\mathbb{R}$ and $\varepsilon_{t}\,|\, Y^{t-1},X^t,A \sim {\cal{N}}(0,1)$. Applying a similar logic to that in model \eqref{eq_lin2}, one can show that $\phi_{\theta}=0$ since the assumptions imply that $P(X_{2}=0)=P(X_{1}=0)=0$. However, this is a case where the parameter of interest $\theta$ is \textit{identified at 0} since $\lim_{x \to 0} \mathbb{E}[Y_{1}|X_{1}=x]=\theta$ (\citealp{Graham_Powell_EM12}).} However, in that case an implication of our analysis in Section \ref{sec: SEBs} is that such identification is necessarily irregular and the semiparametric efficiency bound for $\theta$ is zero \citep{Chamberlain_JE1986}. Lastly, even if point-identification fails identified sets may be informative, as shown in \citet{lee2020identification} and \citet{bonhomme2023identification}.

\subsection{Obtaining new moment conditions}

We now illustrate how researchers can apply the two-step procedure underlying Corollary \ref{coro_charact} to derive new moment conditions. In the first step, we construct a function $\psi_{\theta}=\sum_{t=1}^{T-1}\psi_{\theta,t}$, for $\psi_{\theta,t}$ such that 
\begin{equation}\mathbb{E}\left[\psi_{\theta,t}(Y^t,X^{t},A)\,|\,Y^{t-1},X^t,A\right]=0,\quad t=1,\ldots,T-1.\label{eq_psi_zeromean}\end{equation}
In the second step, in the final time period, we invert the linear integral equation 
\begin{align}
    	\int \phi_{\theta}(y^T,x^T) f_{\theta}(\left.y_T\right|y_{T-1},x_{T},a)\mathrm{d}y_T=\psi_{\theta}(y^{T-1},x^{T-1},a),\label{eq_int}
\end{align}
to get a FHR moment function $\phi_{\theta}(y^T,x^T)$. Naturally, this last task requires the function $\psi_{\theta}(y^{T-1},x^{T-1},a)$ to lie in the range of the integral operator induced by the parametric model. In many examples of interest, equation (\ref{eq_int}) can actually be inverted in closed form, yielding explicit expressions for functions $\phi_{\theta}$. As an initial example, consider the Poisson regression model introduced earlier.

\begin{example}[continues=ex: count_intro] (\textsc{Moments for the Poisson model}) 
Recall, for  $t=1,\ldots,T$,
\begin{align*}
    \left.Y_{t}\right|Y^{t-1},X^{t},A\sim\mathrm{Poisson}\left(\exp\left(\gamma Y_{t-1}+X_{t}'\beta+A\right)\right),
\end{align*} 
with both feedback and heterogeneity unrestricted. For simplicity, consider the case $T=2$ and denote $Z_{t}=(X_t',Y_{t-1})'$ and $\theta=(\beta',\gamma)'$. Following the logic laid out above, we start out by picking a moment function $\psi_{\theta}$ such that
$\mathbb{E}\left[\psi_{\theta}(Y_{0},Y_{1},X_{1},A)\,|\,Y_{0},X_{1},A\right]=0$. An example of such a function is provided by the score for period $t=1$ in the likelihood which conditions on $A$, $\psi_{\theta}(y_{0},y_{1},x_{1},a)=z_1\left(y_1-\exp(z_1'\theta+a)\right)$.\\
Next, the challenge is to solve for $\phi_{\theta}$ in (\ref{eq_int}); here corresponding to finding a solution to
\begin{align*}
    \sum_{y_2=0}^{\infty} \phi_{\theta}(y_{0},y_{1},y_2,x_{1},x_{2})\exp(-\exp(z_2'\theta+a))\frac{\exp(z_2'\theta+a)^{y_2}}{y_2!}=\psi_{\theta}(y_{0},y_{1},x_{1},a).
\end{align*}
After multiplying by $\exp(\exp(z_2'\theta+a))$ and letting $v=\exp(a)$, this is equivalent to
\begin{align*}
   \sum_{y_2=0}^{\infty} \phi_{\theta}(y_{0},y_{1},y_2,x_{1},x_{2})\exp(y_2z_2'\theta) \frac{v^{y_2}}{y_2!}=\psi_{\theta}(y_{0},y_{1},x_{1},\ln v)e^{v\exp(z_2'\theta)}.
\end{align*}
This formulation reveals that, for a solution to exist, we require that $v\mapsto \psi_{\theta}(y_{0},y_{1},x_{1},\ln v)e^{v\exp(z_2'\theta)}$ admits a Taylor series at $v=0$, with coefficients given by $\phi_{\theta}(y_{0},y_{1},y_2,x_{1},x_{2})\exp(y_2z_2'\theta)$. Appealing to the uniqueness of the Taylor series, we then infer that:
\begin{align}\label{momentfunc_poisson}
    \phi_{\theta}(y_{0},y_{1},y_{2},x_{1},x_{2})=\frac{\partial^{y_2}}{\partial v^{y_2}}\bigg|_{v=0}\, \left[{\psi_{\theta}}(y_{0},y_{1},x_{1},\ln v)e^{v\exp(z_2'\theta)}\right]\exp(-y_2z_2'\theta).
\end{align}
By Corollary \ref{coro_charact}, all FHR moment functions for $\theta$ take the form (\ref{momentfunc_poisson}), for some appropriate mean-zero function $\psi_{\theta}$. 
For instance, in the case where $\psi_{\theta}$ is the score in period $t=1$, we obtain
    \begin{align*}
        \phi_{\theta}(y_{0},y_{1},y_{2},x_{1},x_{2})
        &=z_1\left(y_1-y_2e^{(z_1-z_2)'\theta}\right),
    \end{align*}
    which is proportional to the moment function of \cite{Chamberlain_JBES92} and \cite{wooldridge1997multiplicative}. However, using \eqref{momentfunc_poisson} provides many additional valid moment restrictions on $\theta$. For example, by the second-moment properties of the Poisson distribution,
\begin{equation}
    \psi_{\theta}(y_{0},y_{1},x_{1},a)=\left[y_{1}\left(y_{1}-1\right)-\exp\left(z_{1}'\theta+a\right)^{2}\right]\cdot m\left(z_1\right)    
\end{equation}
is analytic in $v=\exp(a)$ and also satisfies \eqref{eq_psi_zeromean}, from which we get the moment
\begin{equation}             \phi_{\theta}(y_{0},y_{1},y_{2},x_{1},x_{2})=\left[y_{1}\left(y_{1}-1\right)-\frac{y_{2}\left(y_{2}-1\right)\exp\left(z_{1}'\theta\right)^{2}}{\exp\left(z_{2}'\theta\right)^{2}}\right]\cdot m\left(z_{1}\right).\label{eq: new_poisson_moments}  
\end{equation}
Additional FHR moment functions, based upon higher order moments of the Poisson distribution, are straightforward to construct.\footnote{Moreover, by Theorem \ref{theo_eff}, the functions $\phi_{\theta}$ in \eqref{momentfunc_poisson} 
 span the orthocomplement of the tangent set. This suggests that one could compute the efficient score for $\theta$ by projecting the $\theta$-score on that set of functions, although we leave the derivation of the precise form of the efficient score to future work.}
\end{example}

 \begin{example} {(\textsc{Mixed Interactive Hazard (MIH) Model})}   \label{ex_5}
As an example of a new model, one for which no valid FHR moment conditions are known, consider the following Mixed Interactive Hazards (MIH) model. The MIH model relaxes the proportionality assumption of the MPH model; the conditional density of the $t$-th spell equals:
\begin{align}f_{\theta}(y_{t}\,|\, y_{t-1},x_t,a)&=\exp(\gamma y_{t-1}+x_t'\beta+a+\left(x_t'\delta\right)\cdot a)\lambda_{\alpha}(y_t)\notag\\
		&\quad \times \exp\left(-\exp(\gamma y_1+x_t'\beta+a+\left(x_t'\delta\right)\cdot a)\Lambda_{\alpha}(y_t)\right),\label{eq_dens_MIH}\end{align}
	where $\theta=(\alpha',\gamma,\beta',\delta')'$. Note that \eqref{eq_dens_MIH} simplifies to \eqref{eq_dens_MPH} when $\delta=0$. However, $\delta\neq 0 $ allows for more general interaction effects between the covariate and unobserved heterogeneity. The MIH model is an example of a ``generalized hazards'' model (\citealp{bonev2020nonparametric}). Let $\psi_{\theta}(y_0,y_1,x_1,a)$ be a function satisfying \eqref{eq_psi_zeromean}. The integral equation (\ref{eq_int}), after employing the change of variable $y_2\mapsto p_2$, equals
\begin{align*}
    \int_0^{\infty} \phi_{\theta}(y_0,y_1,p_2,x_1,x_2)e^{(1+x_2'\delta) a}\exp\left(-e^{(1+x_2'\delta) a}p_2\right)\mathrm{d}p_2=\psi_{\theta}(y_0,y_1,x_1,a).
\end{align*}
 Multiplying both sides by $e^{-(1+x_2'\delta) a}$, this is equivalent to
\begin{align*}
    {\cal{L}}\left[\phi_{\theta}(y_0,y_1,\cdot,x_1,x_2)\right]\left(e^{(1+x_2'\delta)a}\right)=e^{-(1+x_2'\delta) a}\psi_{\theta}(y_0,y_1,x_1,a),
\end{align*}
where ${\cal{L}}[g](s)=\int_0^{+\infty} g(z)\exp(-sz)\mathrm{d}z$ denotes the Laplace transform operator. Letting $s=e^{(1+x_2'\delta)a}$, we effectively wish to solve
$${\cal{L}}\left[\phi_{\theta}(y_0,y_1,\cdot,x_1,x_2)\right](s)=s^{-1}\psi_{\theta}\left(y_0,y_1,x_1,\frac{\ln s}{1+x_2'\delta}\right),$$
and, provided $s \mapsto s^{-1}\psi_{\theta}\left(y_0,y_1,x_1,\frac{\ln s}{1+x_2'\delta}\right)$ lies in the range of $\cal{L}$, we can back out $\phi_{\theta}$ using the inverse Laplace transform. To illustrate, take
$$\psi_{\theta}(y_0,y_1,x_1,a)=p_1-\exp(-(1+x_1'\delta)a),$$
from which we obtain
$${\cal{L}}\left[\phi_{\theta}(y_0,y_1,\cdot,x_1,x_2)\right](s)=s^{-1}p_1-s^{-\frac{1+x_1'\delta}{1+x_2'\delta}-1},$$
which, provided $\frac{1+\delta'x_1}{1+\delta'x_2}>-1$,\footnote{A similar restriction on predetermined covariates features in the moment restrictions of the censored regression model of \cite{honore2004estimation}.} admits the solution
$$\phi_{\theta}(y_0,y_1,y_2,x_1,x_2)=p_1-\frac{p_2^{\frac{1+x_1'\delta}{1+x_2'\delta}}}{\Gamma\left(1+\frac{1+x_1'\delta}{1+x_2'\delta}\right)}.$$
where $\Gamma$ is the Gamma function. This gives the FHR moment function
	\begin{equation}
		 \phi_{\theta}(y_{0},y_{1},y_{2},x_{1},x_{2})=\left[p_{1}-\frac{p_{2}^{\frac{1+x_{1}'\delta}{1+x_{2}'\delta}}}{\Gamma\left(1+\frac{1+x_{1}'\delta}{1+x_{2}'\delta}\right)}\right]\cdot m(y_0,x_1).\label{eq_MPH_ii4}
	\end{equation}
    More generally, one can obtain closed-form expressions if we choose $\psi_{\theta}$ as a polynomial function of $p_1$.\footnote{For example, we can use for any $b>0$,
$$\psi_{\theta}(y_0,y_1,x_1,a)=p_1^b-\exp(-b(1+x_1'\delta)a)\Gamma(1+b),$$
which has zero mean, and gives the FHR moment functions
$$\phi_{\theta}(y_{0},y_{1},y_{2},x_{1},x_{2})=\left[p_{1}^b-\frac{\Gamma(1+b)}{\Gamma\left(1+b\frac{1+x_{1}'\delta}{1+x_{2}'\delta}\right)}p_{2}^{b\frac{1+x_{1}'\delta}{1+x_{2}'\delta}}\right]\cdot m(y_0,x_1),$$
which provides a continuum of possible moment functions on $\theta$.} 
\end{example}

Example \ref{ex_5} illustrates how, using Corollary \ref{coro_charact}, one can derive moment functions by operator inversion. When suitable functions $\psi_{\theta}$ exist, closed-form inversion delivers explicit moment functions. In other settings, it may be that the inverse is not available in closed form, and numerical inversion techniques need to be used (see, e.g., \citealp{engl1996regularization}).

\subsection{Irregular moment conditions}

In this section we have described an approach, based on Corollary \ref{coro_charact}, to construct moment functions $\phi_{\theta}$ when those are available. However, for those functions to be helpful for estimation they need to be sufficiently regular. In this last part we provide examples that show how irregularity may arise, and how regularization techniques can help. 

\begin{example}[continues=ex: mph_intro] (\textsc{Irregularity in the MPH Model})
As an example, consider applying Corollary \ref{coro_charact} to the MPH model, where we again focus on the two-period case for simplicity. We wish to solve for $\phi_{\theta}$ in
$${\cal{L}}\left[\phi_{\theta}(y_0,y_1,\cdot,x_1,x_2)\right]\left(e^{a}\right)=e^{- a}\psi_{\theta}(y_0,y_1,x_1,a),$$
that is, letting $s=e^{a}$, in
$${\cal{L}}\left[\phi_{\theta}(y_0,y_1,\cdot,x_1,x_2)\right](s)=s^{-1}\psi_{\theta}\left(y_0,y_1,x_1,\ln s\right).$$
Suppose in this case that we take $\psi_{\theta}$ to be the score of the parametric model with respect to $\theta$, that is, 
$$\psi_{\theta}\left(y_0,y_1,x_1,a\right)=z_1(1-p_1e^a).$$
Now, the solution to 
$${\cal{L}}\left[\phi_{\theta}(y_0,y_1,\cdot,x_1,x_2)\right](s)=s^{-1}z_1(1-p_1s)$$
is
$$\phi_{\theta}(y_0,y_1,y_2,x_1,x_2)=z_1\left(1-p_1 \cdot\delta(p_2)\right),$$
where $\delta(\cdot)$ is Dirac's delta. While $\phi_{\theta}$ has zero expectation, it is a highly irregular function that cannot be directly used in GMM estimation. A possible strategy to address this issue is to regularize $\phi_{\theta}$ by replacing $\delta(p_2)$ with $h^{-1}\kappa(p_2/h)$, where $\kappa$ is a nonparametric kernel and $h>0$ a bandwidth parameter. However, for fixed $h$, the regularized function $\phi_{\theta}$ is no longer mean-zero, necessitating $h$ to shrink to zero as the sample size increases to ensure consistent estimation of $\theta$. In the MPH model, it turns out that these difficulties can be entirely avoided. A rich set of regular moment functions exists, and, in fact, we have characterized the efficient moment function for this model.\footnote{These issues are not unique to the feedback setting. In the context of panel logit models with strictly exogenous covariates, the conditional likelihood estimator of \cite{honore2000panel} can also be interpreted as relying on an irregular moment condition and requires kernel methods. Only recently did \cite{honore2024moment} demonstrate the existence of regular moment functions for this class of models. We are grateful to Manuel Arellano for this observation.}\end{example}

The regularization strategy we have outlined in the context of the MPH model can be useful in more complex models. A general strategy, when solving for $\phi_{\theta}$ in the integral equation (\ref{eq_phi_coro}), is to use a regularized inverse of the relevant integral operator as in \citet{carrasco2007linear}. We now describe an example where this strategy can be successfully applied.

 \begin{example} {(\textsc{Nonlinear Regression Model})}   
Consider the nonlinear panel data regression model
	\begin{equation}
		Y_{t}=m_{\beta}(Y_{t-1},X_{t},A)+\varepsilon_{t},\quad \varepsilon_{t}\,|\, Y^{t-1},X^t,A\sim {\cal{N}}(0,\sigma^2),
	\end{equation}
	where $a\mapsto m_{\beta}(y_{t-1},x_t,a)$ is differentiable and strictly increasing. Here $\theta=(\beta',\sigma^2)'$. Let $\lambda>0$, and let 
    \begin{align*}
        K_{\lambda}(z)=\frac{1}{2\pi }\int \lambda\kappa(\lambda \tau) \exp\left(\lambda \boldsymbol{i}\tau z+\frac{1}{2}\sigma^2\tau^2\right)\mathrm{d}\tau,
    \end{align*}
	where $\kappa$ is the Fourier transform of a kernel function, satisfying $\kappa(0)=1$ and $\kappa'(0)=0$, $\abs{\kappa}$ is integrable, and where $\boldsymbol{i}$ is the imaginary number. A possible choice is $\kappa(\tau)=\boldsymbol{1}\{\tau\in(-1,1)\}$, which is the Fourier transform of the sinc kernel. $K_{\lambda}(z)$ corresponds to the deconvolution kernel introduced by \citet{stefanski1990deconvolving} in the context of a deconvolution problem with normal measurement error. Given any mean-zero function $\psi_{\theta}(y_0,y_1,x_1,a)$, we define
\begin{equation}\phi_{\theta}^{\lambda}(y_0,y_1,y_2,x_1,x_2)=\int \psi_{\theta}(y_0,y_1,x_1,a)\frac{\partial m_{\beta}(y_1,x_2,a)}{\partial a}K_{\lambda}\left(\frac{m_{\beta}(y_1,x_2,a)-y_2}{\lambda}\right)\mathrm{d}a.\label{eq_nonlin_phi}\end{equation}
	Applying Corollary \ref{coro_charact} and a regularization strategy, we show in Supplemental Appendix \ref{app: nonlinearregression_details} that 	\begin{equation}\mathbb{E}_{\theta,\omega}\left[\phi^{\lambda}_{\theta}(Y_0,Y_1,Y_2,X_1,X_2)\right]\rightarrow 0 \,\text{ as } \,\lambda\rightarrow 0.\label{eq_tobeshown}\end{equation}
	In this sense, $\phi_{\theta}^{\lambda}$ provides an approximately valid moment function for $\theta$. As a special case, consider the linear Gaussian model where $m_{\beta}(y_{t-1},x_t,a)=\gamma y_{t-1}+x_t'\beta+a$, and take $\psi_{\theta}(y_0,y_1,x_1,a)=z_1(y_1-\gamma y_0-x_1'\beta-a)$. Then \eqref{eq_nonlin_phi} simplifies to
	\begin{align*}\phi_{\theta}^{\lambda}(y_0,y_1,y_2,x_1,x_2)&=\int z_1(y_1-\gamma y_0-x_1'\beta-a)K_{\lambda}\left(\frac{\gamma y_1+x_2'\beta+a-y_2}{\lambda}\right)\mathrm{d}a\\
		&=z_1\left[(y_1-\gamma y_0-x_1'\beta)-(y_2-\gamma y_1-x_2'\beta)\right],\end{align*}
	which corresponds to the Arellano-Bond moment function for the case $T=2$. Note that $\phi_{\theta}^{\lambda}$ does not depend on $\lambda $ in this case, so the regularization is immaterial. 
\end{example}







    
	{\small
		\bibliography{biblio}

\begin{thebibliography}{}

\bibitem[Aguirregabiria and Carro, 2021]{aguirregabiria2021identification}
Aguirregabiria, V. and Carro, J.~M. (2021).
\newblock Identification of average marginal effects in fixed effects dynamic
  discrete choice models.
\newblock {\em arXiv preprint arXiv:2107.06141}.

\bibitem[Aguirregabiria and Mira, 2010]{Aguirregabiria_Mira_JOE2010}
Aguirregabiria, V. and Mira, P. (2010).
\newblock Dynamic discrete choice structural models: A survey.
\newblock {\em Journal of Econometrics}, 156(1):38 -- 67.

\bibitem[Ai and Chen, 2012]{ai2012semiparametric}
Ai, C. and Chen, X. (2012).
\newblock The semiparametric efficiency bound for models of sequential moment
  restrictions containing unknown functions.
\newblock {\em Journal of Econometrics}, 170(2):442--457.

\bibitem[Al-Sadoon et~al., 2017]{Al-Sadoon_et_al_ER2017}
Al-Sadoon, M.~M., Li, T., and Pesaran, H. (2017).
\newblock Exponential class of dynamic binary choice panel data models with
  fixed effects.
\newblock {\em Econometric Reviews}, 36(6-9).

\bibitem[Arellano, 2016]{Arellano_RIE2016}
Arellano, M. (2016).
\newblock Modelling optimal instrumental variables for dynamic panel data
  models.
\newblock {\em Research in Economics}, 70(2):238 -- 261.

\bibitem[Arellano and Bond, 1991]{arellano1991some}
Arellano, M. and Bond, S. (1991).
\newblock Some tests of specification for panel data: Monte carlo evidence and
  an application to employment equations.
\newblock {\em The review of economic studies}, 58(2):277--297.

\bibitem[Arellano and Bonhomme, 2011]{arellano2011nonlinear}
Arellano, M. and Bonhomme, S. (2011).
\newblock Nonlinear panel data analysis.
\newblock {\em Annu. Rev. Econ.}, 3(1):395--424.

\bibitem[Arellano and Bover, 1995]{arellano1995another}
Arellano, M. and Bover, O. (1995).
\newblock Another look at the instrumental variable estimation of
  error-components models.
\newblock {\em Journal of econometrics}, 68(1):29--51.

\bibitem[Arellano and Honor{\'e}, 2001]{arellano2001panel}
Arellano, M. and Honor{\'e}, B. (2001).
\newblock Panel data models: some recent developments.
\newblock In {\em Handbook of econometrics}, volume~5, pages 3229--3296.
  Elsevier.

\bibitem[Ashenfelter, 1978]{Ashenfelter_RESTAT1978}
Ashenfelter, O. (1978).
\newblock Estimating the effect of training programs on earnings.
\newblock {\em Review of Economics and Statistics}, 60(1):47 -- 57.

\bibitem[Ashenfelter and Card, 1985]{Ashenfelter_Card_RESTAT1985}
Ashenfelter, O. and Card, D. (1985).
\newblock Using the longitudinal structure of earnings to estimate the effect
  of training programs.
\newblock {\em Review of Economics and Statistics}, 67(4):648 -- 660.

\bibitem[Blundell and Bond, 2000]{Blundell_Bond_ER00}
Blundell, R. and Bond, S. (2000).
\newblock Gmm estimation with persistent panel data: an application to
  production functions.
\newblock {\em Econometric Reviews}, 19(3):321 -- 340.

\bibitem[Blundell et~al., 2002]{Blundell_Griffith_Windmeijer_JOE2002}
Blundell, R., Griffith, R., and Windmeijer, F. (2002).
\newblock Individual effects and dynamics in count data models.
\newblock {\em Journal of Econometrics}, 108(1):113 -- 131.

\bibitem[Blundell and Powell, 2003]{Blundel_Powell_WC03}
Blundell, R. and Powell, J.~L. (2003).
\newblock {\em Advances in Economics and Econometrics: Theory and Applications,
  Eighth World Congress}, volume~2, chapter Endogeneity in nonparametric and
  semiparametric regression models, pages 312 -- 357.
\newblock Cambridge University Press.

\bibitem[Blundell and Powell, 2004]{blundell2004endogeneity}
Blundell, R.~W. and Powell, J.~L. (2004).
\newblock Endogeneity in semiparametric binary response models.
\newblock {\em The Review of Economic Studies}, 71(3):655--679.

\bibitem[Bonev, 2020]{bonev2020nonparametric}
Bonev, P. (2020).
\newblock {\em Nonparametric identification in nonseparable duration models
  with unobserved heterogeneity}.
\newblock School of Economics and Political Science, Department of
  Economics~….

\bibitem[Bonhomme, 2012]{Bonhomme_EM12}
Bonhomme, S. (2012).
\newblock Functional differencing.
\newblock {\em Econometrica}, 80(4):1337 -- 1385.

\bibitem[Bonhomme et~al., 2023]{bonhomme2023identification}
Bonhomme, S., Dano, K., and Graham, B.~S. (2023).
\newblock Identification in a binary choice panel data model with a
  predetermined covariate.
\newblock {\em SERIEs: Journal of the Spanish Economic Association},
  14(3-4):315 -- 351.

\bibitem[Botosaru et~al., 2024]{botosaru2024adversarialapproachidentification}
Botosaru, I., Loh, I., and Muris, C. (2024).
\newblock An adversarial approach to identification.

\bibitem[Brown and Newey, 1998]{brown1998efficient}
Brown, B.~W. and Newey, W.~K. (1998).
\newblock Efficient semiparametric estimation of expectations.
\newblock {\em Econometrica}, 66(2):453--464.

\bibitem[Buchinsky et~al., 2010]{Buchinsky_et_al_EL2010}
Buchinsky, M., Hahn, J., and Kim, K.~i. (2010).
\newblock Semiparametric information bound of dynamic discrete choice models.
\newblock {\em Economic Letters}, 108(2):109 -- 112.

\bibitem[Carrasco et~al., 2007]{carrasco2007linear}
Carrasco, M., Florens, J.-P., and Renault, E. (2007).
\newblock Linear inverse problems in structural econometrics estimation based
  on spectral decomposition and regularization.
\newblock {\em Handbook of econometrics}, 6:5633--5751.

\bibitem[Chamberlain, 1980]{Chamberlain_ReStud80}
Chamberlain, G. (1980).
\newblock Analysis of covariance with qualitative data.
\newblock {\em Review of Economic Studies}, 47(1):225 -- 238.

\bibitem[Chamberlain, 1982a]{Chamberlain_EM1982}
Chamberlain, G. (1982a).
\newblock The general equivalence of granger and sims causality.
\newblock {\em Econometrica}, 50(3):569 -- 581.

\bibitem[Chamberlain, 1982b]{Chamberlain_JE82}
Chamberlain, G. (1982b).
\newblock Multivariate regression models for panel data.
\newblock {\em Journal of Econometrics}, 18(1):5 -- 46.

\bibitem[Chamberlain, 1984]{Chamberlain_HBE84}
Chamberlain, G. (1984).
\newblock {\em Handbook of Econometrics}, volume~2, chapter Panel Data, pages
  1247 -- 1318.
\newblock North-Holland, Amsterdam.

\bibitem[Chamberlain, 1985]{chamberlain1985heterogeneity}
Chamberlain, G. (1985).
\newblock {\em Longitudinal Analysis of Labor Market Data}, chapter
  Heterogeneity, omitted variable bias, and duration dependence, pages 3 -- 38.
\newblock Cambridge University Press, Cambridge.

\bibitem[Chamberlain, 1986]{Chamberlain_JE1986}
Chamberlain, G. (1986).
\newblock Asymptotic efficiency in semi-parametric models with censoring.
\newblock {\em Journal of Econometrics}, 32(2):189 -- 218.

\bibitem[Chamberlain, 1992]{Chamberlain_JBES92}
Chamberlain, G. (1992).
\newblock Comment: sequential moment restrictions in panel data.
\newblock {\em Journal of Business and Economic Statistics}, 10(2):20 -- 26.

\bibitem[Chamberlain, 2010]{chamberlain2010binary}
Chamberlain, G. (2010).
\newblock Binary response models for panel data: Identification and
  information.
\newblock {\em Econometrica}, 78(1):159--168.

\bibitem[Chamberlain, 2022]{Chamberlain_JOE2022}
Chamberlain, G. (2022).
\newblock Feedback in panel data models.
\newblock {\em Journal of Econometrics}, 226(1):4 -- 20.

\bibitem[Chamberlain, 2023]{chamberlain2023identification}
Chamberlain, G. (2023).
\newblock Identification in dynamic binary choice models.
\newblock {\em SERIEs}, 14(3):247--251.

\bibitem[Chernozhukov et~al., 2013]{chernozhukov2013average}
Chernozhukov, V., Fern{\'a}ndez-Val, I., Hahn, J., and Newey, W. (2013).
\newblock Average and quantile effects in nonseparable panel models.
\newblock {\em Econometrica}, 81(2):535--580.

\bibitem[Chesher et~al., 2024]{chesher2024robust}
Chesher, A., Rosen, A.~M., and Zhang, Y. (2024).
\newblock Robust analysis of short panels.
\newblock {\em arXiv preprint arXiv:2401.06611}.

\bibitem[Dano, 2023]{dano2023transition}
Dano, K. (2023).
\newblock Transition probabilities and identifying moments in dynamic fixed
  effects logit models.
\newblock {\em arXiv preprint arXiv:2303.00083}.

\bibitem[Davezies et~al., 2021]{davezies2021identification}
Davezies, L., D'Haultfoeuille, X., and Laage, L. (2021).
\newblock Identification and estimation of average marginal effects in fixed
  effects logit models.
\newblock {\em arXiv preprint arXiv:2105.00879}.

\bibitem[Dobronyi et~al., 2021]{dobronyi2021identification}
Dobronyi, C., Gu, J., et~al. (2021).
\newblock Identification of dynamic panel logit models with fixed effects.
\newblock {\em arXiv preprint arXiv:2104.04590}.

\bibitem[Elbers and Ridder, 1982]{elbers1982true}
Elbers, C. and Ridder, G. (1982).
\newblock True and spurious duration dependence: The identifiability of the
  proportional hazard model.
\newblock {\em The Review of Economic Studies}, 49(3):403--409.

\bibitem[Engl et~al., 1996]{engl1996regularization}
Engl, H.~W., Hanke, M., and Neubauer, A. (1996).
\newblock {\em Regularization of inverse problems}, volume 375.
\newblock Springer Science \& Business Media.

\bibitem[Ghanem et~al., 2022]{ghanem2022selection}
Ghanem, D., Sant'Anna, P.~H., and W{\"u}thrich, K. (2022).
\newblock Selection and parallel trends.
\newblock {\em arXiv preprint arXiv:2203.09001}.

\bibitem[Graham et~al., 2012]{Graham_Pinto_Egel_ReStud12}
Graham, B.~S., Pinto, C., and Egel, D. (2012).
\newblock Inverse probability tilting for moment condition models with missing
  data.
\newblock {\em Review of Economic Studies}, 79(3):1053 -- 1079.

\bibitem[Graham and Powell, 2012]{Graham_Powell_EM12}
Graham, B.~S. and Powell, J.~L. (2012).
\newblock Identification and estimation of average partial effects in
  `irregular' correlated random coefficient panel data models.
\newblock {\em Econometrica}, 80(5):2105 -- 2152.

\bibitem[Granger, 1969]{Granger_EM1969}
Granger, C. W.~J. (1969).
\newblock Investigating causal relations by econometric models and
  cross-spectral methods.
\newblock {\em Econometrica}, 37(3):424 -- 438.

\bibitem[Hahn, 1994]{hahn1994efficiency}
Hahn, J. (1994).
\newblock The efficiency bound of the mixed proportional hazard model.
\newblock {\em The Review of Economic Studies}, 61(4):607--629.

\bibitem[Hahn, 1997]{Hahn_ET1997}
Hahn, J. (1997).
\newblock A note on the efficient semiparametric estimation of some exponential
  panel models.
\newblock {\em Econometric Theory}, 13(4):583 -- 588.

\bibitem[Heckman, 1991]{Heckman_AER1991}
Heckman, J.~J. (1991).
\newblock Identifying the hand of past: distinguishing state fependence from
  heterogeneity.
\newblock {\em American Economic Review}, 81(2):75 -- 79.

\bibitem[Heckman and Borjas, 1980]{heckman1980does}
Heckman, J.~J. and Borjas, G.~J. (1980).
\newblock Does unemployment cause future unemployment? definitions, questions
  and answers from a continuous time model of heterogeneity and state
  dependence.
\newblock {\em Economica}, 47(187):247--283.

\bibitem[Heckman and Singer, 1984]{Heckman_Singer_ReStud1984}
Heckman, J.~J. and Singer, B. (1984).
\newblock The identifiability of the proportional hazard model.
\newblock {\em Review of Economic Studies}, 51(2):231 -- 241.

\bibitem[Honor{\'e} and Hu, 2004]{honore2004estimation}
Honor{\'e}, B.~E. and Hu, L. (2004).
\newblock Estimation of cross sectional and panel data censored regression
  models with endogeneity.
\newblock {\em Journal of Econometrics}, 122(2):293--316.

\bibitem[Honor{\'e} and Kyriazidou, 2000]{honore2000panel}
Honor{\'e}, B.~E. and Kyriazidou, E. (2000).
\newblock Panel data discrete choice models with lagged dependent variables.
\newblock {\em Econometrica}, 68(4):839--874.

\bibitem[Honor{\'e} and Tamer, 2006]{honore2006bounds}
Honor{\'e}, B.~E. and Tamer, E. (2006).
\newblock Bounds on parameters in panel dynamic discrete choice models.
\newblock {\em Econometrica}, 74(3):611--629.

\bibitem[Honor{\'e} and Weidner, 2024]{honore2024moment}
Honor{\'e}, B.~E. and Weidner, M. (2024).
\newblock Moment conditions for dynamic panel logit models with fixed effects.
\newblock {\em Review of Economic Studies}, page rdae097.

\bibitem[Lancaster, 1979]{Lancaster_EM1979}
Lancaster, T. (1979).
\newblock Econometric methods for the duration of unemployment.
\newblock {\em Econometrica}, 47(4):939 -- 956.

\bibitem[Lancaster, 1990]{Lancaster_EATD1990}
Lancaster, T. (1990).
\newblock {\em The Econometric Analysis of Transition Data}.
\newblock Number~17 in Econometric Society Monographs. Cambridge University
  Press, Cambridge.

\bibitem[Lee, 2020]{lee2020identification}
Lee, W. (2020).
\newblock {\em Identification and estimation of dynamic random coefficient
  models}.
\newblock PhD thesis, The University of Chicago.

\bibitem[Lindsay, 1985]{lindsay1985using}
Lindsay, B.~G. (1985).
\newblock Using empirical partially bayes inference for increased efficiency.
\newblock {\em The Annals of Statistics}, pages 914--931.

\bibitem[Newey, 1990]{newey1990semiparametric}
Newey, W.~K. (1990).
\newblock Semiparametric efficiency bounds.
\newblock {\em Journal of applied econometrics}, 5(2):99--135.

\bibitem[Newey and McFadden, 1994]{newey1994large}
Newey, W.~K. and McFadden, D. (1994).
\newblock Large sample estimation and hypothesis testing.
\newblock {\em Handbook of econometrics}, 4:2111--2245.

\bibitem[Nickell, 1980]{Nickell_EM1979}
Nickell, S. (1980).
\newblock Estimating the probability of leaving unemployment.
\newblock {\em Econometrica}, 47(5):1249 -- 1266.

\bibitem[Olley and Pakes, 1996]{Olley_Pakes_EM96}
Olley, S. and Pakes, A. (1996).
\newblock The dynamics of productivity in the telecommunications equipment
  industry.
\newblock {\em Econometrica}, 64(6):1263--1297.

\bibitem[Pakel and Weidner, 2023]{pakel2023bounds}
Pakel, C. and Weidner, M. (2023).
\newblock Bounds on average effects in discrete choice panel data models.
\newblock {\em arXiv preprint arXiv:2309.09299}.

\bibitem[Ridder and Woutersen, 2003]{Ridder_Woutersen_EM2003}
Ridder, G. and Woutersen, T.~M. (2003).
\newblock The singularity of the information matrix of the mixed proportional
  hazards model.
\newblock {\em Econometrica}, 71(5):1579 -- 1589.

\bibitem[Robins, 1986]{robins1986new}
Robins, J. (1986).
\newblock A new approach to causal inference in mortality studies with a
  sustained exposure period—application to control of the healthy worker
  survivor effect.
\newblock {\em Mathematical modelling}, 7(9-12):1393--1512.

\bibitem[Robins, 2000]{robins2000marginal}
Robins, J.~M. (2000).
\newblock Marginal structural models versus structural nested models as tools
  for causal inference.
\newblock In {\em Statistical models in epidemiology, the environment, and
  clinical trials}, pages 95--133. Springer.

\bibitem[Robins et~al., 1994]{robins1994estimation}
Robins, J.~M., Rotnitzky, A., and Zhao, L.~P. (1994).
\newblock Estimation of regression coefficients when some regressors are not
  always observed.
\newblock {\em Journal of the American statistical Association},
  89(427):846--866.

\bibitem[Stefanski and Carroll, 1990]{stefanski1990deconvolving}
Stefanski, L.~A. and Carroll, R.~J. (1990).
\newblock Deconvolving kernel density estimators.
\newblock {\em Statistics}, 21(2):169--184.

\bibitem[van~der Laan and Robins, 2003]{VanDerLaan_Robin_Book2003}
van~der Laan, M.~J. and Robins, J.~M. (2003).
\newblock {\em United Methods for Censorted Longitudinal Data}.
\newblock Springer-Verlag, New York.

\bibitem[Van~der Vaart, 2000]{van2000asymptotic}
Van~der Vaart, A.~W. (2000).
\newblock {\em Asymptotic statistics}, volume~3.
\newblock Cambridge university press.

\bibitem[Windmeijer, 2008]{Windmeijer_EPD2008}
Windmeijer, F. (2008).
\newblock {\em The Econometrics of Panel Data}, volume~46 of {\em Advanced
  Studies in Theoretical and Applied Econometrics}, chapter GMM for panel data
  count models, pages 603 -- 624.
\newblock Springer, 3rd edition.

\bibitem[Wooldridge, 1997]{wooldridge1997multiplicative}
Wooldridge, J.~M. (1997).
\newblock Multiplicative panel data models without the strict exogeneity
  assumption.
\newblock {\em Econometric Theory}, 13(5):667--678.

\bibitem[Woutersen, 2000]{woutersen2000essays}
Woutersen, T.~M. (2000).
\newblock {\em Essays on the integrated hazard and orthogonality concepts}.
\newblock Brown University.

\end{thebibliography}
	}


	\appendix
	\renewcommand{\thesection}{\Alph{section}}

\section{Proofs of Main Results}

\subsection{Proof of Theorem \ref{theo_charact}}
\textbf{Part (A)}. Suppose that $\phi_{\theta}(Y^T,X^T)$ is absolutely integrable under DGP $(\theta,\omega)$. Then, by the law of iterated expectations, we have
\begin{align*}
    \mathbb{E}_{\theta,\omega}[\phi_{\theta}(Y^T,X^T)]&=\mathbb{E}_{\theta,\omega}\left[\mathbb{E}_{\theta,\omega}[\phi_{\theta}(Y^T,X^T)\,|\, Y^{T-1},X^{T-1},A]\right]\\
    &=\mathbb{E}_{\theta,\omega}\left[\int \phi_{\theta}(Y^{T-1},y_T,X^{T-1},x_T)f_{\theta}(y_T\,|\, Y_{t-1},x_T,A)g(x_T\,|\, Y^{T-1},X^{T-1},A)\mathrm{d}y_T\mathrm{d}x_T \right]\\
    &=\mathbb{E}_{\theta,\omega}\left[\int \phi_{\theta}(Y^{T-1},y_T,X^{T-1},x_T)f_{\theta}(y_T\,|\, Y_{T-1},x_T,A)\mathrm{d}y_T\right],	
\end{align*}
for any arbitrary $x_T$ value, where the last equality follows from \eqref{eq_phi_feedback_robust} for $s=T$. By successive applications of the law of iterated expectations and \eqref{eq_phi_feedback_robust} for $s=T-1,...,2$, we get
\begin{align*}
    \mathbb{E}_{\theta,\omega}[\phi_{\theta}(Y^T,X^T)]
    &=\mathbb{E}_{\theta,\omega}\left[\int \phi_{\theta}(Y_{0},y^{1:T},X_{1},x^{2:T})\prod_{t=2}^T f_{\theta}(y_{t}\,|\, y_{t-1},x_{t},A)f_{\theta}(y_{1}\,|\, Y_{0},X_{1},A)\mathrm{d}y^{1:T}\right]
\end{align*}
for any collection of regressor values $x^{2:T}$. Finally, using \eqref{eq_phi_hetero_robust} implies $\mathbb{E}_{\theta,\omega}[\phi_{\theta}(Y^T,X^T)]=0$.  \\

\noindent \textbf{Part (B).}
Suppose that, for all $\omega\in\Omega$ such that $\mathbb{E}_{\theta,\omega}\left[\abs{\phi_{\theta}(Y^T,X^T)}\right]<\infty$ we have $\mathbb{E}_{\theta,\omega}[\phi_{\theta}(Y^T,X^T)]=0$. Let $\eta\mapsto \omega_{\eta}$ denote a smooth path such that $\omega_{\eta^*}=\omega^*$ at some $\eta^*$. By (B)(ii), there exists a $\kappa$-ball around $\eta^*$, $B_{\kappa}(\eta^*)$,  such that for all $\eta \in B_{\kappa}(\eta^*)$, $\mathbb{E}_{\theta,\omega_{\eta}}\left[\abs{\phi_{\theta}(Y^T,X^T)}\right]<\infty$ and thus $\mathbb{E}_{\theta,\omega_{\eta}}[\phi_{\theta}(Y^T,X^T)]=0$. Next, by (B)(i) and (B)(ii), we can apply Lemma 5.4 in \cite{newey1994large}, and conclude that $\eta \mapsto \mathbb{E}_{\theta,\omega_{\eta}}\left[\phi_{\theta}(Y^T,X^T)\right]$ is differentiable at $\eta^{*}$ with derivative $\mathbb{E}_{\theta,\omega^*}[\phi_{\theta}(Y^T,X^T)S^{\eta}(Y^T,X^T)']$ where $S^{\eta}(Y^T,X^T)$ denotes the score at $\eta^*$. Hence, since $\mathbb{E}_{\theta,\omega_{\eta}}[\phi_{\theta}(Y^T,X^T)]=0$ for all $\eta \in B_{\kappa}(\eta^*)$, we have $\mathbb{E}_{\theta,\omega^*}[\phi_{\theta}(Y^T,X^T)S^{\eta}(Y^T,X^T)']=0$. Moreover, by differentiability in quadratic mean, $S^{\eta}$ is square-integrable under DGP $(\theta,\omega^*)$. Claim (B) then follows from Lemma \ref{lem_moment_perp_score}, whose proof is in Supplemental Appendix \ref{app: proof_key_lemma}. 

\begin{lemma} \label{lem_moment_perp_score}
 Suppose that $\mathbb{E}_{\theta,\omega^{*}}[\norm{\phi_{\theta}(Y^T,X^T)}^2]<\infty$ and $\mathbb{E}_{\theta,\omega^*}[\phi_{\theta}(Y^T,X^T)S^{\eta}(Y^T,X^T)']=0$ for all score functions $S^{\eta}$ of smooth parametric submodels. Then (\ref{eq_phi_hetero_robust})-(\ref{eq_phi_feedback_robust}) hold. 
\end{lemma}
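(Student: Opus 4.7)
The plan is to exploit the product structure of the likelihood \eqref{eq: feedback_complete_data_likelihood} to decompose the nuisance tangent set at $\omega^{*}$ into orthogonal pieces coming from perturbations of $\nu$, of $\pi(\cdot\mid y_{0},x_{1})$, and of each conditional feedback density $g_{s}(\cdot\mid y^{s-1},x^{s-1},a)$ for $s=2,\ldots,T$. Testing $\mathbb{E}_{\theta,\omega^{*}}[\phi_{\theta}\,S^{\eta}]=0$ against each family in turn extracts a specific conditional-mean restriction on $\phi_{\theta}$; stitched together and combined with the zero-mean property $\mathbb{E}_{\theta,\omega^{*}}[\phi_{\theta}]=0$ (inherited from the hypothesis of Part (B) of Theorem~\ref{theo_charact}), they will deliver \eqref{eq_phi_hetero_robust}--\eqref{eq_phi_feedback_robust}.

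First I would handle perturbations of $\pi$ and $\nu$. For a smooth path $\pi_{\eta}=\pi^{*}(1+\eta\,s_{\pi})$ with $\mathbb{E}_{\pi^{*}}[s_{\pi}(A,Y_{0},X_{1})\mid Y_{0},X_{1}]=0$, differentiating under the integral sign yields $S^{\eta}(y^{T},x^{T})=\mathbb{E}_{\theta,\omega^{*}}[s_{\pi}(A,Y_{0},X_{1})\mid Y^{T}=y^{T},X^{T}=x^{T}]$. Because $\phi_{\theta}$ is $\sigma(Y^{T},X^{T})$-measurable, orthogonality collapses by the tower property to $\mathbb{E}_{\theta,\omega^{*}}[\phi_{\theta}\,s_{\pi}(A,Y_{0},X_{1})]=0$, and letting $s_{\pi}$ range over all $(Y_{0},X_{1})$-conditionally mean-zero functions forces $\mathbb{E}_{\theta,\omega^{*}}[\phi_{\theta}\mid A,Y_{0},X_{1}]$ to be a function of $(Y_{0},X_{1})$ alone. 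An analogous argument with $\nu$-paths $\nu_{\eta}=\nu^{*}(1+\eta\,s_{\nu})$, $\mathbb{E}_{\nu^{*}}[s_{\nu}]=0$, collapses that function to the constant $\mathbb{E}_{\theta,\omega^{*}}[\phi_{\theta}]$, which is zero. Hence
\begin{equation*}
  \mathbb{E}_{\theta,\omega^{*}}[\phi_{\theta}(Y^{T},X^{T})\mid A,Y_{0},X_{1}]=0\quad\text{a.s.}
\end{equation*}

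Second I would handle the $g$-perturbations at each period. A path $g_{s,\eta}=g_{s}^{*}(1+\eta\,t_{s})$ with $\mathbb{E}[t_{s}(X_{s},Y^{s-1},X^{s-1},A)\mid Y^{s-1},X^{s-1},A]=0$ produces score $\mathbb{E}_{\theta,\omega^{*}}[t_{s}\mid Y^{T},X^{T}]$, so orthogonality reduces to $\mathbb{E}[\phi_{\theta}\,t_{s}]=0$ for all such $t_{s}$, which is equivalent to $\mathbb{E}_{\theta,\omega^{*}}[\phi_{\theta}\mid X^{s},Y^{s-1},A]$ being $X_{s}$-invariant. I then translate these $T-1$ statements into \eqref{eq_phi_hetero_robust}--\eqref{eq_phi_feedback_robust} by downward induction on $s$. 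At $s=T$ the parametric assumption \eqref{eq: parametric_part_of_likelihood_feedback} makes the conditional density of $Y_{T}$ given $(Y^{T-1},X^{T},A)$ exactly $f_{\theta}(y_{T}\mid y_{T-1},x_{T},a)$, so $\mathbb{E}[\phi_{\theta}\mid X^{T},Y^{T-1},A]=\int\phi_{\theta}\,f_{\theta}(y_{T}\mid y_{T-1},x_{T},a)\,\mathrm{d}y_{T}$, and $X_{T}$-invariance is precisely \eqref{eq_phi_feedback_robust} at $s=T$. For the inductive step, once the integrand $\int\phi_{\theta}\prod_{t=s+1}^{T}f_{\theta}\,\mathrm{d}y^{s+1:T}$ has been shown to be $x^{s+1:T}$-invariant, I can pull it through the factors $\prod_{t=s+1}^{T}g_{t}^{*}$---each of which integrates to unity in $\mathrm{d}x_{t}$---so that the conditional expectation collapses to $\int\phi_{\theta}\prod_{t=s}^{T}f_{\theta}\,\mathrm{d}y^{s:T}$, whose $X_{s}$-invariance is \eqref{eq_phi_feedback_robust} at $s$. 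A final collapse of the same type, combined with the conclusion of the previous paragraph, turns $\mathbb{E}[\phi_{\theta}\mid A,Y_{0},X_{1}]=0$ into $\int\phi_{\theta}\prod_{t=1}^{T}f_{\theta}\,\mathrm{d}y^{1:T}=0$, i.e., \eqref{eq_phi_hetero_robust}.

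The main technical obstacle I anticipate is the inductive pull-through: verifying that the $x_{s+1:T}$-invariance inherited from later periods is genuinely preserved under the remaining integrations against $g_{t}^{*}$---rather than being disturbed by the dependence of each $g_{t}^{*}$ on past outcomes---requires careful bookkeeping inside the iterated expectation. A secondary, measure-theoretic point is that the conditional-expectation identities are initially obtained only on the joint support of $(A,Y_{0},X^{T})$ under $\omega^{*}$; the assumption from Section~\ref{sec: model} that all $\omega\in\Omega$ share a common, known support is what permits reading the conclusions as the pointwise statements indicated in \eqref{eq_phi_hetero_robust}--\eqref{eq_phi_feedback_robust}.
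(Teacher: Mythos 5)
Your proposal is correct and follows essentially the same route as the paper's proof: decompose the nuisance tangent set into the $\pi$-, $\nu$-, and period-by-period $g$-components, exploit that each score family exhausts all conditionally mean-zero square-integrable functions to extract the corresponding conditional-mean restriction on $\phi_{\theta}$, and then convert these into \eqref{eq_phi_hetero_robust}--\eqref{eq_phi_feedback_robust} by downward induction on $s$, using the later-period invariance to integrate out the feedback densities. The only difference is presentational (you state the $\pi$/$\nu$ orthogonality conditions before the $g$ ones, whereas the paper processes the feedback scores first), and you correctly recognize that the translation step must still apply the feedback-robustness conditions before the heterogeneity condition.
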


\subsection{Proof of Corollary \ref{coro_charact}}

   Assume \eqref{eq_phi_coro}. Then, since $\sum_{t=1}^{T-1}\psi_{\theta,t}(y^t,x^{t},a)$ does not depend on $x_T$, (\ref{eq_phi_feedback_robust}) holds for $s=T$. Next, integrating (\ref{eq_phi_coro}) with respect to $f_{\theta}(y_{T-1}\,|\, y_{T-2},x_{T-1},a)$ implies that
   \begin{align*}
       \int \phi_{\theta}(y^T,x^T) f_{\theta}(y_T\,|\, y_{T-1},x_{T-1},a)f_{\theta}(y_{T-1}\,|\, y_{T-2},x_{T-1},a)\mathrm{d}y^{T-1:T}=\sum_{t=1}^{T-2}\psi_{\theta,t}(y^t,x^{t},a),
   \end{align*}
   where we have used the fact that $\mathbb{E}\left[\psi_{\theta,T-1}(Y^{T-1},X^{T-1},A)\,|\,Y^{T-2},X^{T-1},A\right]=0$.  In particular, this implies  \eqref{eq_phi_feedback_robust} for $s=T-1$. Further integrating with respect to $f_{\theta}(y_{T-2}\,|\, y_{T-3},x_{T-2},a)$ then yields
    \begin{align*}
       \int  \phi_{\theta}(y^T,x^T)\prod_{t=T-2}^Tf_{\theta}(y_{t}\,|\, y_{t-1},x_{t},a)\mathrm{d}y^{T-2:T}=\sum_{t=1}^{T-3}\psi_{\theta,t}(y^t,x^{t},a),
   \end{align*}
   since $\mathbb{E}\left[\psi_{\theta,T-2}(Y^{T-2},X^{T-2},A)\,|\,Y^{T-3},X^{T-2},A\right]=0$. This implies (\ref{eq_phi_feedback_robust}) for $s=T-2$. Continuing this reasoning, we easily conclude that for $s=2,\ldots,T$,
      \begin{align*}
       \int  \phi_{\theta}(y^T,x^T)\prod_{t=s}^Tf_{\theta}(y_{t}\,|\, y_{t-1},x_{t},a)\mathrm{d}y^{s:T}=\sum_{t=1}^{s-1}\psi_{\theta,t}(y^t,x^{t},a),
   \end{align*}
   implying (\ref{eq_phi_feedback_robust}) for $s=2,\ldots,T$. Finally, since $\mathbb{E}\left[\psi_{\theta,1}(Y_0,Y_{1},X_{1},A)\,|\,Y_0,X_{1},A\right]=0$, integrating the identity 
   \begin{align*}
       \int  \phi_{\theta}(y^T,x^T)\prod_{t=2}^Tf_{\theta}(y_{t}\,|\, y_{t-1},x_{t},a)\mathrm{d}y^{2:T}=\psi_{\theta,1}(y_0,y_{1},x_{1},a)
   \end{align*}
   with respect to $f_{\theta}(y_{1}\,|\, y_{0},x_{1},a)$ yields (\ref{eq_phi_hetero_robust}). \\

 \noindent Conversely, suppose that (\ref{eq_phi_hetero_robust}) and \eqref{eq_phi_feedback_robust} hold. Equation \eqref{eq_phi_feedback_robust} for $s=T$ implies that we can write:
   \begin{align}
		&\int  \phi_{\theta}(y^T,x^T)f_{\theta}(y_{T}\,|\, y_{T-1},x_{T},a)\mathrm{d}y_{T}=\overline{\phi}_{\theta,T-1}\left(y^{T-1},x^{T-1},a\right),
\end{align}
for some function $\overline{\phi}_{\theta,T-1}\left(y^{T-1},x^{T-1},a\right)$ that does not depend on $x_T$. Then, \eqref{eq_phi_feedback_robust} for $s=T-1$ entails that:
\begin{align*}
    \mathbb{E}\left[\overline{\phi}_{\theta,T-1}\left(Y^{T-1},X^{T-1},A\right)\,|\,Y^{T-2},X^{T-1},A\right]=\overline{\phi}_{\theta,T-2}(Y^{T-2},X^{T-2},A)
\end{align*}
for some function $\overline{\phi}_{\theta,T-2}(Y^{T-2},X^{T-2},A)$ that does not depend on $X^{T-1:T}$. Equivalently, we can write:
\begin{align*}
    \overline{\phi}_{\theta,T-1}(Y^{T-1},X^{T-1},A)=\overline{\phi}_{\theta,T-2}(Y^{T-2},X^{T-2},A)+\psi_{\theta,T-1}(Y^{T-1},X^{T-1},A),
\end{align*}
with $\mathbb{E}\left[\psi_{\theta,T-1}(Y^{T-1},X^{T-1},A)\,|\,Y^{T-2},X^{T-1},A\right]=0$. Next, \eqref{eq_phi_feedback_robust} for $s=T-2$ implies that: 
\begin{align*}
    \overline{\phi}_{\theta,T-2}(Y^{T-2},X^{T-2},A)=\overline{\phi}_{\theta,T-3}(Y^{T-3},X^{T-3},A)+\psi_{\theta,T-2}(Y^{T-2},X^{T-2},A),
\end{align*}
for some function $\overline{\phi}_{\theta,T-3}(Y^{T-3},X^{T-3},A)$ that does not depend on $X^{T-2:T}$, with $\mathbb{E}\left[\psi_{\theta,T-2}(Y^{T-2},X^{T-2},A)\,|\,Y^{T-3},X^{T-2},A\right]=0$. Continuing this argument based on restriction \eqref{eq_phi_feedback_robust}, we conclude that, for $s=2,\ldots,T-1$,
\begin{align*}
    \overline{\phi}_{\theta,s}(Y^{s},X^{s},A)=\overline{\phi}_{\theta,s-1}(Y^{s-1},X^{s-1},A)+\psi_{\theta,s}(Y^{s},X^{s},A),
\end{align*}
such that  $\mathbb{E}\left[\psi_{\theta,s}(Y^{s},X^{s},A)\,|\,Y^{s-1},X^{s},A\right]=0$. Furthermore,  
\begin{align*}
    \overline{\phi}_{\theta,1}(Y_0,Y_{1},X_{1},A)=\psi_{\theta,1}(Y_0,Y_{1},X_{1},A),
\end{align*}
with $\mathbb{E}\left[\psi_{\theta,1}(Y_0,Y_{1},X_{1},A)\,|\,Y_0,X_1,A\right]=0$ by (\ref{eq_phi_hetero_robust}). Collecting terms, we have shown that 
   \begin{align*}
		&\int  \phi_{\theta}(y^T,x^T)f_{\theta}(y_{T}\,|\, y_{T-1},x_{T},a)\mathrm{d}y_{T}=\overline{\phi}_{\theta,T-1}(y^{T-1},x^{T-1},a)=\sum_{t=1}^{T-1} \psi_{\theta,t}(y^{t},x^{t},a),
\end{align*}
with the property that $\mathbb{E}\left[\psi_{\theta,t}(Y^{t},X^{t},A)\,|\,Y^{t-1},X^t,A\right]=0$ for $t=1,\ldots,T-1$. Hence \eqref{eq_phi_coro}.

\subsection{Proof of Theorem \ref{theo_charact_aveff}}
Parts (A) and (B) follow from replacing $\phi_{\theta}(y^T,x^T)$ by $\varphi_{\theta}(y^T,x^T)-h_{\theta}(y^T,x^T,a)$ in the proof of Theorem \ref{theo_charact}, and from applying the triangle inequality to $\varphi_{\theta}(y^T,x^T)-h_{\theta}(y^T,x^T,a)$.


\subsection{Proof of Corollary \ref{coro_charact_aveff}}
	
The proof is the same as the proof of Corollary \ref{coro_charact}, except for the fact that $\phi_{\theta}(y^T,x^T)$ is replaced by $\varphi_{\theta}(y^T,x^T)-h_{\theta}(y^T,x^T,a)$.

\subsection{Proof of Theorem \ref{theo_eff}}

Let $\ell(\theta,\omega_{\eta}|y^T,x^{T})$ denote a smooth parametric submodel, where $\omega_{\eta}=(g_{\eta},\pi_{\eta},\nu_{\eta})$ and  $\omega_{\eta_0}=\omega_0$ for some scalar $\eta_0$. Following the logic of Lemma \ref{lem_moment_perp_score}, but now using $\eta_{0}$ in lieu of $\eta^{*}$, each submodel yields a $(T+1)$ dimensional score vector
\begin{align*}
    S^{\eta}(y^T,x^T)=\left(S^{\eta,\pi}(y^T,x^T),S^{\eta,g,2}(y^T,x^T),\ldots,S^{\eta,g,T}(y^T,x^T),S^{\eta,\nu}(y^T,x^T)\right)'
\end{align*}
where the first component $S^{\eta,\pi}(y^T,x^T)$ corresponds to the heterogeneity component; the next $T-1$ components $S^{\eta,g,t}, t=2\ldots, T$, are the scores for the feedback process at each period; and the final term $S^{\eta,\nu}(y^T,x^T)$ is the score for the initial condition. \\
By definition, the nonparametric tangent set $\mathcal{T}_{\theta_0,\omega_0,K}$ is the mean-square closure of elements $AS^{\eta}$, where $A$ is a constant $K\times (T+1)$ matrix. Its orthocomplement is
\begin{align*}
    \mathcal{T}_{\theta_0,\omega_0,K}^{\perp}=\left\{\phi\in \mathbb{R}^K\,|\, \mathbb{E}[\phi]=0, \mathbb{E}[ \phi'\phi ]<\infty \text{ with } \mathbb{E}\left[\phi's\right]=0, \quad \text{for all } s\in \mathcal{T}_{\theta_0,\omega_0,K}\right\},
\end{align*}
or, equivalently,
\begin{align*}
    \mathcal{T}_{\theta_0,\omega_0,K}^{\perp}&=\left\{\phi\in \mathbb{R}^K\,|\, \mathbb{E}[\phi]=0, \mathbb{E}[\phi' \phi]<\infty \text{ with } \mathbb{E}\left[\phi S^{\eta\prime}\right]=0, \right.\\
    &\left.\text{ for all scores } S^{\eta} \text{ of smooth parametric submodels}\right\},
\end{align*}
by Lemma A.1 in \cite{newey1990semiparametric}. We have $\phi_{\theta_0}\in \mathcal{T}_{\theta_0,\omega_0,K}^{\perp}$ if and only if $\mathbb{E}\left[\phi_{\theta_0}(y^T,x^T)S^{\eta}(y^T,x^T)'\right]=0$. Thus, by Lemma \ref{lem_moment_perp_score}, we conclude that $\mathcal{T}_{\theta_0,\omega_0,K}^{\perp}$ consists of the set of functions $\phi_{\theta_0}\in \mathbb{R}^K$ such that each component satisfies conditions \eqref{eq_phi_hetero_robust} and \eqref{eq_phi_feedback_robust} of Theorem \ref{theo_charact}.

\subsection{Proof of Theorem \ref{theo_working_model}}

By construction, $\widetilde{\phi}_{\theta_0,\widetilde{\omega}}^{\rm eff}(Y^T,X^T)=\widetilde{\Pi}\left(\widetilde{S}^{\theta}(Y^T,X^T)\,|\, \mathcal{T}_{\theta_0,\widetilde{\omega},K}^{\perp}\right)$ is an element of $\mathcal{T}_{\theta_0,\widetilde{\omega},K}^{\perp}$. Thus, it follows from Theorem \ref{theo_eff} that $\widetilde{\phi}_{\theta_0,\widetilde{\omega}}^{\rm eff}$ satisfies conditions \eqref{eq_phi_hetero_robust} and \eqref{eq_phi_feedback_robust}.  The moment restriction \eqref{def_loceff_score_est1} is then a consequence of $\widetilde{\phi}_{\theta_0,\widetilde{\omega}}^{\rm eff}$ being absolutely integrable under DGP $(\theta_0,\omega_0)$ and  Theorem \ref{theo_charact}. By the same argument, 
$\mathbb{E}_{\theta_0,\omega_0}\left[\widetilde{\Pi}(\varphi_{\theta_0}(Y^T,X^T)\,|\, \mathcal{T}_{\theta_0,\widetilde{\omega},L}^{\perp})\right]=0$. The moment condition \eqref{def_loceff_score_est2} then follows from the definition of $\widetilde{\varphi}^{\rm eff}_{\theta_0,\widetilde{\omega}}$ in \eqref{def_loceff_score_mu} and \eqref{eq_varphi}. Lastly, when $\widetilde\omega=\omega_0$ we have $\widetilde{\phi}_{\theta_0,\widetilde{\omega}}^{\rm eff}={\phi}_{\theta_0,{\omega}_0}^{\rm eff}$ and $\widetilde{\varphi}_{\theta_0,\widetilde{\omega}}^{\rm eff}={\varphi}_{\theta_0,{\omega}_0}^{\rm eff}$, yielding the efficient moment functions for $\theta_0$ and $\mu_0$, respectively.

\clearpage

\begin{center}
\textbf{\LARGE ONLINE SUPPLEMENTAL MATERIAL}
\end{center}

\section{Proofs of auxiliary results}

\subsection{Proof of Lemma \ref{lem_moment_perp_score}} \label{app: proof_key_lemma}

Let $\ell(\theta,\omega_{\eta}| y^T,x^{T})$ denote a smooth parametric submodel where $\omega_{\eta}=(g_{\eta},\pi_{\eta},\nu_{\eta})$ and  $\omega_{\eta^*}=\omega^*$. We have $(T+1)$ types of scores associated with submodels: one associated with the initial condition density, $\nu_{\eta}$, one associated with the heterogeneity distribution, $\pi_{\eta}$, and $T-1$ scores associated with the feedback processes for $X_2,\ldots,X_T$, $g_{\eta}$. We begin with the heterogeneity component, which yields scores of the form: 

$$S^{\eta,\pi}(y^T,x^T)=\frac{\int \nabla_{\eta}\ln \pi^*(a\,|\, y_0,x_1) p^*(y^{1:T},x^{2:T},a)\mathrm{d}a}{\int p^*(y^T,x^T,a)da},$$
where 
\begin{align*}
    &p^*(y^T,x^T,a)=\prod_{t=1}^Tf_{\theta}(y_{t}\,|\, y_{t-1},x_{t},a)\prod_{t=2}^Tg^*(x_{t}\,|\, y^{t-1},x^{t-1},a)\pi^*(a\,|\, y_{0},x_{1}),
\end{align*}
and $ \nabla_{\eta}\ln \pi^*$ denotes the score of the submodel $\eta\mapsto \pi_{\eta}$ at $\omega^*$.\\
Next we consider the scores for the $(T-1)$ feedback components:
$$S^{\eta,g,t}(y^T,x^T)=\frac{\int \nabla_{\eta}\ln g^*(x_t\,|\, y^{t-1},x^{t-1},a) p^*(y^T,x^T,a)\mathrm{d}a}{\int p^*(y^T,x^T,a)da},\quad t=2,...,T.$$
Finally, for the initial condition, the scores take the form:
$$S^{\eta,\nu}(y^T,x^T)=\nabla_{\eta}\ln \nu^*(y_0,x_1).$$
Let $S^\eta(y^T,x^T)=\left(S^{\eta,\pi}(y^T,x^T),S^{\eta,g,2}(y^T,x^T),\ldots,S^{\eta,g,T}(y^T,x^T),S^{\eta,\nu}(y^T,x^T)\right)'$. Without loss of generality, suppose $\phi_{\theta}$ is scalar. By assumption, we have
\begin{align} \label{tauperp_hetero}
    0&=\mathbb{E}_{\theta,\omega^*}\left[\phi_{\theta}\left(Y^T,X^T\right)S^{\eta,\pi}(Y^T,X^T)\right] \\
    &=\int \phi_{\theta}(y^T,x^T)\nabla_{\eta}\ln \pi^*(a\,|\, y_0,x_1) p^*(y^T,x^T,a)\nu^*(y_0,x_1)\mathrm{d}a\mathrm{d}y^T\mathrm{d}x^T,  \notag
\end{align}
and, for $ t=2,\ldots,T$,
\begin{align} \label{tauperp_feedback}
    0&=\mathbb{E}_{\theta,\omega^*}\left[\phi_{\theta}(Y^T,X^T)S^{\eta, g,t}(Y^T,X^T)\right] \\
    &=\int \phi_{\theta}(y^T,x^T)\nabla_{\eta}\ln g^*(x_t\,|\, y^{t-1},x^{t-1},a) p^*(y^T,x^T,a)\nu^*(y_0,x_1)\mathrm{d}a\mathrm{d}y^T\mathrm{d}x^T, \notag
\end{align}
and 
\begin{align} \label{tauperp_initcond}
    0&=\mathbb{E}_{\theta,\omega^*}\left[\phi_{\theta}(Y^T,X^T)S^{\eta,\nu}(Y^T,X^T)\right] \\
    &=\int \phi_{\theta}(y^T,x^T) \nabla_{\eta}\ln \nu^*(y_0,x_1)  p^*(y^T,x^T,a)\nu^*(y_0,x_1)\mathrm{d}a\mathrm{d}y^T\mathrm{d}x^T. \notag
\end{align}
Note that equation \eqref{tauperp_feedback} for $t=T$, after integrating over $y_T$, coincides with:
\begin{align*}
    0&=\int \nabla_{\eta}\ln g^*(x_T\,|\, y^{T-1},x^{T-1},a)\mathbb{E}_{\theta,\omega^*}\left[\phi_{\theta}(Y^T,X^T)|Y^{T-1}=y^{T-1},X^T=x^T,A=a\right] \\
    &\quad \times \prod_{t=1}^{T-1}f_{\theta}(y_{t}\,|\, y_{t-1},x_{t},a)\prod_{t=2}^Tg^*(x_{t}\,|\, y^{t-1},x^{t-1},a)\pi^*(a\,|\, y_{0},x_{1})\nu^*(y_0,x_1)\mathrm{d}a \mathrm{d}y^{T-1}\mathrm{d}x^T.
\end{align*}
Now, since $\nabla_{\eta}\ln g^*(x_T\,|\, y^{T-1},x^{T-1},a)$ is unrestricted except for the fact that it has zero mean conditional on $(y^{T-1},x^{T-1},a)$ and is square-integrable, we can specifically choose
\begin{align*}
    \nabla_{\eta}\ln g^*(x_T\,|\, y^{T-1},x^{T-1},a)&=\mathbb{E}_{\theta,\omega^*}\left[\phi_{\theta}(Y^T,X^T)|Y^{T-1}=y^{T-1},X^T=x^T,A=a\right] \\
    &\quad -\mathbb{E}_{\theta,\omega^*}\left[\phi_{\theta}(Y^T,X^T)|Y^{T-1}=y^{T-1},X^{T-1}=x^{T-1},A=a\right],
\end{align*}
in which case \eqref{tauperp_feedback} (for $t=T$) evaluates to (for $\mathbb{V}_{\theta,\omega^*}$ the variance under $(\theta,\omega^*)$):
\begin{align*}
    \int \mathbb{V}_{\theta,\omega^*}\left(\mathbb{E}_{\theta,\omega^*}\left[\phi_{\theta}(Y^T,X^T)|Y^{T-1}=y^{T-1},X^T=x^T,A=a\right]|Y^{T-1}=y^{T-1},X^{T-1}=x^{T-1},A=a \right) \\
    \times \prod_{t=1}^{T-1}f_{\theta}(y_{t}\,|\, y_{t-1},x_{t},a)\prod_{t=2}^Tg^*(x_{t}\,|\, y^{t-1},x^{t-1},a)\pi^*(a\,|\, y_{0},x_{1})\nu^*(y_0,x_1)\mathrm{d}a\mathrm{d}y^{T-1}\mathrm{d}x^T=0,
\end{align*}
which holds if, and only if, $\mathbb{E}_{\theta,\omega^*}\left[\phi_{\theta}(Y^T,X^T)|Y^{T-1}=y^{T-1},X^T=x^T,A=a\right]$ does not depend on $x_T$.  Conversely, if $\mathbb{E}_{\theta,\omega^*}\left[\phi_{\theta}(Y^T,X^T)|Y^{T-1}=y^{T-1},X^T=x^T,A=a\right]$ does not depend on $x_T$, it is clear that  equation \eqref{tauperp_feedback} for $t=T$ is satisfied since the elements $\nabla_{\eta}\ln g^*(x_T\,|\, y^{T-1},x^{T-1},a)$ have zero mean conditional on $(y^{T-1},x^{T-1},a)$. Therefore,  \eqref{tauperp_feedback} for $t=T$ is equivalent to the requirement that $\mathbb{E}_{\theta,\omega^*}\left[\phi_{\theta}(Y^T,X^T)|Y^{T-1}=y^{T-1},X^T=x^T,A=a\right]$ does not depend on $x_T$.\\
Next, consider \eqref{tauperp_feedback} for $t=T-1$. Exploiting the result above, we can integrate over $y_{T-1:T}$ and $x_T$ to get:
\begin{align*}
    0&=\int \nabla_{\eta}\ln g^*(x_{T-1}\,|\, y^{T-2},x^{T-2},a)\mathbb{E}_{\theta,\omega^*}\left[\phi_{\theta}(Y^T,X^T)|Y^{T-2}=y^{T-2},X^{T-1}=x^{T-1},A=a\right] \\
    &\quad \times \prod_{t=1}^{T-2}f_{\theta}(y_{t}\,|\, y_{t-1},x_{t},a)\prod_{t=2}^{T-1}g^*(x_{t}\,|\, y^{t-1},x^{t-1},a)\pi^*(a\,|\, y_{0},x_{1})\nu^*(y_0,x_1)\mathrm{d}a\mathrm{d}y^{T-2}\mathrm{d}x^{T-1}.
\end{align*}
Observe that the lack of restrictions on $\nabla_{\eta}\ln g^*(x_{T-1}\,|\, y^{T-2},x^{T-2},a)$, besides it being mean zero conditional on $(y^{T-2},x^{T-2},a)$ and being square-integrable, implies, by a logic analogous to that used for the $t=T$ case, that \eqref{tauperp_feedback} for $t=T-1$ is equivalent to the requirement that
\begin{align*}&\mathbb{E}_{\theta,\omega^*}\left[\phi_{\theta}(Y^T,X^T)|Y^{T-1}=y^{T-1},X^{T-1}=x^{T-1},A=a\right]\\&\quad =\int \phi_{\theta}(y^T,x^T)\prod_{t=T-1}^{T}f_{\theta}(y_{t}\,|\, y_{t-1},x_{t},a)\mathrm{d}y^{T-1:T}\end{align*}
does not depend on $x^{T-1:T}$. By inductive reasoning, we conclude that, for all $s=2,\ldots,T$,
\begin{align} \label{tauperp_feedback_bis}
        \int  \phi_{\theta}(y^T,x^T)\prod_{t=s}^Tf_{\theta}(y_{t}\,|\, y_{t-1},x_{t},a)\mathrm{d}y^{s:T} \text{ does not depend on }x^{s:T}.
\end{align}
That is, $\phi_{\theta}$ satisfies \eqref{eq_phi_feedback_robust}.\\
Next, consider equation \eqref{tauperp_hetero}. Using the results immediately above we get, after integrating over $y^{1:T}$ and $x^{2:T}$:
\begin{align*}
\int \nabla_{\eta}\ln \pi^*(a\,|\, y_0,x_1)\mathbb{E}_{\theta,\omega^*}\left[\phi_{\theta}(Y^T,X^T)|Y_0=y_0,X_1=x_1,A=a\right] \pi^*(a\,|\, y_0,x_1)\nu^*(y_0,x_1)\mathrm{d}a\mathrm{d}y_{0}\mathrm{d}x_{1}=0, 
\end{align*}
and since $\nabla_{\eta}\ln \pi^*(a\,|\, y_0,x_1)$ is unrestricted (beyond having zero mean conditional on $(y_0,x_1)$ and being square-integrable), choosing specifically
\begin{align*}
    \nabla_{\eta}\ln \pi^*(a\,|\, y_0,x_1)&=\mathbb{E}_{\theta,\omega^*}\left[\phi_{\theta}(Y^T,X^T)|Y_0=y_0,X_1=x_1,A=a\right]\\&\quad-\mathbb{E}_{\theta,\omega^*}\left[\phi_{\theta}(Y^T,X^T)|Y_0=y_0,X_1=x_1\right]
\end{align*}
implies that
\begin{align*}
&\int \mathbb{V}_{\theta,\omega^*}\left(\mathbb{E}_{\theta,\omega^*}\left[\phi_{\theta}(Y^T,X^T)|Y_0=y_0,X_1=x_1,A=a\right]|Y_0=y_0,X_1=x_1\right)\\
&\quad\quad\quad  \times\pi^*(a\,|\, y_0,x_1)\nu^*(y_0,x_1)\mathrm{d}a\mathrm{d}y_{0}\mathrm{d}x_{1}=0. 
\end{align*}
Thus, $\mathbb{E}_{\theta,\omega^*}\left[\phi_{\theta}(Y^T,X^T)|Y_0=y_0,X_1=x_1,A=a\right]$ does not depend on $a$. 
Conversely, if $\mathbb{E}_{\theta,\omega^*}\left[\phi_{\theta}(Y^T,X^T)|Y_0=y_0,X_1=x_1,A=a\right]$ does not depend on $a$, then \eqref{tauperp_hetero} also holds since $\nabla_{\eta}\ln \pi^*(a\,|\, y_0,x_1)$ has mean-zero conditional on $(y_0,x_1)$. We therefore conclude that \eqref{tauperp_hetero} is equivalent to
\begin{align*}
    &\mathbb{E}_{\theta,\omega^*}\left[\phi_{\theta}(Y^T,X^T)|Y_0=y_0,X_1=x_1,A=a\right] \\
    &=\int \phi_{\theta}(y^T,x^T)\prod_{t=1}^Tf_{\theta}(y_{t}\,|\, y_{t-1},x_{t},a)\prod_{t=2}^Tg^*(x_{t}\,|\, y^{t-1},x^{t-1},a)\mathrm{d}y^{1:T}\mathrm{d}x^{2:T} \\
    &=\int \phi_{\theta}(y^T,x^T)\prod_{t=1}^Tf_{\theta}(y_{t}\,|\, y_{t-1},x_{t},a)\mathrm{d}y^{1:T} \text{ (by \eqref{tauperp_feedback_bis} for $s=2$)}
\end{align*}
is a constant independent of $a$.  Hence, we can write
\begin{align}\label{tauperp_hetero_bis}
        \int \phi_{\theta}(y^T,x^T)\prod_{t=1}^Tf_{\theta}(y_{t}\,|\, y_{t-1},x_{t},a)\mathrm{d}y^{1:T}=\mathbb{E}_{\theta,\omega^*}\left[\phi_{\theta}(Y^T,X^T)|Y_0=y_0,X_1=x_1\right].
\end{align}
Next, we also note that \eqref{tauperp_initcond} can simply be rewritten as:
\begin{align*}
    0=\int \mathbb{E}_{\theta,\omega^*}\left[\phi_{\theta}(Y^T,X^T)|Y_0=y_0,X_1=x_1\right] \nabla_{\eta}\ln \nu^*(y_0,x_1) \nu^*(y_0,x_1)\mathrm{d}y_0\mathrm{d}x_1.
\end{align*}
Since $\nabla_{\eta}\ln \nu^*(y_0,x_1)$ is unrestricted besides being mean-zero and square-integrable, a similar reasoning to before implies that  \eqref{tauperp_initcond} is equivalent to $\mathbb{E}_{\theta,\omega^*}\left[\phi_{\theta}(Y^T,X^T)|Y_0=y_0,X_1=x_1\right]=\mathbb{E}_{\theta,\omega^*}\left[\phi_{\theta}(Y^T,X^T)\right]=0$. This result in combination with \eqref{tauperp_hetero_bis} implies that $\phi_{\theta}$ satisfies \eqref{eq_phi_hetero_robust}.

\subsection{Proof of Lemma \ref{lem_moment_charac_MPH}} \label{app: proof_mainlemma_MPH}
Recall the bijective transformation between $(y_{1},\ldots,y_{T})$
and $(p_{1},\ldots,p_{T})$ given by 
\begin{align*}
p_{t}=\Lambda_{\alpha}(y_{t})e^{\gamma y_{t-1}+x_{t}'\beta},\quad t=1,\ldots,T,
\end{align*}
as well as the second (bijective) transformation between $(p_{1},\ldots,p_{T})$
and $(\widetilde{p}_{1},\ldots,\widetilde{p}_{T-1},\overline{p})$
given by 
\begin{align*}
\widetilde{p}_{t} & =\frac{p_{t}}{\sum_{s=t}^{T}p_{s}},\quad t=1,\ldots,T-1,\\
\overline{p} & =\sum_{t=1}^{T}p_{t}.
\end{align*}
The distribution of $\left(P_{1},\ldots,P_{T}\right)$ is stated in
Lemma \ref{lem: P1_PT_are_exponential} and that of $(\widetilde{P}_{1},\ldots,\widetilde{P}_{T-1},\overline{P})$
in Lemma \ref{lem: helmert_tranformation_MPH}.

From these two bijections we get the following two equivalent representations
of $\phi_{\theta}$: 
\begin{align}
\xi_{\theta}(y_{0},p^{T},x^{T})= & \phi_{\theta}\left(y_{0},\Lambda_{\alpha}^{-1}(p_{1}\exp(-\gamma y_{0}-x_{1}'\beta)),\right.\notag \\
& \left.\ldots,\Lambda_{\alpha}^{-1}(p_{T}\exp(-\gamma y_{T-1}-x_{T}'\beta)),x^{T}\right)\label{eq: phi_in_terms_of_rho}\\
\psi_{\theta}(y_{0},\widetilde{p}^{T-1},\overline{p},x^{T})= & \xi_{\theta}(y_{0},\widetilde{p}_{1}\overline{p},(1-\widetilde{p}_{1})\widetilde{p}_{2}\overline{p},\ldots,\prod_{s=1}^{T-2}(1-\widetilde{p}_{s})\widetilde{p}_{T-1}\overline{p},\prod_{s=1}^{T-1}(1-\widetilde{p}_{s})\overline{p},x^{T}).\label{eq: phi_in_terms_rho_tilde_bar}
\end{align}

In the present context, condition (\ref{eq_phi_feedback_robust})
of Theorem \ref{theo_charact} for $s=T$ corresponds to a requirement on (\ref{eq: phi_in_terms_of_rho})
of
\begin{align*}
\int_{0}^{+\infty}\xi_{\theta}(y_{0},p^{T},x^{T})e^{a}e^{-e^{a}p_{T}}\mathrm{d}p_{T}\text{ does not depend on }x_{T},
\end{align*}
where we have used the change of variable, $p_{T}=\Lambda_{\alpha}\left(y_{T}\right)\exp\left(\gamma y_{T-1}+x_{T}'\beta\right)$
(see Lemma \ref{lem: P1_PT_are_exponential}). This implies that $\mathcal{L}[p_{T}\mapsto\xi_{\theta}(y_{0},p^{T-1},p_{T},x^{T-1},x_{T})][e^{a}]=\mathcal{L}[p_{T}\mapsto\xi_{\theta}(y_{0},p^{T-1},p_{T},x^{T-1},\widetilde{x}_{T})][e^{a}]$
for any $\widetilde{x}_{T}$, where ${\cal{L}}$ denotes the Laplace transform operator. By the uniqueness of the Laplace transform,
it follows that $\xi_{\theta}(y_{0},p^{T-1},p_{T},x^{T-1},x_{T})=\xi_{\theta}(y_{0},p^{T-1},p_{T},x^{T-1},\widetilde{x}_{T})$,
meaning that $\xi_{\theta}$ does not depend on $x_{T}$. Hereafter we therefore suppress the dependence of $\phi_{\theta},\xi_{\theta},\psi_{\theta}$
on $x_{T}$. 

Next, consider the decomposition 
\begin{align*}
 & \psi_{\theta}(Y_{0},\widetilde{P}^{T-1},\overline{P},X^{T-1})=\\
 & \left(\psi_{\theta}(Y_{0},\widetilde{P}^{T-1},\overline{P},X^{T-1})-\mathbb{E}\left[\left.\psi_{\theta}(Y_{0},\widetilde{P}^{T-1},\overline{P},X^{T-1})\right|Y_{0},\widetilde{P}^{T-2},\overline{P},X^{T-1}\right]\right)\\
 & +\left(\mathbb{E}\left[\left.\psi_{\theta}(Y_{0},\widetilde{P}^{T-1},\overline{P},X^{T-1})\right|Y_{0},\widetilde{P}^{T-2},\overline{P},X^{T-1}\right]-\mathbb{E}\left[\left.\psi_{\theta}(Y_{0},\widetilde{P}^{T-1},\overline{P},X^{T-1})\right|Y_{0},\widetilde{P}^{T-3},\overline{P},X^{T-2}\right]\right)\\
 & +\left(\mathbb{E}\left[\left.\psi_{\theta}(Y_{0},\widetilde{P}^{T-1},\overline{P},X^{T-1})\right|Y_{0},\widetilde{P}^{T-3},\overline{P},X^{T-2}\right]-\mathbb{E}\left[\left.\psi_{\theta}(Y_{0},\widetilde{P}^{T-1},\overline{P},X^{T-1})\right|Y_{0},\widetilde{P}^{T-4},\overline{P},X^{T-3}\right]\right)\\
 & \vdots\\
 & +\mathbb{E}\left[\left.\psi_{\theta}(Y_{0},\widetilde{P}^{T-1},\overline{P},X^{T-1})\right|Y_{0},\widetilde{P}_{1},\overline{P},X^{2}\right],
\end{align*}
or, succinctly, 
\begin{align*}
\psi_{\theta}(Y_{0},\widetilde{P}^{T-1},\overline{P},X^{T-1})=\sum_{t=1}^{T-2}\psi_{\theta,t}(Y_{0},\widetilde{P}^{t},\overline{P},X^{t+1})+\psi_{\theta,T-1}(Y_{0},\widetilde{P}^{T-1},\overline{P},X^{T-1}),
\end{align*}
where 
\begin{align*}
\psi_{\theta,T-1}(Y_{0},\widetilde{P}^{T-1},\overline{P},X^{T-1}) & =\psi_{\theta}(Y_{0},\widetilde{P}^{T-1},\overline{P},X^{T-1})-\mathbb{E}\left[\left.\psi_{\theta}(Y_{0},\widetilde{P}^{T-1},\overline{P},X^{T-1})\right|Y_{0},\widetilde{P}^{T-2},\overline{P},X^{T-1}\right],
\end{align*}
for all $t=2,\ldots,T-2$,
\begin{multline*}
\psi_{\theta,t}\left(Y_{0},\widetilde{P}^{t},\overline{P},X^{t+1}\right)\\
=\mathbb{E}\left[\left.\psi_{\theta}(Y_{0},\widetilde{P}^{T-1},\overline{P},X^{T-1})\right|Y_{0},\widetilde{P}^{t},\overline{P},X^{t+1}\right]-\mathbb{E}\left[\left.\psi_{\theta}(Y_{0},\widetilde{P}^{T-1},\overline{P},X^{T-1})\right|Y_{0},\widetilde{P}^{t-1},\overline{P},X^{t}\right],
\end{multline*}
 and 
\begin{align*}
\psi_{\theta,1}(Y_{0},\widetilde{P}_{1},\overline{P},X^{2}) & =\mathbb{E}\left[\left.\psi_{\theta}(Y_{0},\widetilde{P}^{T-1},\overline{P},X^{T-1})\right|Y_{0},\widetilde{P}_{1},\overline{P},X^{2}\right].
\end{align*}
The law of iterated expectations readily implies
that: 
\begin{align*}
 & \mathbb{E}\left[\left.\psi_{\theta,T-1}(Y_{0},\widetilde{P}^{T-1},\overline{P},X^{T-1})\right|Y_{0},\widetilde{P}^{T-2},\overline{P},X^{T-1}\right]=0,\\
 & \mathbb{E}\left[\left.\psi_{\theta,t}(Y_{0},\widetilde{P}^{t},\overline{P},X^{t+1})\right|Y_{0},\widetilde{P}^{t-1},\overline{P},X^{t}\right]=0,\quad t=2,\ldots,T-2.
\end{align*}
It only remains to show that (i) for all $t=1,\ldots,T-1$, $\psi_{\theta,t}(Y_{0},\widetilde{P}^{t},\overline{P},X^{t+1})$
does not depend on $X_{t+1}$ and (ii) 
\begin{align*}
\mathbb{E}\left[\left.\psi_{\theta}(Y_{0},\widetilde{P}^{T-1},\overline{P},X^{T-1})\right|Y_{0},\overline{P},X_{1}\right]=0.
\end{align*}
We start by establishing (i). To that end, fix any $s\in\{2,\ldots,T-1\}$.
By Theorem \ref{theo_charact} equation (\ref{eq_phi_feedback_robust}),
we have 
\begin{align*}
\int_{0}^{\infty}\xi_{\theta}(y_{0},p^{T},x^{T-1})e^{a}e^{-e^{a}\sum_{t=s}^{T}p_{t}}\mathrm{d}p^{s:T}\text{ does not depend on }x^{s:T},
\end{align*}
using the change of variables, $p_{t}=\Lambda_{\alpha}\left(y_{t}\right)\exp\left(\gamma y_{t-1}+x_{t}'\beta\right)$
for $t=s,\ldots,T$. Next, recall that $\widetilde{p}_{t}=\frac{p_{t}}{\sum_{k=t}^{T}p_{k}},\quad t=s,\ldots,T-1$,
and introduce $\overline{p}_{s}=\sum_{k=s}^{T}p_{k}$. By Lemma \ref{lem: helmert_tranformation_MPH}
and a second change of variables, the previous condition is equivalent
to 
\begin{align*}
\int_{0}^{\infty}\Upsilon_{\theta}(y_{0},p^{s-1},\overline{p}_{s},x^{T-1})\frac{e^{(T-s+1)a}}{\Gamma(T-s+1)}\left(\overline{p}_{s}\right)^{T-s}e^{-e^{a}\overline{p}_{s}}\mathrm{d}\overline{p}_{s}\text{ does not depend on }x^{s:T},
\end{align*}
where 
\begin{align*}
\Upsilon_{\theta}(y_{0},p^{s-1},\overline{p}_{s},x^{T-1}) & =\int_{0}^{1}\xi_{\theta}(y_{0},p^{s-1},\widetilde{p}_{s}\overline{p}_{s},(1-\widetilde{p}_{s})\widetilde{p}_{s+1}\overline{p}_{s},\ldots,\prod_{k=s}^{T-2}(1-\widetilde{p}_{k})\widetilde{p}_{T-1}\overline{p}_{s},\prod_{k=s}^{T-1}(1-\widetilde{p}_{k})\overline{p}_{s},x^{T-1})\\
 & \times\prod_{t=s}^{T-1}\frac{\Gamma(T-t+1)}{\Gamma(T-t)}(1-\widetilde{p}_{t})^{T-t-1}\mathrm{d}\widetilde{p}^{s:T-1}.
\end{align*}
Hence, 
\begin{align*}
\mathcal{L}[\overline{p}_{s}\mapsto\left(\overline{p}_{s}\right)^{T-s}\Upsilon_{\theta}(y_{0},p^{s-1},\overline{p}_{s},x^{s-1},x^{s:T-1})][e^{a}]=\mathcal{L}[\overline{p}_{s}\mapsto\left(\overline{p}_{s}\right)^{T-s}\Upsilon_{\theta}(y_{0},p^{s-1},\overline{p}_{s},x^{s-1},\widetilde{x}^{s:T-1})][e^{a}],
\end{align*}
for any possible values $\widetilde{x}^{s:T-1}$. By uniqueness of the
Laplace transform, we conclude that $\Upsilon_{\theta}$ does not
depend on $x^{s:T-1}$. Since there is a bijective transformation
between $(p^{s-1},\overline{p}_{s})$ and $(\widetilde{p}^{s-1},\overline{p})$
given by $\widetilde{p}_{t}=\frac{p_{t}}{\sum_{k=t}^{s-1}p_{k}+\overline{p}_{s}},\quad t=1,\ldots,s-1$,
and $\overline{p}=\sum_{k=1}^{s-1}p_{k}+\overline{p}_{s}$, we have: 
\begin{align*}
\Upsilon_{\theta}(y_{0},p^{s-1},\overline{p}_{s},x^{T-1}) & =\int_{0}^{1}\psi_{\theta}(y_{0},\widetilde{p}^{T-1},\overline{p},x^{T-1})\prod_{t=s}^{T-1}\frac{\Gamma(T-t+1)}{\Gamma(T-t)}(1-\widetilde{p}_{t})^{T-t-1}\mathrm{d}\widetilde{p}^{s:T-1}\\
 & =\mathbb{E}\left[\left.\psi_{\theta}(Y_{0},\widetilde{P}^{T-1},\overline{P},X^{T-1})\right|Y_{0}=y_{0},\widetilde{P}^{s-1}=\widetilde{p}^{s-1},\overline{P}=\overline{p},X^{s}=x^{s}\right],
\end{align*}
which does not depend on $x^{s:T-1}$. The second equality follows from the fact that the ``forward orthogonal transforms'' $\widetilde{P}^{s:T-1}$ are independent of $\widetilde{P}^{s-1},\overline{P},X^{s}$ by part (i) of Lemma \ref{lem: P1_PT_are_exponential} and Lemma \ref{lem: helmert_tranformation_MPH}. This shows that, for $s\in\{3,\ldots,T-1\}$,
\begin{multline*}
\psi_{\theta,s-1}(Y_{0},\widetilde{P}^{s-1},\overline{P},X^{s})\\
=\mathbb{E}\left[\left.\psi_{\theta}(Y_{0},\widetilde{P}^{T-1},\overline{P},X^{T-1})\right|Y_{0},\widetilde{P}^{s-1},\overline{P},X^{s}\right]-\mathbb{E}\left[\left.\psi_{\theta}(Y_{0},\widetilde{P}^{T-1},\overline{P},X^{T-1})\right|Y_{0},\widetilde{P}^{s-2},\overline{P},X^{s-1}\right]
\end{multline*}
does not depend on $X^{s:T}$, and that, for $s=2$ 
\begin{align*}
\psi_{\theta,1}(Y_{0},\widetilde{P}_{1},\overline{P},X^{2})=\mathbb{E}\left[\left.\psi_{\theta}(Y_{0},\widetilde{P}^{T-1},\overline{P},X^{T-1})\right|Y_{0},\widetilde{P}_{1},\overline{P},X^{2}\right]
\end{align*}
does not depend on $X^{2:T}$. This proves (i). 

Finally, by Theorem \ref{theo_charact} equation (\ref{eq_phi_hetero_robust}),
we have 
\begin{align*}
\int_{0}^{\infty}\xi_{\theta}(y_{0},p^{T},x^{T-1})e^{a}e^{-e^{a}\sum_{t=1}^{T}p_{t}}\mathrm{d}p^{T}=0,
\end{align*}
where we have once more used the transformation $p_{t}=\Lambda_{\alpha}\left(y_{t}\right)\exp\left(\gamma y_{t-1}+x_{t}'\beta\right)$
for $t=1,\ldots,T$. Consider now, $\widetilde{p}_{t}=\frac{p_{t}}{\sum_{k=t}^{T}p_{k}},\quad t=1,\ldots,T-1$
and $\overline{p}=\sum_{k=1}^{T}p_{k}$. By Lemma \ref{lem: helmert_tranformation_MPH}
and a second change of variables, the previous condition is equivalent
to 
\begin{align*}
\int_{0}^{\infty}\Upsilon_{\theta}(y_{0},\overline{p},x^{T-1})\frac{e^{Ta}}{\Gamma(T)}\overline{p}^{T-1}e^{-e^{a}\overline{p}}\mathrm{d}\overline{p}=0,
\end{align*}
where now $\Upsilon_{\theta}(y_{0},\overline{p},x^{T-1})=\left(\int_{0}^{1}\psi_{\theta}(y_{0},\widetilde{p}^{T-1},\overline{p},x^{T-1})\prod_{t=1}^{T-1}\frac{\Gamma(T-t+1)}{\Gamma(T-t)}(1-\widetilde{p}_{t})^{T-t-1}\mathrm{d}\widetilde{p}^{T-1}\right)$.
Hence, 
\begin{align*}
\mathcal{L}[\overline{p}\mapsto\overline{p}^{T-1}\Upsilon_{\theta}(y_0,\overline{p},x^{T-1})][e^{a}]=0,
\end{align*}
which by uniqueness of the Laplace transform implies that 
\begin{align*}
0=\Upsilon_{\theta}(Y_0,\overline{P},X^{T-1})=\mathbb{E}\left[\left.\psi_{\theta}(Y_{0},\widetilde{P}^{T-1},\overline{P},X^{T-1})\right|Y_{0},\overline{P},X_{1}\right],
\end{align*}
where the second equality is again a consequence of Lemma \ref{lem: helmert_tranformation_MPH}. This establishes (ii) which finally yields the desired representation.

\subsection{Supplementary lemma} \label{app: supplementary_lemmata}


\begin{lemma} \label{lemma_unique}
Let $T\geq 2$ and suppose that $\varphi_{\theta}^{1}(Y^T,X^T),\varphi_{\theta}^{2}(Y^T,X^T)$ are two $L\times 1$ square-integrable moment functions under $(\theta_0,\omega_{0})$ verifying \eqref{eq_varphi} as well as \eqref{eq_aveff} and \eqref{eq_aveff_feedback_robust} of Theorem \ref{theo_charact_aveff}. Then, $\Pi(\varphi_{\theta_0}^{1}(Y^T,X^T)\,|\,\mathcal{T}_{\theta_0,\omega_0,L})=\Pi(\varphi_{\theta_0}^{2}(Y^T,X^T)\,|\,\mathcal{T}_{\theta_0,\omega_0,L})$.
\end{lemma}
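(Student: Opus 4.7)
The plan is to exploit linearity: set $\Delta_{\theta_0}(Y^T,X^T)=\varphi_{\theta_0}^{1}(Y^T,X^T)-\varphi_{\theta_0}^{2}(Y^T,X^T)$ and show that $\Delta_{\theta_0}$ lies in the orthocomplement $\mathcal{T}^\perp_{\theta_0,\omega_0,L}$, from which the equality of projections follows immediately by orthogonality.

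First I would verify that $\Delta_{\theta_0}$ is a square-integrable, mean-zero element of the relevant Hilbert space. Square integrability is inherited from that of $\varphi^{1}_{\theta_0}$ and $\varphi^{2}_{\theta_0}$ via the triangle inequality. For the mean-zero property, applying \eqref{eq_varphi} to each $\varphi^{j}_{\theta_0}$ gives $\mathbb{E}_{\theta_0,\omega_0}[\varphi^{1}_{\theta_0}(Y^T,X^T)]=\mathbb{E}_{\theta_0,\omega_0}[\varphi^{2}_{\theta_0}(Y^T,X^T)]=\mu(\theta_0,\omega_0)$, so $\mathbb{E}_{\theta_0,\omega_0}[\Delta_{\theta_0}(Y^T,X^T)]=0$.

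Next, I would show that $\Delta_{\theta_0}$ satisfies the FHR conditions \eqref{eq_phi_hetero_robust} and \eqref{eq_phi_feedback_robust} of Theorem~\ref{theo_charact}. Writing down \eqref{eq_aveff} and \eqref{eq_aveff_feedback_robust} for both $\varphi^{1}_{\theta_0}$ and $\varphi^{2}_{\theta_0}$ and subtracting, the common term involving $h_{\theta_0}(y^T,x^T,a)$ cancels, so that
\begin{align*}
\int \Delta_{\theta_0}(y^T,x^T)\prod_{t=1}^{T}f_{\theta_0}(y_t\mid y_{t-1},x_t,a)\,\mathrm{d}y^{1:T}&=0,\\
\int \Delta_{\theta_0}(y^T,x^T)\prod_{t=s}^{T}f_{\theta_0}(y_t\mid y_{t-1},x_t,a)\,\mathrm{d}y^{s:T}&\text{ does not depend on }x^{s:T},
\end{align*}
for every $s=2,\ldots,T$. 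Thus $\Delta_{\theta_0}$ meets the hypotheses of Theorem~\ref{theo_charact}(A) and, being square integrable and mean zero, qualifies as an element of $\mathcal{T}^\perp_{\theta_0,\omega_0,L}$ by Theorem~\ref{theo_eff}.

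To conclude, I invoke the orthogonal decomposition of the Hilbert space: every square-integrable mean-zero function admits a unique decomposition into a component in $\mathcal{T}_{\theta_0,\omega_0,L}$ and one in $\mathcal{T}^{\perp}_{\theta_0,\omega_0,L}$. Since $\Delta_{\theta_0}\in\mathcal{T}^{\perp}_{\theta_0,\omega_0,L}$, its projection onto $\mathcal{T}_{\theta_0,\omega_0,L}$ vanishes, and by linearity of $\Pi(\cdot\mid\mathcal{T}_{\theta_0,\omega_0,L})$ we obtain $\Pi(\varphi^{1}_{\theta_0}\mid\mathcal{T}_{\theta_0,\omega_0,L})=\Pi(\varphi^{2}_{\theta_0}\mid\mathcal{T}_{\theta_0,\omega_0,L})$. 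The only genuine subtlety—rather than an obstacle—is the mean-centering step: the projection is defined on the Hilbert space of mean-zero functions, so one implicitly subtracts the common mean $\mu(\theta_0,\omega_0)$ from each $\varphi^{j}_{\theta_0}$ before projecting; the difference $\Delta_{\theta_0}$ is already mean zero, so this centering plays no active role in the argument.
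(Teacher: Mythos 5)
Your proposal is correct and follows essentially the same route as the paper: both arguments observe that the difference $\varphi_{\theta_0}^{1}-\varphi_{\theta_0}^{2}$ satisfies \eqref{eq_phi_hetero_robust}--\eqref{eq_phi_feedback_robust} (because the $h_{\theta_0}$ term cancels upon subtracting \eqref{eq_aveff} and \eqref{eq_aveff_feedback_robust}), hence lies in $\mathcal{T}_{\theta_0,\omega_0,L}^{\perp}$, and then conclude by the uniqueness of the orthogonal decomposition that the projections onto $\mathcal{T}_{\theta_0,\omega_0,L}$ coincide. Your write-up is in fact slightly more explicit than the paper's about why the difference satisfies the FHR conditions and about the mean-centering, but the underlying argument is identical.
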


\begin{proof}
    By assumption, $\phi_{\theta}\left(Y^{T},X^{T}\right)=\varphi_{\theta}^{1}(Y^T,X^T)-\varphi_{\theta}^{2}(Y^T,X^T)\in \mathcal{T}_{\theta_0,\omega_0,L}^{\perp}$ since it satisfies the conditions \eqref{eq_phi_hetero_robust}-\eqref{eq_phi_feedback_robust} of Theorem \ref{theo_charact}. Then, we can write
    \begin{align*}
        \varphi_{\theta}^{1}(Y^T,X^T)&=\varphi_{\theta}^{2}(Y^T,X^T)+\phi_{\theta}\left(Y_{0},Y^{T},X^{T}\right) \\
        &=\mu(\theta,\omega)+\Pi(\varphi_{\theta_0}^{2}(Y^T,X^T)\,|\,\mathcal{T}_{\theta_0,\omega_0,L})+\Pi(\varphi_{\theta_0}^{2}(Y^T,X^T)\,|\,\mathcal{T}_{\theta_0,\omega_0,L}^{\perp})+\phi_{\theta}\left(Y_{0},Y^{T},X^{T}\right).
    \end{align*}
    By linearity of $\mathcal{T}_{\theta_0,\omega_0,L}^{\perp}$, $\Pi(\varphi_{\theta_0}^{2}(Y^T,X^T)\,|\,\mathcal{T}_{\theta_0,\omega_0,L}^{\perp})+\phi_{\theta}\left(Y_{0},Y^{T},X^{T}\right)\in \mathcal{T}_{\theta_0,\omega_0,L}^{\perp}$. Therefore, standard properties of Hilbert spaces imply that $\Pi(\varphi_{\theta_0}^{1}(Y^T,X^T)\,|\,\mathcal{T}_{\theta_0,\omega_0,L})=\Pi(\varphi_{\theta_0}^{2}(Y^T,X^T)\,|\,\mathcal{T}_{\theta_0,\omega_0,L})$.
\end{proof}
 
\section{Nonlinear regression: derivation of equation \eqref{eq_tobeshown}} \label{app: nonlinearregression_details}

Let $\psi_{\theta}(y_0,y_1,x_1,a)$ denote a moment function satisfying \eqref{eq_psi_zeromean}. Following Corollary \ref{coro_charact}, we search for functions $\phi_{\theta}$ satisfying equation \eqref{eq_phi_coro}, i.e.,
	\begin{align*}&\int_{-\infty}^{+\infty}\phi_{\theta}(y_0,y_1,y_2,x_1,x_2)\exp\left(-\frac{1}{2\sigma^2}(y_2-m_{\beta}(y_1,x_2,a))\right)^2dy_2=\psi_{\theta}(y_0,y_1,x_1,a),
    \end{align*}
    which is a convolution equation. Thus, if we let $\tau = m_{\beta}(y_1,x_2,a)$, an application of the Convolution Theorem yields
    \begin{align*}
        \mathcal{F}\left[\phi_{\theta}(y_0,y_1,.,x_1,x_1)\right][s]\exp(-\frac{\sigma^2}{2}s^2)= \mathcal{F}\left[\psi_{\theta}(y_0,y_1,x_1,m_{\beta}^{-1}(y_1,x_2,\tau))\right][s],
    \end{align*}
    where $\mathcal{F}\left[g\right][s]=\int_{-\infty}^{+\infty}g(z)\exp(\boldsymbol{i}zs)dz$ denotes the Fourier transform operator, and we have used that the characteristic function of a $\mathcal{N}(0,\sigma^2)$ is given by $s\mapsto \exp(-\frac{\sigma^2}{2}s^2)$. More explicitly, we have
    \begin{align*}
        &\int_{-\infty}^{+\infty} \phi_{\theta}(y_0,y_1,y_2,x_1,x_2)\exp(\boldsymbol{i}s y_2)dy_2 \\
        &= \int_{-\infty}^{+\infty} \psi_{\theta}(y_0,y_1,x_1,m_{\beta}^{-1}(y_1,x_2,\tau)) \exp(\boldsymbol{i}s \tau+\frac{\sigma^2}{2}s^2)d\tau  \\
        &=\int_{-\infty}^{+\infty} \psi_{\theta}(y_0,y_1,x_1,a)\pdv{m_{\beta}(y_1,x_2,a)}{a} \exp(\boldsymbol{i}s m_{\beta}(y_1,x_2,a)+\frac{\sigma^2}{2}s^2)da,
    \end{align*}
    where we have used the change in variables $a\mapsto \tau=m_{\beta}(y_1,x_2,a)$.\\
However, the inverse problem may not have solutions since the inverse Fourier transform
	\begin{align*}
    \frac{1}{2\pi}\int_{-\infty}^{+\infty} \exp(-\boldsymbol{i}s y_2)\left[\int_{-\infty}^{+\infty} \psi_{\theta}(y_0,y_1,x_1,a)\pdv{m_{\beta}(y_1,x_2,a)}{a} \exp(\boldsymbol{i}s m_{\beta}(y_1,x_2,a)+\frac{\sigma^2}{2}s^2)da \right]ds
    \end{align*}
    may not be well-defined. To address this issue, our strategy is to regularize the problem, and compute the regularized inverse 
	\begin{align*}
        &\phi_{\theta}^{\lambda}(y_0,y_1,y_2,x_1,x_2)\\
		&=\frac{1}{2\pi}\int_{-\infty}^{+\infty} \lambda \kappa(\lambda s) \exp(-\boldsymbol{i}s y_2)\left[\int_{-\infty}^{+\infty} \psi_{\theta}(y_0,y_1,x_1,a)\pdv{m_{\beta}(y_1,x_2,a)}{a} \exp(\boldsymbol{i}s m_{\beta}(y_1,x_2,a)+\frac{\sigma^2}{2}s^2)da \right]ds\\
		&=\int_{-\infty}^{+\infty}\psi_{\theta}(y_0,y_1,x_1,a)\pdv{m_{\beta}(y_1,x_2,a)}{a}\left[\frac{1}{2\pi}\int_{-\infty}^{+\infty} \lambda \kappa(\lambda s)\exp\left(\boldsymbol{i}s (m_{\beta}(y_1,x_2,a)-y_2)+\frac{1}{2}\sigma^2s^2\right)ds\right]da\\
		&=\int_{-\infty}^{+\infty} \psi_{\theta}(y_0,y_1,x_1,a)\frac{\partial m_{\beta}(y_1,x_2,a)}{\partial a}K_{\lambda}\left(\frac{m_{\beta}(y_1,x_2,a)-y_2}{\lambda}\right)da,
    \end{align*}
	which is always well-defined (under suitable conditions so that the integrals can be interchanged). This is similar to the strategy used in kernel nonparametric deconvolution (\citealp{stefanski1990deconvolving}).

	The expectation of this regularized inverse is
\begin{align*}
    &\mathbb{E}\left[\phi_{\theta}^{\lambda}(Y_{0},Y_{1},Y_{2},X_{1},X_{2})\,|\, Y_{0},X_{1}\right]\\
    &=	\mathbb{E}\left[\int_{-\infty}^{+\infty} \psi_{\theta}(Y_{0},Y_{1},X_{1},a)\frac{\partial m_{\beta}(Y_{1},X_{2},a)}{\partial a}K_{\lambda}\left(\frac{m_{\beta}(Y_{1},X_{2},a)-Y_{2}}{\lambda}\right)da\,|\, Y_{0},X_{1}\right]\\
    &=\frac{1}{2\pi}\int_{-\infty}^{+\infty} \lambda\kappa(\lambda s)\int_{-\infty}^{+\infty} \mathbb{E}\bigg[\psi_{\theta}(Y_{0},Y_{1},X_{1},a)\frac{\partial m_{\beta}(Y_{1},X_{2},a)}{\partial a}\\
    &\quad \times\exp\left(\lambda \boldsymbol{i}s \frac{m_{\beta}(Y_{1},X_{2},a)-Y_{2}}{\lambda}+\frac{1}{2}\sigma^2s^2\right)\,|\, Y_{0},X_{1}\bigg]dads\\
    &=\frac{1}{2\pi}\int_{-\infty}^{+\infty} \lambda \kappa(\lambda s) \int_{-\infty}^{+\infty} \mathbb{E}\bigg[\psi_{\theta}(Y_{0},Y_{1},X_{1},a)\frac{\partial m_{\beta}(Y_{1},X_{2},a)}{\partial a}\\
    &\quad \times\exp\left(\lambda \boldsymbol{i}s \frac{m_{\beta}(Y_{1},X_{2},a)-\varepsilon_{i2}-m_{\beta}(Y_{1},X_{2},A)}{\lambda}+\frac{1}{2}\sigma^2s^2\right)\,|\, Y_{0},X_{1}\bigg]dads  \\
    &= \int_{-\infty}^{+\infty} \mathbb{E}\bigg[\psi_{\theta}(Y_{0},Y_{1},X_{1},a)\frac{\partial m_{\beta}(Y_{1},X_{2},a)}{\partial a}\\
    &\quad \times\frac{1}{2\pi }\int_{-\infty}^{+\infty} \lambda \kappa(\lambda s)\exp\left( \boldsymbol{i}s (m_{\beta}(Y_{1},X_{2},a)-m_{\beta}(Y_{1},X_{2},A))\right)ds\,|\, Y_{0},X_{1}\bigg]da \displaybreak[0] \\
    &= \int_{-\infty}^{+\infty} \mathbb{E}\bigg[\psi_{\theta}(Y_{0},Y_{1},X_{1},a)\frac{\partial m_{\beta}(Y_{1},X_{2},a)}{\partial a}\\
    &\quad \times\frac{1}{2\pi }\int_{-\infty}^{+\infty} \kappa(u)\exp\left( \boldsymbol{i}\frac{u}{\lambda} (m_{\beta}(Y_{1},X_{2},a)-m_{\beta}(Y_{1},X_{2},A))\right)du\,|\, Y_{0},X_{1}\bigg]da.
\end{align*}
Since $\abs{\kappa(u)\exp\left( \boldsymbol{i}\frac{u}{\lambda} (m_{\beta}(Y_{1},X_{2},a)-m_{\beta}(Y_{1},X_{2},A)\right)}\leq \abs{\kappa(u)}$ and by assumption $\abs{\kappa(u)}$ is integrable, the dominated convergence theorem implies
\begin{align*}
    &\lim_{\lambda \to 0}\mathbb{E}\left[\phi_{\theta}^{\lambda}(Y_{0},Y_{1},Y_{2},X_{1},X_{2})\,|\, Y_{0},X_{1}\right] \\
    &= \int_{-\infty}^{+\infty} \mathbb{E}\left[\psi_{\theta}(Y_{0},Y_{1},X_{1},a)\frac{\partial m_{\beta}(Y_{1},X_{2},a)}{\partial a}\delta(m_{\beta}(Y_{1},X_{2},a)-m_{\beta}(Y_{1},X_{2},A))\,|\, Y_{0},X_{1}\right]da,
\end{align*}
where $\delta(\cdot)$ denotes Dirac's delta. Finally, since by assumption $a\mapsto m_{\beta}(y_1,x_2,a)$ is strictly increasing we have
\begin{align*}
    &\lim_{\lambda \to 0}\mathbb{E}\left[\phi_{\theta}^{\lambda}(Y_{0},Y_{1},Y_{2},X_{1},X_{2})\,|\, Y_{0},X_{1}\right] \\
    &= \int_{-\infty}^{+\infty} \mathbb{E}\left[\psi_{\theta}(Y_{0},Y_{1},X_{1},a)\delta(a-A)\,|\, Y_{0},X_{1}\right]da \\
    &= \mathbb{E}\left[\int_{-\infty}^{+\infty} \psi_{\theta}(Y_{0},Y_{1},X_{1},a)\delta(a-A)da \,|\, Y_{0},X_{1}\right] \\
    &= \mathbb{E}\left[\psi_{\theta}(Y_{0},Y_{1},X_{1},A)\,|\, Y_{0},X_{1}\right] \\
    &=0,
\end{align*} 
where the first line follows from the composition property with the Dirac delta.

\clearpage

\begin{center}
\textbf{\LARGE Additional details for the MPH model and the numerical experiments}
\end{center}

\section{Calculations supporting the MPH results presented in main text}\label{app: MPH_derivations}

\subsection{Some useful distributional properties of the MPH model}

Our derivations of the results presented in the main text for the MPH model exploit a number of its special distributional properties.
Variants of these properties feature in prior work, for example that
of \cite{hahn1994efficiency} and \cite{Ridder_Woutersen_EM2003}. Our analysis requires these known implications of the MPH as well as several (apparently) new ones, specific to the feedback case. This section of the supplemental appendix derives and presents the needed preliminary results.

Recall the definitions $\rho_{\theta}\left(z_{t}\right)=\Lambda_{\alpha}\left(y_{t}\right)\exp\left(\gamma y_{t-1}+x_{t}'\beta\right)$
for $z_{t}=\left(y_{t-1},y_{t},x_{t}'\right)'$ and $P_{t}=\rho_{\theta}\left(Z_{t}\right)$
for $t=1,\ldots,T$ with $P_{t}=p_{t}=\rho_{\theta}\left(z_{t}\right)$ denoting -- when the context
is clear -- a \emph{specific} value of the random variable $P_{t}$. Unless there is a risk of confusion we omit the dependence of $P_t$ on $\theta$. The period $t$ conditional survival function equals
\begin{align}
\Pr\left(\left.Y_{t}>y_{t}\right|y^{t-1},x^{t},a\right)=S_{\theta}\left(\left.y_{t}\right|y^{t-1},x^{t},a\right) & =\exp\left(-\Lambda_{\alpha}\left(y_{t}\right)\exp\left(\gamma y_{t-1}+x_{t}'\beta+a\right)\right)\nonumber \\
 & =\exp\left(-\rho_{\theta}\left(z_{t}\right)e^{a}\right),\label{eq: mph_conditional_survival_function_y_t}
\end{align}
for $t=1,\ldots,T$. Using monotonicity of the integrated baseline
hazard and (\ref{eq: mph_conditional_survival_function_y_t}) we have
\begin{align}
\Pr\left(\left.P_{t}>p_{t}\right|y^{t-1},x^{t},a\right)= & \Pr\left(\left.Y_{t}>\Lambda_{\alpha}^{-1}\left(p_{t}\exp\left(-\gamma y_{t-1}-x_{t}'\beta\right)\right)\right|y^{t-1},x^{t},a\right)\nonumber \\
= & \exp\left(-\Lambda_{\alpha}\left(\Lambda_{\alpha}^{-1}\left(p_{t}\exp\left(-\gamma y_{t-1}-x_{t}'\beta\right)\right)\right)\exp\left(\gamma y_{t-1}+x_{t}'\beta+a\right)\right)\nonumber \\
= & \exp\left(-p_{t}e^{a}\right).\label{eq: mph_conditional_survival_function_rho_t}
\end{align}
Observe that (\ref{eq: mph_conditional_survival_function_rho_t})
is the survival function for an exponential random variable with rate
parameter $e^{a}$. Since the mapping from $y_{t}\rightarrow p_{t}$
is invertible we therefore have that
\[
\left.P_{t}\right|P^{t-1},X^{t},A\sim\mathrm{Exponential}\left(e^{A}\right),\ t=1,\ldots,T,
\]
and also that
\[
f\left(\left.x^{2:T},p^{1:T}\right|y_{0},x_{1},a\right)=\exp\left(Ta-\left(\sum_{t=1}^{T}p_{t}\right)e^{a}\right)\prod_{t=2}^{T}\widetilde{g}\left(\left.x_{t}\right|y_{0},p^{1:t-1},x^{t-1},a\right),
\]
where $\widetilde{g}\left(\left.x_{t}\right|y_{0},p^{1:t-1},x^{t-1},a\right)=g\left(\left.x_{t}\right|y_{0},y^{1:t-1},x^{t-1},a\right)$
for $p_{t}=\rho_{\theta}\left(z_{t}\right)$ and $t=1,\ldots,T$.
Integrating over $x^{2:T}$ gives the following lemma.
\begin{lemma}
\textsc{(Exponential Structure of the MPH Model)} \label{lem: P1_PT_are_exponential}For
the MPH model defined by Example \ref{ex: mph_intro} we have, for $t=1,\ldots,T$,
that (i) $\left.P_{t}\right|P^{t-1},X^{t},A\sim\mathrm{Exponential}\left(e^{A}\right)$
 and (ii) $\left.P_{t}\right|Y_{0},X_{1},A\overset{ind}{\sim}\mathrm{Exponential}\left(e^{A}\right)$. 
\end{lemma}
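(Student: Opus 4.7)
The plan is to establish part (i) directly from the definition of $\rho_\theta$ together with the MPH hazard structure, and then derive part (ii) by combining part (i) with a factorization argument. The scaffolding is already mostly in the text preceding the lemma; what remains is to package it cleanly and supply the missing mutual-independence conclusion.

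First I would prove part (i). For any $p_t>0$, monotonicity of the integrated baseline hazard $\Lambda_\alpha$ ensures the event $\{P_t>p_t\}$ is equivalent to $\{Y_t>\Lambda_\alpha^{-1}(p_t\exp(-\gamma y_{t-1}-x_t'\beta))\}$. Substituting this into the MPH conditional survival function $S_\theta(y_t\mid y^{t-1},x^t,a)=\exp(-\rho_\theta(z_t)e^a)$ and simplifying yields
\[
\Pr\bigl(P_t>p_t\,\big|\,y^{t-1},x^t,a\bigr)=\exp(-p_t e^a),
\]
which is the survival function of an $\mathrm{Exponential}(e^a)$ random variable. Since the mapping $y_t\mapsto p_t$ is a bijection (at fixed $y_{t-1},x_t$), conditioning on $P^{t-1}$ is equivalent to conditioning on $Y^{t-1}$ (given $X^t,A$), so this delivers $P_t\mid P^{t-1},X^t,A\sim\mathrm{Exponential}(e^A)$.

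Next I would derive the joint density of $(X^{2:T},P^{1:T})$ given $(Y_0,X_1,A)$. Iterating the chain rule and using part (i) at each step to replace $f(p_t\mid p^{t-1},x^t,a)$ by $e^a\exp(-p_t e^a)$ gives
\[
f(x^{2:T},p^{1:T}\mid y_0,x_1,a)=\exp\!\Bigl(Ta-\bigl(\textstyle\sum_{t=1}^T p_t\bigr)e^a\Bigr)\prod_{t=2}^T \widetilde g(x_t\mid y_0,p^{1:t-1},x^{t-1},a),
\]
where $\widetilde g$ is the feedback density re-expressed in terms of the $p$'s. Integrating out $x^{2:T}$ sequentially (each $\widetilde g$ integrates to one) leaves
\[
f(p^{1:T}\mid y_0,x_1,a)=\prod_{t=1}^T e^a\exp(-p_t e^a),
\]
which factorizes as a product of Exponential$(e^a)$ densities. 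This simultaneously gives (ii): the $P_t$ are conditionally mutually independent given $(Y_0,X_1,A)$, each $\mathrm{Exponential}(e^A)$.

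I don't anticipate a serious obstacle. The only subtlety worth being careful about is that the feedback density $g$ depends on past outcomes, which after the change of variables $y_t\mapsto p_t$ translates into dependence on $p^{1:t-1}$; this is harmless because the $\widetilde g$ factors still integrate to one over $x_t$ regardless of the conditioning, so they disappear cleanly when marginalizing. Beyond that, the argument is essentially a change-of-variable calculation on the survival function combined with telescoping the likelihood.
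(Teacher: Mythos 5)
Your proposal is correct and follows essentially the same route as the paper: derive the conditional survival function $\Pr(P_t>p_t\mid y^{t-1},x^t,a)=\exp(-p_t e^a)$ from monotonicity of $\Lambda_\alpha$, use the bijection $y_t\mapsto p_t$ to get part (i), then write the joint density of $(X^{2:T},P^{1:T})$ given $(Y_0,X_1,A)$ and marginalize out $x^{2:T}$ (each feedback factor integrating to one) to obtain the product of $\mathrm{Exponential}(e^A)$ densities for part (ii). Your added remark about why the feedback densities vanish cleanly under marginalization is exactly the point the paper relies on implicitly.
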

Our characterization of the set of FHR moments for the MPH model uses
a particular one-to-one transformation of $P^T$.
We next derive the properties of this transformation. Let $P_{t}\overset{ind}{\sim}\mathrm{Gamma}(\alpha_{t},\beta)$
for $t=1,\ldots,T$ and consider the bijective forward orthogonal
transformation $V^T=G\left(P^T\right)$:
\begin{align*}
V_{t} & =\frac{P_{t}}{\sum_{s=t}^{T}P_{s}},\ t=1,\ldots,T-1\\
V_{T} & =\sum_{t=1}^{T}P_{t},
\end{align*}
with an inverse given by $P^T=G^{-1}\left(V^T\right)$:
\begin{align*}
P_{1} & =V_{1}V_{T}\\
P_{t} & =\prod_{s=1}^{t-1}(1-V_{s})V_{t}V_{T},\ t=2,\ldots,T-1\\
P_{T} & =\prod_{s=1}^{T-1}(1-V_{s})V_{T}.
\end{align*}
By the change-of-variable formula, 
\begin{align*}
f_{V}(V_{1},V_{2},\ldots,V_{T}) & =f_{P_{1}}\left(V_{1}V_{T}\right)\prod_{t=2}^{T-1}f_{P_{t}}\left(\prod_{s=1}^{t-1}(1-V_{s})V_{t}V_{T}\right)f_{P_{T}}\left(\prod_{s=1}^{T-1}(1-V_{s})V_{T}\right)\left|\det\left(\dv{G^{-1}(V^T)}{V^T}\right)\right|.
\end{align*}
We shall show that: 
\begin{align}
V_{T}\sim & \mathrm{Gamma}\left(\sum_{t=1}^{T}\alpha_{t},\beta\right),\label{eq: gamma_beta_claim}\\
V_{t}\sim & \mathrm{Beta}\left(\alpha_{t},\sum_{s=t+1}^{T}\alpha_{s}\right),\quad t=1,\ldots,T-1,\nonumber 
\end{align}
with $(V_{1},V_{2},\ldots,V_{T-1},V_{T})$ additionally mutually independent
of one another. 

To this end, let $J^{(T)}(V_{1},\ldots,V_{T})=\det\left[\dv{G^{-1}(V^T)}{V^T}\right]$. In Subsection \ref{app: detailed calculations} we establish that:
\begin{equation} \label{eq: det_for_rho_tilde_rho_bar_change_of_variables}
    J^{(T)}(V_{1},\ldots,V_{T}) = V_{T}^{T-1}\prod_{s=1}^{T-2}(1-V_{s})^{T-(s+1)}.
\end{equation}

Applying the change-of-variables formula then yields, as we show in Subsection \ref{app: detailed calculations},
\begin{align}\label{eq: density_of_rho_tilde_rho_bar}
     f_{V}\left(V_{1},V_{2},\ldots,V_{T}\right) = & \frac{\beta^{\sum_{t=1}^{T}\alpha_{t}}}{\Gamma\left(\sum_{t=1}^{T}\alpha_{t}\right)}V_{T}^{\sum_{t=1}^{T}\alpha_{t}-1}e^{-\beta V_{T}}\\
    & \times\prod_{t=1}^{T-1}\frac{\Gamma\left(\sum_{s=t}^{T}\alpha_{s}\right)}{\Gamma\left(\alpha_{t}\right)\Gamma\left(\sum_{s=t+1}^{T}\alpha_{s}\right)}V_{t}^{\alpha_{t}-1}(1-V_{t})^{\sum_{s=t+1}^{T}\alpha_{s}-1},
\end{align}
from which claim \eqref{eq: gamma_beta_claim} immediately follows. This result, in combination
with Lemma \ref{lem: P1_PT_are_exponential} above gives, for $\widetilde{P}_{t}$
for $t=1,\ldots,T-1$ and $\overline{P}$ as defined in the main text,
the following lemma.
\begin{lemma}
\textsc{(``Helmert Transformation'' for the MPH Model) }\label{lem: helmert_tranformation_MPH} For
the MPH model defined by Example \ref{ex: mph_intro} we have, conditional on $Y_{0},X_{1}$
and $A$, (i)
\begin{align*}
\widetilde{P}_{t} & \sim\mathrm{Beta}\left(1,T-t\right),\quad t=1,\ldots,T-1,\\
\overline{P} & \sim\mathrm{Gamma}\left(T,e^{A}\right),
\end{align*}
with (ii) $(\widetilde{P}_{1},\widetilde{P}_{2},\ldots,\widetilde{P}_{T-1},\overline{P})$
additionally mutually independent of one another.
\end{lemma}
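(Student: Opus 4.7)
The plan is to deduce Lemma \ref{lem: helmert_tranformation_MPH} directly from the general change-of-variables calculation already carried out in Subsection \ref{app: detailed calculations} (equations \eqref{eq: det_for_rho_tilde_rho_bar_change_of_variables} and \eqref{eq: density_of_rho_tilde_rho_bar}), together with part (ii) of Lemma \ref{lem: P1_PT_are_exponential}. The key observation is that an $\mathrm{Exponential}(e^{A})$ distribution is precisely a $\mathrm{Gamma}(1,e^{A})$ distribution, so the generic set-up with shape parameters $\alpha_{t}$ and common rate $\beta$ applies in our conditional setting with $\alpha_{t}=1$ for every $t=1,\ldots,T$ and $\beta=e^{A}$.

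First, I would condition throughout on $(Y_{0},X_{1},A)$ and invoke Lemma \ref{lem: P1_PT_are_exponential}(ii) to conclude that $P_{1},\ldots,P_{T}$ are mutually independent $\mathrm{Gamma}(1,e^{A})$ random variables given this conditioning. Next, observe that the map $(\widetilde{P}_{1},\ldots,\widetilde{P}_{T-1},\overline{P})=G(P_{1},\ldots,P_{T})$ defined in \eqref{eq: MPH_within_between} is exactly the bijective forward orthogonal transformation analyzed in the paragraphs preceding Lemma \ref{lem: helmert_tranformation_MPH}. Therefore, I can simply invoke \eqref{eq: density_of_rho_tilde_rho_bar} with $\alpha_{t}=1$ and $\beta=e^{A}$ to read off the conditional joint density of $(\widetilde{P}_{1},\ldots,\widetilde{P}_{T-1},\overline{P})$ given $(Y_{0},X_{1},A)$.

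Plugging in $\alpha_{t}=1$, the shape parameters telescope to $\sum_{s=t+1}^{T}\alpha_{s}=T-t$ and $\sum_{t=1}^{T}\alpha_{t}=T$, so \eqref{eq: density_of_rho_tilde_rho_bar} factors as
\begin{equation*}
f\bigl(\widetilde{p}_{1},\ldots,\widetilde{p}_{T-1},\overline{p}\mid y_{0},x_{1},a\bigr)=\Biggl[\frac{(e^{a})^{T}}{\Gamma(T)}\overline{p}^{\,T-1}e^{-e^{a}\overline{p}}\Biggr]\prod_{t=1}^{T-1}\bigl[(T-t)(1-\widetilde{p}_{t})^{T-t-1}\bigr],
\end{equation*}
since $\Gamma(T-t+1)/[\Gamma(1)\Gamma(T-t)]=T-t$ is the normalizing constant for $\mathrm{Beta}(1,T-t)$. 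This factorization is the whole lemma: the first bracket is the $\mathrm{Gamma}(T,e^{A})$ density of $\overline{P}$, each term in the product is the $\mathrm{Beta}(1,T-t)$ density of $\widetilde{P}_{t}$, and the product structure itself yields mutual independence of $(\widetilde{P}_{1},\ldots,\widetilde{P}_{T-1},\overline{P})$ conditional on $(Y_{0},X_{1},A)$.

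There is essentially no obstacle here beyond book-keeping, since all of the real work has been done in the preceding change-of-variables derivation; the only thing to be careful about is specializing the shape parameters correctly and verifying that the independence claim in the lemma is conditional on $(Y_{0},X_{1},A)$ rather than marginal (the factorization above delivers exactly this conditional independence). No rate-parameter issues arise because $e^{A}$ is $(Y_{0},X_{1},A)$-measurable and hence acts as a constant in the conditional distribution.
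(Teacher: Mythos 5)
Your proposal is correct and follows essentially the same route as the paper: the paper derives the general change-of-variables result for independent $\mathrm{Gamma}(\alpha_t,\beta)$ variables in equations \eqref{eq: det_for_rho_tilde_rho_bar_change_of_variables}--\eqref{eq: density_of_rho_tilde_rho_bar} and then obtains the lemma by combining this with Lemma \ref{lem: P1_PT_are_exponential}, which is exactly your specialization $\alpha_t=1$, $\beta=e^{A}$. Your explicit factorization of the conditional joint density (and the check that $\Gamma(T-t+1)/[\Gamma(1)\Gamma(T-t)]=T-t$ is the $\mathrm{Beta}(1,T-t)$ normalizing constant) correctly fills in the book-keeping the paper leaves implicit.
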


\subsection{Semiparametric efficiency bounds for the MPH model}

Here we present our semiparametric efficiency bound derivations for the MPH model. Our calculations exploit Lemma \ref{lem_moment_charac_MPH} and Theorem \ref{theo_eff} of the main text. In order to compute the efficient score for the MPH using equation \eqref{eq_effscore} in the main text, the following lemma is useful.
\begin{lemma}  \label{proj_lemma_MPH}
\textsc{(MPH Projection)}
    For any $L\times 1$ mean-zero random vector $\varphi_{\theta}(Y^{T},X^T)$ such that $\mathbb{E}\left[\norm{\varphi_{\theta}(Y^{T},X^T)}^2\right]<\infty$, we have
    \begin{align*}
        \Pi(\varphi_{\theta}(Y^{T},X^T)|\mathcal{T}_{\theta,\omega,L}^{\perp})&=\sum_{t=1}^{T-1}\varphi_{\theta,t}^{\perp}(Y_0,\widetilde{P}^{t},\overline{P},X^{t}),
    \end{align*}
    with
    \begin{multline*}
        \varphi_{\theta,t}^{\perp}(Y_0,\widetilde{P}^{t},\overline{P},X^{t}) = \mathbb{E}\left[\varphi_{\theta}(Y^{T},X^T)|Y_0,\widetilde{P}^{t},\overline{P},X^{t}\right] 
        -\mathbb{E}\left[\varphi_{\theta}(Y^{T},X^T)|Y_0,\widetilde{P}^{t-1},\overline{P},X^{t}\right].
    \end{multline*}
\end{lemma}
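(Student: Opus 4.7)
\textbf{Proof plan for Lemma \ref{proj_lemma_MPH}.} The plan is to leverage Lemma \ref{lem_moment_charac_MPH}, which characterizes $\mathcal{T}_{\theta,\omega,L}^{\perp}$ as the $L^{2}$-closed linear span of sums $\sum_{t=1}^{T-1}\psi_{\theta,t}(Y_{0},\widetilde{P}^{t},\overline{P},X^{t})$ with $\mathbb{E}[\psi_{\theta,t}\mid Y_{0},\widetilde{P}^{t-1},\overline{P},X^{t}]=0$. Relative to this characterization, I introduce the ``filtration'' $\mathcal{F}_{t}:=\sigma(Y_{0},\widetilde{P}^{t},\overline{P},X^{t})$ and its companion $\mathcal{G}_{t}:=\sigma(Y_{0},\widetilde{P}^{t-1},\overline{P},X^{t})$, which satisfy $\mathcal{G}_{t}\subset \mathcal{F}_{t}\subset \mathcal{G}_{t+1}$. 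The proposed candidate is then exactly the martingale-difference decomposition of $\mathbb{E}[\varphi_{\theta}\mid\mathcal{F}_{T-1}]$ along this refinement. Two steps are required: (I) verify the candidate lies in $\mathcal{T}_{\theta,\omega,L}^{\perp}$, and (II) verify that the residual $\varphi_{\theta}-\sum_{t}\varphi_{\theta,t}^{\perp}$ is orthogonal to every element of $\mathcal{T}_{\theta,\omega,L}^{\perp}$.

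\textbf{Step (I).} Each $\varphi_{\theta,t}^{\perp}$ is $\mathcal{F}_{t}$-measurable by construction (being a difference of two conditional expectations, the coarser of which has sigma-algebra $\mathcal{G}_{t}\subset\mathcal{F}_{t}$), so it is a function of $(Y_{0},\widetilde{P}^{t},\overline{P},X^{t})$. Square-integrability is inherited from $\varphi_{\theta}$ via Jensen's inequality. The conditional-mean-zero property is an immediate application of the tower law:
\[
\mathbb{E}\bigl[\varphi_{\theta,t}^{\perp}\,\bigm|\,\mathcal{G}_{t}\bigr]
=\mathbb{E}\bigl[\mathbb{E}[\varphi_{\theta}\mid\mathcal{F}_{t}]\,\bigm|\,\mathcal{G}_{t}\bigr]-\mathbb{E}[\varphi_{\theta}\mid\mathcal{G}_{t}]=0,
\]
because $\mathcal{G}_{t}\subset\mathcal{F}_{t}$. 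Hence Lemma \ref{lem_moment_charac_MPH} places $\sum_{t=1}^{T-1}\varphi_{\theta,t}^{\perp}\in\mathcal{T}_{\theta,\omega,L}^{\perp}$.

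\textbf{Step (II).} Fix any $\psi_{\theta,s}(Y_{0},\widetilde{P}^{s},\overline{P},X^{s})$ satisfying $\mathbb{E}[\psi_{\theta,s}\mid\mathcal{G}_{s}]=0$; by linearity it suffices to show $\mathbb{E}\bigl[(\varphi_{\theta}-\sum_{t=1}^{T-1}\varphi_{\theta,t}^{\perp})\psi_{\theta,s}^{\top}\bigr]=0$, which reduces to $\mathbb{E}[\varphi_{\theta}\psi_{\theta,s}^{\top}]=\sum_{t}\mathbb{E}[\varphi_{\theta,t}^{\perp}\psi_{\theta,s}^{\top}]$. I argue cross-term by cross-term. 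For $t>s$, $\psi_{\theta,s}$ is $\mathcal{F}_{s}\subset\mathcal{G}_{t}$-measurable, so conditioning first on $\mathcal{G}_{t}$ and using $\mathbb{E}[\varphi_{\theta,t}^{\perp}\mid\mathcal{G}_{t}]=0$ from Step (I) yields $\mathbb{E}[\varphi_{\theta,t}^{\perp}\psi_{\theta,s}^{\top}]=0$. For $t<s$, $\varphi_{\theta,t}^{\perp}$ is $\mathcal{F}_{t}\subset\mathcal{G}_{s}$-measurable, so conditioning on $\mathcal{G}_{s}$ and invoking the defining property $\mathbb{E}[\psi_{\theta,s}\mid\mathcal{G}_{s}]=0$ again gives zero. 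For $t=s$, the same $\mathcal{G}_{s}$-conditioning trick shows that $\mathbb{E}[\mathbb{E}[\varphi_{\theta}\mid\mathcal{G}_{s}]\psi_{\theta,s}^{\top}]=0$, so $\mathbb{E}[\varphi_{\theta,s}^{\perp}\psi_{\theta,s}^{\top}]=\mathbb{E}[\mathbb{E}[\varphi_{\theta}\mid\mathcal{F}_{s}]\psi_{\theta,s}^{\top}]=\mathbb{E}[\varphi_{\theta}\psi_{\theta,s}^{\top}]$, where the last equality is iterated expectation with $\psi_{\theta,s}$ being $\mathcal{F}_{s}$-measurable. Summing delivers the desired identity.

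\textbf{Main obstacle.} The only delicate bookkeeping concerns the non-standard pair of sigma-algebras $\mathcal{F}_{t}$ and $\mathcal{G}_{t}$: within a period, $\mathcal{G}_{t}$ already conditions on $X_{t}$ but not yet on $\widetilde{P}_{t}$, so the ``martingale increment'' $\varphi_{\theta,t}^{\perp}$ captures variation along $\widetilde{P}_{t}$ only. This asymmetry is exactly what Lemma \ref{lem_moment_charac_MPH} is built around, so once the filtration $\mathcal{G}_{t}\subset\mathcal{F}_{t}\subset\mathcal{G}_{t+1}$ is made explicit the cross-term calculation is routine. Combining (I) and (II) with closedness of $\mathcal{T}_{\theta,\omega,L}^{\perp}$ (noted after equation \eqref{eq_effscore}) characterizes the candidate uniquely as $\Pi(\varphi_{\theta}\mid\mathcal{T}_{\theta,\omega,L}^{\perp})$.
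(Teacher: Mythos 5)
Your proposal is correct and follows essentially the same route as the paper's proof: show the martingale-difference candidate lies in $\mathcal{T}_{\theta,\omega,L}^{\perp}$ via the characterization in Lemma \ref{lem_moment_charac_MPH} together with Theorem \ref{theo_eff}, then verify orthogonality of the residual term by term over $s,t$ and invoke the Projection Theorem. Your explicit filtration notation $\mathcal{G}_{t}\subset\mathcal{F}_{t}\subset\mathcal{G}_{t+1}$ is just a cleaner packaging of the same cross-term bookkeeping the paper carries out.
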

\begin{proof}
First, we note that by iterated expectations,
\begin{align*}
    \mathbb{E}\left[\varphi_{\theta,t}^{\perp}(Y_0,\widetilde{P}^{t},\overline{P},X^{t})|Y_0,\widetilde{P}^{t-1},\overline{P},X^{t}\right]=0.
\end{align*}
Thus, Lemma \ref{lem_moment_charac_MPH} and Theorem \ref{theo_eff} imply that $\sum_{t=1}^{T-1}\varphi_{\theta,t}^{\perp}(Y_0,\widetilde{P}^{t},\overline{P},X^{t})$
is an element of $\mathcal{T}_{\theta,\omega,L}^{\perp}$. Next fix $s,t\in\{1,\ldots,T-1\}$ and consider any function $\psi_{\theta,s}(Y_0,\widetilde{P}^{s},\overline{P},X^{s})\in \mathcal{T}_{\theta,\omega,L}^{\perp}$ such that
\begin{align*}
    \mathbb{E}\left[\psi_{\theta,s}(Y_0,\widetilde{P}^{s},\overline{P},X^{s})|Y_0,\widetilde{P}^{s-1},\overline{P},X^{s}\right]=0.
\end{align*}
Observe that if $s>t$
\begin{align*}
    &\mathbb{E}\left[\varphi_{\theta,t}^{\perp}(Y_0,\widetilde{P}^{t},\overline{P},X^{t})'\psi_{\theta,s}(Y_0,\widetilde{P}^{s},\overline{P},X^{s})\right] \\
    &= \mathbb{E}\left[\varphi_{\theta,t}^{\perp}(Y_0,\widetilde{P}^{t},\overline{P},X^{t})' \mathbb{E}\left[\psi_{\theta,s}(Y_0,\widetilde{P}^{s},\overline{P},X^{s})|Y_0,\widetilde{P}^{s-1},\overline{P},X^{s}\right]\right] \\
    &=0.
\end{align*}
Symmetrically, if $t>s$, 
\begin{align*}
    \mathbb{E}\left[\varphi_{\theta,t}^{\perp}(Y_0,\widetilde{P}^{t},\overline{P},X^{t})'\psi_{\theta,s}(Y_0,\widetilde{P}^{s},\overline{P},X^{s})\right]=0,
\end{align*}
since by construction $\mathbb{E}\left[\varphi_{\theta,t}^{\perp}(Y_0,\widetilde{P}^{t},\overline{P},X^{t})|Y_0,\widetilde{P}^{t-1},\overline{P},X^{t}\right]=0$. Using these observations we can show:
\begin{align*}
    \mathbb{E}&\left[\left(\varphi_{\theta}(Y^{T},X^T) -\sum_{t=1}^{T-1}\varphi_{\theta,t}^{\perp}(Y_0,\widetilde{P}^{t},\overline{P},X^{t})\right)'\psi_{\theta,s}\left( 
    Y_0,\widetilde{P}^{s},\overline{P},X^{s}\right)\right] \\
    =&\mathbb{E}\left[\left(\varphi_{\theta}(Y^{T},X^T)-\varphi_{\theta,s}^{\perp}(Y_0,\widetilde{P}^{s},\overline{P},X^{s})\right)'\psi_{\theta,s}\left(Y_0,\widetilde{P}^{s},\overline{P},X^{s}\right)\right] \\
     =&\mathbb{E}\left[\left(\varphi_{\theta}(Y^{T},X^T)-\mathbb{E}\left[\varphi_{\theta}(Y^{T},X^T)|Y_0,\widetilde{P}^{s},\overline{P},X^{s}\right]\right)'\psi_{\theta,s}\left(Y_0,\widetilde{P}^{s},\overline{P},X^{s}\right)\right] \\
     &+\mathbb{E}\left[\mathbb{E}\left[\varphi_{\theta}(Y^{T},X^T)|Y_0,\widetilde{P}^{s-1},\overline{P},X^{s}\right]'\psi_{\theta,s}\left(Y_0,\widetilde{P}^{s},\overline{P},X^{s}\right)\right] \\
     =&0,
\end{align*}
where the second equality uses the definition of $\varphi_{\theta,t}^{\perp}(Y_0,\widetilde{P}^{t},\overline{P},X^{t})$ given in the statement of the Lemma and where the last line follows from iterated expectations and the fact that $\mathbb{E}\left[\psi_{\theta,s}(Y_0,\widetilde{P}^{s},\overline{P},X^{s})|Y_0,\widetilde{P}^{s-1},\overline{P},X^{s}\right]=0$.
Since by Lemma \ref{lem_moment_charac_MPH} and Theorem \ref{theo_eff}, any function of $\mathcal{T}_{\theta,\omega,L}^{\perp}$ is of the form:
\begin{align*}
    \psi_{\theta}(Y_{0},\widetilde{P}^{T-1},\overline{P},X^{T-1})=\sum_{t=1}^{T-1}\psi_{\theta,t}(Y_0,\widetilde{P}^{t},\overline{P},X^{t}),
\end{align*}
with $ \mathbb{E}\left[\left.\psi_{\theta,t}(Y_{0},\widetilde{P}^{t},\overline{P},X^{t})\right|Y_{0},\widetilde{P}^{t-1},\overline{P},X^{t}\right]=0, \quad t=1,\ldots,T-1$, these calculations show that 
\begin{align*}
        &\mathbb{E}\left[\left(\varphi_{\theta}(Y^{T},X^T)-\sum_{t=1}^{T-1}\varphi_{\theta,t}^{\perp}(Y_0,\widetilde{P}^{t},\overline{P},X^{t})\right)\psi_{\theta}(Y_{0},\widetilde{P}^{T-1},\overline{P},X^{T-1})\right]=0.
\end{align*}
By the Projection Theorem (e.g Theorem 11.1 in \cite{van2000asymptotic}) we conclude that, 
\begin{align*}
     \Pi(\varphi_{\theta}(Y_0,Y^{T},X^T)|\mathcal{T}_{\theta,\omega,L}^{\perp})&=\sum_{t=1}^{T-1}\varphi_{\theta,t}^{\perp}(Y_0,\widetilde{P}^{t},\overline{P},X^{t}),
\end{align*}
as claimed.
\end{proof}
\subsubsection{MPH efficient score under feedback}
We can use Lemma \ref{proj_lemma_MPH} to derive an explicit expression of the efficient score for $\theta=\left(\alpha,\beta',\gamma\right)'$. Given the historical and continued importance of the MPH model, the required calculations, although involved and tedious, are presented in detail here.

Recall the form the integrated likelihood for the semiparametric panel data model with feedback presented in equation (\ref{eq: feedback_complete_data_likelihood}) of the main text. The form of the parametric period $t$ panel data model equals, in the MPH setting,
\begin{equation*}
    f_{\theta}(y_t|x_t,y_{t-1},a)=\lambda_{\alpha}\left(y_{t}\right)\exp\left(\gamma y_{t-1}+x_{t}'\beta+a\right)\exp\left(-\rho_{\theta}\left(z_{t}\right)e^a\right).
\end{equation*}
Further recall the notation: $P_t= \rho_{\theta}\left(Z_{t}\right)=\Lambda_{\alpha}\left(Y_{t}\right)e^{\gamma Y_{t-1}+X_{t}'\beta}$ (when evaluated at population value $\theta=\theta_0$). The (submodel) score for $\theta$ equals
 \begin{align*}
     S^{\theta}(Y^T,X^T) = \mathbb{E}\left[\sum_{t=1}^T \frac{\partial \ln f_{\theta}(Y_t\,|\, Y_{t-1},X_t,A)}{\partial \theta}\,|\, Y^T,X^T\right].
 \end{align*}
By Theorem \ref{theo_eff} and Lemma \ref{proj_lemma_MPH} the efficient score, when $T=2$, equals
\begin{align} \label{eq: eff_score_MPH_feedback_general_T=2}
        \phi_{\theta}^{\rm eff}(Y^2,X^2)=\mathbb{E}\left[S^{\theta}(Y^2,X^2)\,|\,Y_{0},\widetilde{P}_{1},\overline{P},X_{1}\right]-\mathbb{E}\left[S^{\theta}(Y^2,X^2)\,|\,Y_{0},\overline{P},X_{1}\right],    
\end{align}
while, in the general $T\geq 2$ case, it equals
\begin{align*}
    \phi_{\theta}^{\rm eff}(Y^T,X^T)&=\sum_{t=1}^{T-1}\phi_{\theta,t}^{\rm eff, \perp}(Y_0,\widetilde{P}^{1:t},\overline{P},X^{1:t}), 
\end{align*}
with, for $t=1,\ldots,T-1$,
\begin{multline}
    \phi_{\theta,t}^{\rm eff, \perp}(Y_0,\widetilde{P}^{1:t},\overline{P},X^{1:t}) \\
    = \mathbb{E}\left[S^{\theta}(Y^T,X^T)|Y_0,\widetilde{P}^{1:t},\overline{P},X^{1:t}\right]-\mathbb{E}\left[S^{\theta}(Y^T,X^T)|Y_0,\widetilde{P}^{1:t-1},\overline{P},X^{1:t}\right].
\end{multline}
In what follows we focus on the $T=2$ special case for simplicity (and because it is this case we study in the numerical experiments reported in Section \ref{sec: mph_deep_dive} of the main text). The calculations proceed by evaluating -- and simplifying as far as appears to be feasible -- equation \eqref{eq: eff_score_MPH_feedback_general_T=2} above.
We do this separately for $\beta$, $\gamma$ and $\alpha$. Our efficient score expression for $\beta$ holds for any baseline hazard, those we present for $\gamma$ and $\alpha$ assume that the baseline hazard takes the Weibull form: $\lambda_{\alpha}(y_t)=\alpha y_{t}^{\alpha-1}$. The calculations presented below can be adapted to study semiparametric efficiency bounds under other maintained baseline hazard assumptions.

\textbf{The efficient score for $\beta$: } The (submodel) score for $\beta$, recalling the notation re-established immediately above, equals:
\begin{align}
		S^{\beta}(Y^T,X^T)&=\mathbb{E}\left[\sum_{t=1}^T \frac{\partial \ln f_{\theta}(Y_t\,|\, Y_{t-1},X_t,A)}{\partial \beta}\,|\, Y^T,X^T\right] \label{eq: submodel_beta_score_MPH} \\
		&=\mathbb{E}\left[\sum_{t=1}^T X_{t}\left(1-P_t e^{a}\right)\,|\,Y^T,X^T\right] \notag \\
  &=\sum_{t=1}^T X_t-\sum_{t=1}^T X_tP_t\mathbb{E}\left[e^{A}\,|\, Y^T,X^T\right], \notag
\end{align}
which, when $T=2$, coincides with
\begin{align*}
		S^{\beta}(Y^2,X^2)&= X_{1}+X_{2}-\left(X_1P_1+X_2P_2\right)\mathbb{E}\left[e^{A}\,|\, Y^2,X^2\right].
\end{align*}
Reparameterizing in terms of $\widetilde{P}_1$ and $\overline{P}$ yields the equivalent expression
\begin{align*}
		S^{\beta}(Y^2,X^2)&= X_{1}+X_{2}-\left(X_1\widetilde{P}_1\overline{P}+X_2(1-\widetilde{P}_1)\overline{P}\right)\mathbb{E}\left[e^{A}\,|\, Y_0,\widetilde{P}_1,\overline{P},X^2\right].
\end{align*}
The efficient score for $\beta$, from equation \eqref{eq: eff_score_MPH_feedback_general_T=2} above, equals
\begin{align}
   \phi_{\theta}^{\rm eff,\beta}(Y^2,X^2)=\mathbb{E}\left[S^{\beta}(Y^2,X^2)|Y_0,\widetilde{P}_1,\overline{P},X_1\right]-\mathbb{E}\left[S^{\beta}(Y^2,X^2)|Y_0,\overline{P},X_1\right].\label{eq: efficient_score_beta_general_T=2}
\end{align}
We evaluate the two expectations to the right of the equality above in sequence. First,
\begin{align*}
    &\mathbb{E}\left[S^{\beta}(Y^2,X^2)|Y_0,\widetilde{P}_1,\overline{P},X_1\right]\\
    &=X_{1}+\mathbb{E}\left[X_{2}|Y_0,\widetilde{P}_1,\overline{P},X_1\right]-X_1\widetilde{P}_1\overline{P}\mathbb{E}\left[e^{A}\,|\, Y_0,\overline{P},X_1\right]-(1-\widetilde{P}_1)\overline{P}\mathbb{E}\left[X_2e^{A}\,|\, Y_0,\widetilde{P}_1,\overline{P},X_1\right].
\end{align*}
Second, and making use of Lemma \ref{lem: helmert_tranformation_MPH} when evaluating expectations over $\widetilde{P}_1$,
\begin{align*}
    &\mathbb{E}\left[S^{\beta}(Y^2,X^2)|Y_0,\overline{P},X_1\right]\\
    &=X_{1}+\mathbb{E}\left[X_{2}|Y_0,\overline{P},X_1\right]-X_1\overline{P}\mathbb{E}\left[\widetilde{P}_1e^{A}\,|\, Y_0,\overline{P},X_1\right]-\overline{P}\mathbb{E}\left[X_2(1-\widetilde{P}_1)e^{A}\,|\, Y_0,\overline{P},X_1\right] \\
    &=X_{1}+\mathbb{E}\left[X_{2}|Y_0,\overline{P},X_1\right]-\frac{1}{2}X_1\overline{P}\mathbb{E}\left[e^{A}\,|\, Y_0,\overline{P},X_1\right]-\overline{P}\mathbb{E}\left[X_2(1-\widetilde{P}_1)e^{A}\,|\, Y_0,\overline{P},X_1\right]. 
\end{align*}
Collecting terms we get an efficient score for $\beta$ of
\begin{align*}
    \phi_{\theta}^{\rm eff,\beta}(Y^2,X^2)&=-X_1\left(\widetilde{P}_1-\frac{1}{2}\right)\overline{P}\mathbb{E}\left[e^{A}\,|\, Y_0,\overline{P},X_1\right] \\
    &+\mathbb{E}\left[X_{2}|Y_0,\widetilde{P}_1,\overline{P},X_1\right]-\mathbb{E}\left[X_{2}|Y_0,\overline{P},X_1\right] \\
    &-\left((1-\widetilde{P}_1)\mathbb{E}\left[X_2e^{A}\,|\, Y_0,\widetilde{P}_1,\overline{P},X_1\right]-\mathbb{E}\left[X_2(1-\widetilde{P}_1)e^{A}\,|\, Y_0,\overline{P},X_1\right]\right)\overline{P},
\end{align*}
or more concisely
\begin{align}
    \phi_{\theta}^{\rm eff,\beta}(Y^2,X^2)=&-X_1\left(\widetilde{P}_1-\frac{1}{2}\right)\overline{P}\mathbb{E}\left[e^{A}\,|\, Y_0,\overline{P},X_1\right] \notag \\
    &+\mathbb{E}\left[X_{2}\left(1-(1-\widetilde{P}_1)\overline{P}e^{A}\right)|Y_0,\widetilde{P}_1,\overline{P},X_1\right] \notag \\
    &-\mathbb{E}\left[X_{2}\left(1-(1-\widetilde{P}_1)\overline{P}e^{A}\right)|Y_0,\overline{P},X_1\right],\label{eq: efficient_score_beta_specific_T=2}
\end{align}
as claimed in Section \ref{sec: mph_deep_dive}.

\textbf{The efficient score for $\gamma$: } The (submodel) score for $\gamma$ equals:
	\begin{align*}
		S^{\gamma}(Y^T,X^T)&=\mathbb{E}\left[\sum_{t=1}^T \frac{\partial \ln f_{\theta}(Y_t\,|\, Y_{t-1},X_t,A)}{\partial \gamma}\,|\, Y^T,X^T\right]	\\
		&=\mathbb{E}\left[\sum_{t=1}^T Y_{t-1}\left(1-P_t e^{a}\right)\,|\,Y^T,X^T\right] \\
  &=\sum_{t=1}^T Y_{t-1}-\sum_{t=1}^T Y_{t-1}P_t\mathbb{E}\left[e^{A}\,|\, Y^T,X^T\right],
	\end{align*}
which, when $T=2$, coincides with
\begin{align*}
		S^{\gamma}(Y^2,X^2)&= Y_{0}+Y_{1}-\left(Y_0P_1+Y_1P_2\right)\mathbb{E}\left[e^{A}\,|\, Y^2,X^2\right],
\end{align*}
 or, after reparameterization in terms of $(\widetilde{P}_1,\overline{P})$, with
 \begin{align*}
		S^{\gamma}(Y^2,X^2)&= Y_{0}+Y_{1}-\left(Y_0\widetilde{P}_1\overline{P}+Y_1(1-\widetilde{P}_1)\overline{P}\right)\mathbb{E}\left[e^{A}\,|\, Y_0,\widetilde{P}_1,\overline{P},X^2\right].
\end{align*}
The efficient score for $\gamma$, from equation \eqref{eq: eff_score_MPH_feedback_general_T=2} above, equals
\begin{align}
     \phi_{\theta}^{\rm eff,\gamma}(Y^2,X^2)=\mathbb{E}\left[S^{\gamma}(Y^2,X^2)|Y_0,\widetilde{P}_1,\overline{P},X_1\right]-\mathbb{E}\left[S^{\gamma}(Y^2,X^2)|Y_0,\overline{P},X_1\right]\label{eq: efficient_score_gamma_general_T=2},
\end{align}
where evaluate the first expectation to the right of equality as
\begin{align*}
    \mathbb{E}\left[S^{\gamma}(Y^2,X^2)|Y_0,\widetilde{P}_1,\overline{P},X_1\right]&=Y_{0}+Y_{1}-\left(Y_0\widetilde{P}_1\overline{P}+Y_1(1-\widetilde{P}_1)\overline{P}\right)\mathbb{E}\left[e^{A}\,|\, Y_0,\widetilde{P}_1,\overline{P},X_1\right] \\
    &=Y_{0}+Y_{1}-\left(Y_0\widetilde{P}_1\overline{P}+Y_1(1-\widetilde{P}_1)\overline{P}\right)\mathbb{E}\left[e^{A}\,|\, Y_0,\overline{P},X_1\right], 
\end{align*}
using the fact that $\widetilde{P}_1\perp (A,\overline{P})|X_0,Y_1$ (see Lemma \ref{lem: helmert_tranformation_MPH} above). 

In order to evaluate the second expectation entering the efficient score for $\gamma$ we consider the special case of a Weibull baseline hazard: $\Lambda_{\alpha}(y_t)=y_t^{\alpha}$, $\lambda_{\alpha}(y_t)=\alpha y_t^{\alpha-1}$. This hazard implies the convenient relation $Y_1=P_1^{\frac{1}{\alpha}}e^{-X_1'\frac{\beta}{\alpha}-\frac{\gamma}{\alpha}Y_0}=\widetilde{P}_1^{\frac{1}{\alpha}}\overline{P}^{\frac{1}{\alpha}}e^{-X_1'\frac{\beta}{\alpha}-\frac{\gamma}{\alpha}Y_0}$. This allows us to use Lemma \ref{lem: helmert_tranformation_MPH} to evaluate the expectation:
\begin{align*}
    \mathbb{E}\left[Y_1|Y_0,\overline{P},X_1\right]&=e^{-X_1'\frac{\beta}{\alpha}-\frac{\gamma}{\alpha}Y_0}\mathbb{E}\left[\widetilde{P}_1^{\frac{1}{\alpha}}|Y_0,\overline{P},X_1\right]\overline{P}^{\frac{1}{\alpha}} \\
    &=\frac{\alpha}{1+\alpha}e^{-X_1'\frac{\beta}{\alpha}-\frac{\gamma}{\alpha}Y_0} \overline{P}^{\frac{1}{\alpha}},
\end{align*}
where we have used the fact that $\widetilde{P}_1\sim U[0,1]=\mathrm{Beta}\left(1,1\right)$. Proceeding in the same general way we also get
\begin{align*}
    \mathbb{E}\left[Y_1(1-\widetilde{P}_1)|Y_0,\overline{P},X_1\right]&=e^{-X_1'\frac{\beta}{\alpha}-\frac{\gamma}{\alpha}Y_0}\mathbb{E}\left[\widetilde{P}_1^{\frac{1}{\alpha}}(1-\widetilde{P}_1)|Y_0,\overline{P},X_1\right]\overline{P}^{\frac{1}{\alpha}} \\
    &=\left(\frac{\alpha}{1+\alpha}-\frac{\alpha}{1+2\alpha}\right)e^{-X_1'\frac{\beta}{\alpha}-\frac{\gamma}{\alpha}Y_0} \overline{P}^{\frac{1}{\alpha}}.
\end{align*}
Using these two results we evaluate the second expectation in equation \eqref{eq: efficient_score_gamma_general_T=2} as:
\begin{align*}
    \mathbb{E}\left[S^{\gamma}(Y^2,X^2)|Y_0,\overline{P},X_1\right]&=Y_{0}+\mathbb{E}\left[Y_1|Y_0,\overline{P},X_1\right] \\
    &-\left(Y_0\mathbb{E}\left[\widetilde{P}_1|Y_0,\overline{P},X_1\right]+\mathbb{E}\left[Y_1(1-\widetilde{P}_1)|Y_0,\overline{P},X_1\right]\right)\overline{P}\mathbb{E}\left[e^{A}\,|\, Y_0,\overline{P},X_1\right] \\
    &=Y_{0}+\frac{\alpha}{1+\alpha}e^{-X_1'\frac{\beta}{\alpha}-\frac{\gamma}{\alpha}Y_0} \overline{P}^{\frac{1}{\alpha}} \\
    &-\left(Y_0\frac{1}{2}+\left(\frac{\alpha}{1+\alpha}-\frac{\alpha}{1+2\alpha}\right)e^{-X_1'\frac{\beta}{\alpha}-\frac{\gamma}{\alpha}Y_0} \overline{P}^{\frac{1}{\alpha}}\right)\overline{P}\mathbb{E}\left[e^{A}\,|\, Y_0,\overline{P},X_1\right]. 
\end{align*}
Collecting terms yields an efficient score for $\gamma$ of:
\begin{multline}
    \phi_{\theta}^{\rm eff,\gamma}(Y^2,X^2)(\widetilde{P}_1,\overline{P},X_1)=Y_1-\frac{\alpha}{1+\alpha}e^{-X_1'\frac{\beta}{\alpha}-\frac{\gamma}{\alpha}Y_0} \overline{P}^{\frac{1}{\alpha}}  \\
    -\left(Y_0\left(\widetilde{P}_1-\frac{1}{2}\right)+\left(Y_1(1-\widetilde{P}_1)-\left(\frac{\alpha}{1+\alpha}-\frac{\alpha}{1+2\alpha}\right)e^{-X_1'\frac{\beta}{\alpha}-\frac{\gamma}{\alpha}Y_0} \overline{P}^{\frac{1}{\alpha}}\right)\right)\\
    \times \overline{P}\mathbb{E}\left[e^{A}\,|\, Y_0,\overline{P},X_1\right].\label{eq: efficient_score_gamma_specific_T=2}
\end{multline}
This expression does not appear in the main text.

\textbf{The efficient score for $\alpha$: } 
The (submodel) score for $\alpha$ equals:
	\begin{align*}
		S^{\alpha}(Y^T,X^T)&=\mathbb{E}\left[\sum_{t=1}^T \frac{\partial \ln f_{\theta}(Y_t\,|\, Y_{t-1},X_t,A)}{\partial \alpha}\,|\, Y^T,X^T\right]	\\
		&=\mathbb{E}\left[\sum_{t=1}^T \pdv{\ln \lambda_{\alpha}(Y_t)
        }{\alpha}-\pdv{\ln \Lambda_{\alpha}(Y_t)
        }{\alpha}P_te^{A}\,|\,Y^T,X^T \right] \\
        &=\sum_{t=1}^T \pdv{\ln\lambda_{\alpha}(Y_t)
        }{\alpha}-\sum_{t=1}^T\pdv{\ln\Lambda_{\alpha}(Y_t)
        }{\alpha}P_t\mathbb{E}\left[e^{A}\,|\,Y^T,X^T \right],
\end{align*}
which, when $T=2$, equals (after reparameterization in terms of $\widetilde{P}_1,\overline{P}$):
	\begin{align*}
		S^{\alpha}(Y^2,X^2)&=\pdv{\ln \lambda_{\alpha}(Y_1)
        }{\alpha}+\pdv{\ln \lambda_{\alpha}(Y_2)
        }{\alpha}\\
        &-\left( \pdv{\ln\Lambda_{\alpha}(Y_1)
        }{\alpha}\widetilde{P}_1+\pdv{\ln\Lambda_{\alpha}(Y_2)
        }{\alpha}\times(1-\widetilde{P}_1)\right)\overline{P}\mathbb{E}\left[e^{A}\,|\,Y_0,\widetilde{P}_1,\overline{P},X_1,X_2 \right],
\end{align*}
and the efficient score for $\alpha$, from equation \eqref{eq: eff_score_MPH_feedback_general_T=2} above, equals
\begin{align}
    \phi_{\theta}^{\rm eff,\alpha}(Y^2,X^2)=\mathbb{E}\left[S^{\alpha}(Y^2,X^2)|Y_0,\widetilde{P}_1,\overline{P},X_1\right]-\mathbb{E}\left[S^{\alpha}(Y^2,X^2)|Y_0,\overline{P},X_1\right].\label{eq: efficient_score_alpha_general_T=2}
\end{align}

To make further progress we again consider the Weibull baseline hazard case; implying $\ln \Lambda_{\alpha}(Y_t)=\alpha \ln Y_t$ and $\ln \lambda_{\alpha}(Y_t)=\ln \alpha +(\alpha-1)\ln Y_t$. This yields
\begin{align*}
    \pdv{\ln \Lambda_{\alpha}(Y_t)}{\alpha}&=\ln Y_t \\
        \pdv{\ln \lambda_{\alpha}(Y_t)}{\alpha}&=\frac{1}{\alpha}+\ln Y_t.
\end{align*}
In the parameterization $\widetilde{P_1}, \overline{P}$, with 
$Y_1=P_1^{\frac{1}{\alpha}}e^{-X_1'\frac{\beta}{\alpha}-\frac{\gamma}{\alpha}Y_0}=\widetilde{P}_1^{\frac{1}{\alpha}}\overline{P}^{\frac{1}{\alpha}}e^{-X_1'\frac{\beta}{\alpha}-\frac{\gamma}{\alpha}Y_0}$ and $Y_2=(1-\widetilde{P}_1)^{\frac{1}{\alpha}}\overline{P}^{\frac{1}{\alpha}}e^{-X_1'\frac{\beta}{\alpha}-\frac{\gamma}{\alpha}Y_0}$ we have
\begin{align*}
    \pdv{\ln \Lambda_{\alpha}(Y_1)}{\alpha}&=\ln Y_1= -\frac{1}{\alpha}\left(X_1'\beta+\gamma Y_0\right)+\frac{1}{\alpha}\ln \widetilde{P}_1+\frac{1}{\alpha}\ln \overline{P} \\
    \pdv{\ln \Lambda_{\alpha}(Y_2)}{\alpha}&=\ln Y_2= -\frac{1}{\alpha}\left(X_2'\beta+\gamma Y_1\right)+\frac{1}{\alpha}\ln (1-\widetilde{P}_1)+\frac{1}{\alpha}\ln \overline{P} \\
    &=-\frac{1}{\alpha}\left(X_2'\beta+\gamma \widetilde{P}_1^{\frac{1}{\alpha}}\overline{P}^{\frac{1}{\alpha}}e^{-X_1'\frac{\beta}{\alpha}-\frac{\gamma}{\alpha}Y_0}\right)+\frac{1}{\alpha}\ln (1-\widetilde{P}_1)+\frac{1}{\alpha}\ln \overline{P},
\end{align*}
and 
\begin{align*}
    \pdv{\ln \lambda_{\alpha}(Y_1)}{\alpha}&=\frac{1}{\alpha} -\frac{1}{\alpha}\left(X_1'\beta+\gamma Y_0\right)+\frac{1}{\alpha}\ln \widetilde{P}_1+\frac{1}{\alpha}\ln \overline{P} \\
    \pdv{\ln \lambda_{\alpha}(Y_2)}{\alpha}&=\frac{1}{\alpha}-\frac{1}{\alpha}\left(X_2'\beta+\gamma \widetilde{P}_1^{\frac{1}{\alpha}}\overline{P}^{\frac{1}{\alpha}}e^{-X_1'\frac{\beta}{\alpha}-\frac{\gamma}{\alpha}Y_0}\right)+\frac{1}{\alpha}\ln (1-\widetilde{P}_1)+\frac{1}{\alpha}\ln \overline{P}.
\end{align*}
The following implications of Lemma \ref{lem: helmert_tranformation_MPH} are useful for the calculations which follow: $\widetilde{P}_1\perp \overline{P}|Y_0,X_1$ with $\widetilde{P}_1\sim U[0,1]$ and hence, by the properties of the uniform distribution, $-\ln \widetilde{P}_1\sim \mathrm{Exp}\left(1\right)$. Using these facts we evaluate the following four expectations:
 \begin{align*}
     &\mathbb{E}\left[-\ln(1-\widetilde{P}_1)|Y_0,\overline{P},X_1\right]=\mathbb{E}\left[-\ln (1-\widetilde{P}_1)|Y_0,X_1\right]=\mathbb{E}\left[-\ln \widetilde{P}_1|Y_0,X_1\right]=+1 \\
     &\mathbb{E}\left[\widetilde{P}_1^{\frac{1}{\alpha}}|Y_0,\overline{P},X_1\right]=\frac{\alpha}{\alpha+1} \\
     &\mathbb{E}\left[\widetilde{P}_1^{\frac{1}{\alpha}}(1-\widetilde{P}_1)|Y_0,\overline{P},X_1\right]= \left(\frac{\alpha}{1+\alpha}-\frac{\alpha}{1+2\alpha}\right)\\
    &\mathbb{E}\left[\ln \widetilde{P}_1 \widetilde{P}_1|Y_0,\overline{P},X_1\right]=-\frac{1}{4}.
\end{align*}
Using these calculations we can then evaluate the various conditional expectations in the efficient score expression. We start with
\begin{align*}
    \mathbb{E}\left[\pdv{\ln \lambda_{\alpha}(Y_1)}{\alpha}|Y_0,\widetilde{P}_1,\overline{P},X_1\right]&=\frac{1}{\alpha} -\frac{1}{\alpha}\left(X_1'\beta+\gamma Y_0\right)+\frac{1}{\alpha}\ln \widetilde{P}_1+\frac{1}{\alpha}\ln \overline{P},
\end{align*}
and 
\begin{align*}
    \mathbb{E}\left[\pdv{\ln \lambda_{\alpha}(Y_1)}{\alpha}|Y_0,\overline{P},X_1\right]&=\frac{1}{\alpha} -\frac{1}{\alpha}\left(X_1'\beta+\gamma Y_0\right)-\frac{1}{\alpha}\mathbb{E}\left[-\ln \widetilde{P}_1|Y_0,\overline{P},X_1\right]+\frac{1}{\alpha}\ln \overline{P} \\
    &=-\frac{1}{\alpha}\left(X_1'\beta+\gamma Y_0\right)+\frac{1}{\alpha}\ln \overline{P},
\end{align*}
which together give
\begin{align*}
   \mathbb{E}\left[\pdv{\ln \lambda_{\alpha}(Y_1)}{\alpha}|Y_0,\widetilde{P}_1,\overline{P},X_1\right]- \mathbb{E}\left[\pdv{\ln \lambda_{\alpha}(Y_1)}{\alpha}|Y_0,\overline{P},X_1\right]=\frac{1}{\alpha}+\frac{1}{\alpha}\ln \widetilde{P}_1.
\end{align*}

Second we evaluate
\begin{align*}
    \mathbb{E}\left[\pdv{\ln \lambda_{\alpha}(Y_2)}{\alpha}|Y_0,\widetilde{P}_1,\overline{P},X_1\right]=&\frac{1}{\alpha}-\frac{1}{\alpha}\left( \mathbb{E}\left[X_2|Y_0,\widetilde{P}_1,\overline{P},X_1\right]'\beta+\gamma \widetilde{P}_1^{\frac{1}{\alpha}}\overline{P}^{\frac{1}{\alpha}}e^{-X_1'\frac{\beta}{\alpha}-\frac{\gamma}{\alpha}Y_0}\right)\\
    &+\frac{1}{\alpha}\ln (1-\widetilde{P}_1)+\frac{1}{\alpha}\ln \overline{P},
\end{align*}
and 
\begin{align*}
    \mathbb{E}\left[\pdv{\ln \lambda_{\alpha}(Y_2)}{\alpha}|Y_0,\overline{P},X_1\right]&=\frac{1}{\alpha}-\frac{1}{\alpha}\left( \mathbb{E}\left[X_2|Y_0,\overline{P},X_1\right]'\beta+\gamma  \mathbb{E}\left[\widetilde{P}_1^{\frac{1}{\alpha}}|Y_0,\overline{P},X_1\right]\overline{P}^{\frac{1}{\alpha}}e^{-X_1'\frac{\beta}{\alpha}-\frac{\gamma}{\alpha}Y_0}\right)\\
    &-\frac{1}{\alpha} \mathbb{E}\left[-\ln (1-\widetilde{P}_1)|Y_0,\overline{P},X_1\right]+\frac{1}{\alpha}\ln \overline{P} \\
    &=-\frac{1}{\alpha}\left( \mathbb{E}\left[X_2|Y_0,\overline{P},X_1\right]'\beta+\gamma  \frac{\alpha}{1+\alpha}\overline{P}^{\frac{1}{\alpha}}e^{-X_1'\frac{\beta}{\alpha}-\frac{\gamma}{\alpha}Y_0}\right)+\frac{1}{\alpha}\ln \overline{P},
\end{align*}
which together yield:
\begin{multline*}
     \mathbb{E}\left[\pdv{\ln \lambda_{\alpha}(Y_2)}{\alpha}|Y_0,\widetilde{P}_1,\overline{P},X_1\right]-\mathbb{E}\left[\pdv{\ln \lambda_{\alpha}(Y_2)}{\alpha}|Y_0,\overline{P},X_1\right]\\
     =\frac{1}{\alpha}-\frac{1}{\alpha}\left(\mathbb{E}\left[X_2|Y_0,\widetilde{P}_1,\overline{P},X_1\right]-\mathbb{E}\left[X_2|Y_0,\overline{P},X_1\right]\right)'\beta \\
    -\frac{\gamma}{\alpha} \left(\widetilde{P}_1^{\frac{1}{\alpha}}-\frac{\alpha}{1+\alpha}\right) \overline{P}^{\frac{1}{\alpha}}e^{-X_1'\frac{\beta}{\alpha}-\frac{\gamma}{\alpha}Y_0}+\frac{1}{\alpha}\ln (1-\widetilde{P}_1).
\end{multline*}
Collecting the results so far, we have that the first part of the efficient score, equation \eqref{eq: efficient_score_alpha_general_T=2}, equal to:
\begin{multline*}
    \frac{2}{\alpha}+\frac{1}{\alpha}\ln \widetilde{P}_1+\frac{1}{\alpha}\ln (1-\widetilde{P}_1)-\frac{\gamma}{\alpha} \left(\widetilde{P}_1^{\frac{1}{\alpha}}-\frac{\alpha}{1+\alpha}\right) \overline{P}^{\frac{1}{\alpha}}e^{-X_1'\frac{\beta}{\alpha}-\frac{\gamma}{\alpha}Y_0}\\
    -\frac{1}{\alpha}\left(\mathbb{E}\left[X_2|Y_0,\widetilde{P}_1,\overline{P},X_1\right]-\mathbb{E}\left[X_2|Y_0,\overline{P},X_1\right]\right)'\beta.
\end{multline*}

To find the form of the second part of the efficient score we need to evaluate several additional conditional expectations. We begin with the pair of conditional expectations, using the expression for $\pdv{\ln\Lambda_{\alpha}(Y_1)}{\alpha}$ presented above,
\begin{align*}
    &\mathbb{E}\left[\pdv{\ln\Lambda_{\alpha}(Y_1)
        }{\alpha}\widetilde{P}_1\overline{P}\mathbb{E}\left[e^{A}\,|\,Y_0,\widetilde{P}_1,\overline{P},X^2 \right]|Y_0,\widetilde{P}_1,\overline{P},X_1\right] \\
        &= \mathbb{E}\left[\left(-\frac{1}{\alpha}\left(X_1'\beta+\gamma Y_0\right)+\frac{1}{\alpha}\ln \widetilde{P}_1+\frac{1}{\alpha}\ln \overline{P}\right)\widetilde{P}_1\overline{P}\mathbb{E}\left[e^{A}\,|\,Y_0,\widetilde{P}_1,\overline{P},X^2 \right]|Y_0,\widetilde{P}_1,\overline{P},X_1\right] \\
        &=\left(-\frac{1}{\alpha}\left(X_1'\beta+\gamma Y_0\right)\widetilde{P}_1\overline{P}+\frac{1}{\alpha}\ln \widetilde{P}_1\widetilde{P}_1\overline{P}+\frac{1}{\alpha}\widetilde{P}_1\overline{P}\ln \overline{P}\right)\mathbb{E}\left[e^{A}\,|\,Y_0,\overline{P},X_1\right],
\end{align*}
and also
\begin{align*}
    &\mathbb{E}\left[\pdv{\ln\Lambda_{\alpha}(Y_1)
        }{\alpha}\widetilde{P}_1\overline{P}\mathbb{E}\left[e^{A}\,|\,Y_0,\widetilde{P}_1,\overline{P},X^2 \right]|Y_0,\overline{P},X_1\right] \\
        &=-\frac{1}{\alpha}\left(X_1'\beta+\gamma Y_0\right)\mathbb{E}\left[\widetilde{P}_1e^{A}\,|\,Y_0,\overline{P},X_1\right]\overline{P}+\frac{1}{\alpha}\mathbb{E}\left[\ln \widetilde{P}_1\widetilde{P}_1e^{A}\,|\,Y_0,\overline{P},X_1\right]\overline{P}\\
        &+\frac{1}{\alpha}\mathbb{E}\left[\widetilde{P}_1e^{A}\,|\,Y_0,\overline{P},X_1\right]\overline{P}\ln \overline{P} \\
        &=\left(-\frac{1}{\alpha}\left(X_1'\beta+\gamma Y_0\right)\frac{1}{2}\overline{P}-\frac{1}{\alpha}\frac{1}{4}\overline{P}+\frac{1}{\alpha}\frac{1}{2}\overline{P}\ln \overline{P}\right)\mathbb{E}\left[e^{A}\,|\,Y_0,\overline{P},X_1\right].
\end{align*}
Together this yields a difference of expectations equal to
\begin{multline*}
   \mathbb{E}\left[\pdv{\ln\Lambda_{\alpha}(Y_1)
        }{\alpha}\widetilde{P}_1\overline{P}\mathbb{E}\left[e^{A}\,|\,Y_0,\widetilde{P}_1,\overline{P},X^2 \right]|Y_0,\widetilde{P}_1,\overline{P},X_1\right]\\
        -\mathbb{E}\left[\pdv{\ln\Lambda_{\alpha}(Y_1)
        }{\alpha}\widetilde{P}_1\overline{P}\mathbb{E}\left[e^{A}\,|\,Y_0,\widetilde{P}_1,\overline{P},X^2 \right]|Y_0,\overline{P},X_1\right]\\
        =\left(-\frac{1}{\alpha}\left(X_1'\beta+\gamma Y_0\right)\left(\widetilde{P}_1-\frac{1}{2}\right)+\frac{1}{\alpha}\left(\ln \widetilde{P}_1\widetilde{P}_1+\frac{1}{4}\right)+\frac{1}{\alpha}\left(\widetilde{P}_1-\frac{1}{2}\right)\ln \overline{P}\right) \\
        \times \overline{P}\mathbb{E}\left[e^{A}\,|\,Y_0,\overline{P},X_1\right].
\end{multline*}
Next we evaluate, using the expression for $\pdv{\ln\Lambda_{\alpha}(Y_2)}{\alpha}$ given earlier,
\begin{align*}
    \mathbb{E} & \left[\pdv{\ln\Lambda_{\alpha}(Y_2)
        }{\alpha}\times (1-\widetilde{P}_1)\overline{P}\mathbb{E}\left[e^{A}\,|\,Y_0,\widetilde{P}_1,\overline{P},X^2 \right]|Y_0,\widetilde{P}_1,\overline{P},X_1\right] \\
        =& \mathbb{E}\left[\left(-\frac{1}{\alpha}\left(X_2'\beta+\gamma \widetilde{P}_1^{\frac{1}{\alpha}}\overline{P}^{\frac{1}{\alpha}}e^{-X_1'\frac{\beta}{\alpha}-\frac{\gamma}{\alpha}Y_0}\right)+\frac{1}{\alpha}\ln (1-\widetilde{P}_1)+\frac{1}{\alpha}\ln \overline{P}\right)(1-\widetilde{P}_1)\overline{P} \right. \\
        &\left. \times \mathbb{E}\left[e^{A}\,|\,Y_0,\widetilde{P}_1,\overline{P},X^2 \right]|Y_0,\widetilde{P}_1,\overline{P},X_1\right] \\
        =&-\frac{1}{\alpha}\mathbb{E}\left[X_2e^{A}\,|\,Y_0,\widetilde{P}_1,\overline{P},X_1 \right]'\beta(1-\widetilde{P}_1)\overline{P}\\
        &-\frac{\gamma}{\alpha} \widetilde{P}_1^{\frac{1}{\alpha}}\overline{P}^{\frac{1}{\alpha}}e^{-X_1'\frac{\beta}{\alpha}-\frac{\gamma}{\alpha}Y_0}(1-\widetilde{P}_1)\overline{P} \mathbb{E}\left[e^{A}\,|\,Y_0,\overline{P},X_1 \right] \\
        &+\frac{1}{\alpha}\ln (1-\widetilde{P}_1)(1-\widetilde{P}_1)\overline{P}\mathbb{E}\left[e^{A}\,|\,Y_0,\overline{P},X_1 \right] \\
        &+\frac{1}{\alpha}(1-\widetilde{P}_1)\overline{P}\ln \overline{P}\mathbb{E}\left[e^{A}\,|\,Y_0,\overline{P},X_1 \right].
\end{align*}
We also require:
\begin{align*}
    &\mathbb{E}\left[\pdv{\ln\Lambda_{\alpha}(Y_2)
        }{\alpha}\times (1-\widetilde{P}_1)\overline{P}\mathbb{E}\left[e^{A}\,|\,Y_0,\widetilde{P}_1,\overline{P},X^2 \right]|Y_0,\overline{P},X_1\right] \\
        =&-\frac{1}{\alpha}\mathbb{E}\left[(1-\widetilde{P}_1)X_2e^{A}\,|\,Y_0,\overline{P},X_1 \right]'\beta\overline{P}\\
        &-\frac{\gamma}{\alpha} \overline{P}^{\frac{1}{\alpha}}e^{-X_1'\frac{\beta}{\alpha}-\frac{\gamma}{\alpha}Y_0}\overline{P} \mathbb{E}\left[\widetilde{P}_1^{\frac{1}{\alpha}}(1-\widetilde{P}_1)e^{A}\,|\,Y_0,\widetilde{P}_1,\overline{P},X_1 \right] \\
        &+\frac{1}{\alpha}\overline{P}\mathbb{E}\left[\ln (1-\widetilde{P}_1)(1-\widetilde{P}_1)e^{A}\,|\,Y_0,\overline{P},X_1 \right] \\
        &+\frac{1}{\alpha}\overline{P}\ln \overline{P}\mathbb{E}\left[(1-\widetilde{P}_1)e^{A}\,|\,Y_0,\overline{P},X_1 \right],
\end{align*}
which, after further simplification using Lemma \ref{lem: helmert_tranformation_MPH}, equals
\begin{align*}
    &\mathbb{E}\left[\pdv{\ln\Lambda_{\alpha}(Y_2)
        }{\alpha}\times (1-\widetilde{P}_1)\overline{P}\mathbb{E}\left[e^{A}\,|\,Y_0,\widetilde{P}_1,\overline{P},X^2 \right]|Y_0,\overline{P},X_1\right] \\
        =&-\frac{1}{\alpha}\mathbb{E}\left[(1-\widetilde{P}_1)X_2e^{A}\,|\,Y_0,\overline{P},X_1 \right]'\beta\overline{P}\\
        &-\frac{\gamma}{\alpha} \overline{P}^{\frac{1}{\alpha}}e^{-X_1'\frac{\beta}{\alpha}-\frac{\gamma}{\alpha}Y_0}\overline{P} \left(\frac{\alpha}{1+\alpha}-\frac{\alpha}{1+2\alpha}\right)\mathbb{E}\left[e^{A}\,|\,Y_0,\overline{P},X_1 \right] \\
        &-\frac{1}{\alpha}\frac{1}{4}\overline{P}\mathbb{E}\left[e^{A}\,|\,Y_0,\overline{P},X_1 \right] \\
        &+\frac{1}{\alpha}\frac{1}{2}\overline{P}\ln \overline{P}\mathbb{E}\left[e^{A}\,|\,Y_0,\overline{P},X_1 \right].
\end{align*}
Putting these last two expectation evaluations together yields a difference of:
\begin{align*}
    &\mathbb{E}\left[\pdv{\ln\Lambda_{\alpha}(Y_2)
        }{\alpha}\times (1-\widetilde{P}_1)\overline{P}\mathbb{E}\left[e^{A}\,|\,Y_0,\widetilde{P}_1,\overline{P},X^2 \right]|Y_0,\widetilde{P}_1,\overline{P},X_1\right]\\
        &-\mathbb{E}\left[\pdv{\ln\Lambda_{\alpha}(Y_2)
        }{\alpha}\times (1-\widetilde{P}_1)\overline{P}\mathbb{E}\left[e^{A}\,|\,Y_0,\widetilde{P}_1,\overline{P},X^2 \right]|Y_0,\overline{P},X_1\right] \\
        =& -\frac{1}{\alpha}\left((1-\widetilde{P}_1)\mathbb{E}\left[X_2e^{A}\,|\,Y_0,\widetilde{P}_1,\overline{P},X_1 \right]-\mathbb{E}\left[(1-\widetilde{P}_1)X_2e^{A}\,|\,Y_0,\overline{P},X_1 \right]\right)'\beta\overline{P}\\
        &-\frac{\gamma}{\alpha} e^{-X_1'\frac{\beta}{\alpha}-\frac{\gamma}{\alpha}Y_0}\left(\widetilde{P}_1^{\frac{1}{\alpha}}(1-\widetilde{P}_1)-\left(\frac{\alpha}{1+\alpha}-\frac{\alpha}{1+2\alpha}\right)\right)\overline{P}^{\frac{1}{\alpha}}\overline{P} \mathbb{E}\left[e^{A}\,|\,Y_0,\overline{P},X_1 \right] \\
        &+\frac{1}{\alpha}\left(\ln (1-\widetilde{P}_1)(1-\widetilde{P}_1)+\frac{1}{4}\right)\overline{P}\mathbb{E}\left[e^{A}\,|\,Y_0,\overline{P},X_1 \right] \\
        &+\frac{1}{\alpha}\left((1-\widetilde{P}_1)-\frac{1}{2}\right)\overline{P}\ln \overline{P}\mathbb{E}\left[e^{A}\,|\,Y_0,\overline{P},X_1 \right].
\end{align*}
Hence the second component of the efficient score is given by (minus):
\begin{align*}
    &-\frac{1}{\alpha}\left(X_1'\beta+\gamma Y_0\right)\left(\widetilde{P}_1-\frac{1}{2}\right)\overline{P}\mathbb{E}\left[e^{A}\,|\,Y_0,\overline{P},X_1\right]\\
    &+\frac{1}{\alpha}\left(\ln \widetilde{P}_1\widetilde{P}_1+\frac{1}{4}\right)\overline{P}\mathbb{E}\left[e^{A}\,|\,Y_0,\overline{P},X_1\right]\\
    &+\frac{1}{\alpha}\left(\widetilde{P}_1-\frac{1}{2}\right)\ln \overline{P}\overline{P}\mathbb{E}\left[e^{A}\,|\,Y_0,\overline{P},X_1\right] \\
    &+\frac{1}{\alpha}\left((1-\widetilde{P}_1)-\frac{1}{2}\right)\overline{P}\ln \overline{P}\mathbb{E}\left[e^{A}\,|\,Y_0,\overline{P},X_1 \right] \\
    &+\frac{1}{\alpha}\left(\ln (1-\widetilde{P}_1)(1-\widetilde{P}_1)+\frac{1}{4}\right)\overline{P}\mathbb{E}\left[e^{A}\,|\,Y_0,\overline{P},X_1 \right] \\
    &-\frac{\gamma}{\alpha} e^{-X_1'\frac{\beta}{\alpha}-\frac{\gamma}{\alpha}Y_0}\left(\widetilde{P}_1^{\frac{1}{\alpha}}(1-\widetilde{P}_1)-\left(\frac{\alpha}{1+\alpha}-\frac{\alpha}{1+2\alpha}\right)\right)\overline{P}^{\frac{1}{\alpha}}\overline{P} \mathbb{E}\left[e^{A}\,|\,Y_0,\overline{P},X_1 \right] \\
    &-\frac{1}{\alpha}\left((1-\widetilde{P}_1)\mathbb{E}\left[X_2e^{A}\,|\,Y_0,\widetilde{P}_1,\overline{P},X_1 \right]-\mathbb{E}\left[(1-\widetilde{P}_1)X_2e^{A}\,|\,Y_0,\overline{P},X_1 \right]\right)'\beta\overline{P},
\end{align*}
which, after some manipulation, simplifies to
\begin{align*}
    &-\frac{1}{\alpha}\left(X_1'\beta+\gamma Y_0\right)\left(\widetilde{P}_1-\frac{1}{2}\right)\overline{P}\mathbb{E}\left[e^{A}\,|\,Y_0,\overline{P},X_1\right]\\
    &+\frac{1}{\alpha}\left(\ln \widetilde{P}_1\widetilde{P}_1+\frac{1}{4}\right)\overline{P}\mathbb{E}\left[e^{A}\,|\,Y_0,\overline{P},X_1\right]\\
    &+\frac{1}{\alpha}\left(\ln (1-\widetilde{P}_1)(1-\widetilde{P}_1)+\frac{1}{4}\right)\overline{P}\mathbb{E}\left[e^{A}\,|\,Y_0,\overline{P},X_1 \right] \\
    &-\frac{\gamma}{\alpha} e^{-X_1'\frac{\beta}{\alpha}-\frac{\gamma}{\alpha}Y_0}\left(\widetilde{P}_1^{\frac{1}{\alpha}}(1-\widetilde{P}_1)-\left(\frac{\alpha}{1+\alpha}-\frac{\alpha}{1+2\alpha}\right)\right)\overline{P}^{\frac{1}{\alpha}}\overline{P} \mathbb{E}\left[e^{A}\,|\,Y_0,\overline{P},X_1 \right] \\
    &-\frac{1}{\alpha}\left((1-\widetilde{P}_1)\mathbb{E}\left[X_2e^{A}\,|\,Y_0,\widetilde{P}_1,\overline{P},X_1 \right]-\mathbb{E}\left[(1-\widetilde{P}_1)X_2e^{A}\,|\,Y_0,\overline{P},X_1 \right]\right)'\beta\overline{P}.
\end{align*}
We want minus of the former to form the effcient score, so (and also collecting terms)
\begin{align*}
    &+\frac{1}{\alpha}\left(X_1'\beta+\gamma Y_0\right)\left(\widetilde{P}_1-\frac{1}{2}\right)\overline{P}\mathbb{E}\left[e^{A}\,|\,Y_0,\overline{P},X_1\right]\\
    &-\frac{1}{\alpha}\left(\ln \widetilde{P}_1\widetilde{P}_1+\frac{1}{4}+\ln (1-\widetilde{P}_1)(1-\widetilde{P}_1)+\frac{1}{4}\right)\overline{P}\mathbb{E}\left[e^{A}\,|\,Y_0,\overline{P},X_1\right]\\
    &+\frac{\gamma}{\alpha} e^{-X_1'\frac{\beta}{\alpha}-\frac{\gamma}{\alpha}Y_0}\left(\widetilde{P}_1^{\frac{1}{\alpha}}(1-\widetilde{P}_1)-\left(\frac{\alpha}{1+\alpha}-\frac{\alpha}{1+2\alpha}\right)\right)\overline{P}^{\frac{1}{\alpha}}\overline{P} \mathbb{E}\left[e^{A}\,|\,Y_0,\overline{P},X_1 \right] \\
    &+\frac{1}{\alpha}\left((1-\widetilde{P}_1)\mathbb{E}\left[X_2e^{A}\,|\,Y_0,\widetilde{P}_1,\overline{P},X_1 \right]-\mathbb{E}\left[(1-\widetilde{P}_1)X_2e^{A}\,|\,Y_0,\overline{P},X_1 \right]\right)'\beta\overline{P}.
\end{align*}

Putting everything together, the efficient score for $\alpha$ equals:
\begin{align*}
    \phi_{\theta}^{\rm eff,\alpha}(Y^2,X^2)&=    \frac{2}{\alpha}+\frac{1}{\alpha}\ln \widetilde{P}_1+\frac{1}{\alpha}\ln (1-\widetilde{P}_1)-\frac{\gamma}{\alpha} \left(\widetilde{P}_1^{\frac{1}{\alpha}}-\frac{\alpha}{1+\alpha}\right) \overline{P}^{\frac{1}{\alpha}}e^{-X_1'\frac{\beta}{\alpha}-\frac{\gamma}{\alpha}Y_0}\\
    &-\frac{1}{\alpha}\left(\mathbb{E}\left[X_2|Y_0,X_1,\widetilde{P}_1,\overline{P}\right]-\mathbb{E}\left[X_2|Y_0,X_1,\overline{P}\right]\right)'\beta \\
        &+\frac{1}{\alpha}\left(X_1'\beta+\gamma Y_0\right)\left(\widetilde{P}_1-\frac{1}{2}\right)\overline{P}\mathbb{E}\left[e^{A}\,|\,Y_0,\overline{P},X_1\right]\\
    &-\frac{1}{\alpha}\left(\ln \widetilde{P}_1\widetilde{P}_1+\frac{1}{4}+\ln (1-\widetilde{P}_1)(1-\widetilde{P}_1)+\frac{1}{4}\right)\overline{P}\mathbb{E}\left[e^{A}\,|\,Y_0,\overline{P},X_1\right]\\
    &+\frac{\gamma}{\alpha} e^{-X_1'\frac{\beta}{\alpha}-\frac{\gamma}{\alpha}Y_0}\left(\widetilde{P}_1^{\frac{1}{\alpha}}(1-\widetilde{P}_1)-\left(\frac{\alpha}{1+\alpha}-\frac{\alpha}{1+2\alpha}\right)\right)\overline{P}^{\frac{1}{\alpha}}\overline{P} \mathbb{E}\left[e^{A}\,|\,Y_0,\overline{P},X_1 \right] \\
    &+\frac{1}{\alpha}\left((1-\widetilde{P}_1)\mathbb{E}\left[X_2e^{A}\,|\,Y_0,\widetilde{P}_1,\overline{P},X_1 \right]-\mathbb{E}\left[(1-\widetilde{P}_1)X_2e^{A}\,|\,Y_0,\overline{P},X_1 \right]\right)'\beta\overline{P}.
\end{align*}
Close inspection of the expression above indicates that it includes linear combinations of the efficient scores for $\beta$ and $\gamma$ as components. This observation, as well as simplification and rearrangement, gives a final expression of:
\begin{align}
    \phi_{\theta}^{\rm eff,\alpha}(Y^2,X^2) =&  \frac{1}{\alpha}\left(  2+\ln \widetilde{P}_1+\ln (1-\widetilde{P}_1)\right) \notag \\
    &-\frac{1}{\alpha}\left(\widetilde{P}_1\ln \widetilde{P}_1+(1-\widetilde{P}_1)\ln (1-\widetilde{P}_1)+\frac{1}{2}\right)\overline{P}\mathbb{E}\left[e^{A}\,|\,Y_0,\overline{P},X_1\right] \notag \\
    &-\phi_{\theta}^{\rm eff,\beta}(Y^2,X^2)'\frac{\beta}{\alpha}-\phi_{\theta}^{\rm eff,\gamma}(Y^2,X^2)\frac{\gamma}{\alpha}.\label{eq: efficient_score_alpha_specific_T=2}
\end{align}
Like its counterpart for $\gamma$ this expression does note appear in the main text.
\subsubsection{MPH efficient score without feedback (i.e., under strict exogeneity)}
\cite{hahn1994efficiency} derived the SEB for the MPH hazards model with $T=2$ and strictly exogenous regressors. His analysis did not include lagged duration dependence, as ours does. For completeness, we sketch the derivation of the efficient scores for $\alpha$, $\beta$ and $\gamma$ under strict exogeneity here. Details can be filled in along the lines of our derivation for the scores with feedback and/or by studying the rigorous analysis in \cite{hahn1994efficiency}. 

By direct analogy, the efficient score when $T=2$ under strict exogeneity equals
\begin{align*}
        \phi_{\theta}^{\rm eff,SE}(Y^2,X^2)=\mathbb{E}\left[S^{\theta}(Y^2,X^2)\,|\,Y_{0},\widetilde{P}_{1},\overline{P},X^2\right]-\mathbb{E}\left[S^{\theta}(Y^2,X^2)\,|\,Y_{0},\overline{P},X^2\right].    
\end{align*}
It follows that the expressions for the scores are the same as those derived above under feedback except that each expectation now conditions on all ``leads and lags" of the strictly exogenous covariates.

This yields an efficient score for $\beta$ of
\begin{align}
     \phi_{\theta}^{\rm eff,SE,\beta}(Y^2,X^2)&=\left(X_2-X_1\right)\left(\widetilde{P}_1-\frac{1}{2}\right)\overline{P}\mathbb{E}\left[e^{A}\,|\, Y_0,\overline{P},X^2\right]. \label{eq: efficient_score_beta_specific_T=2_strictexo}
\end{align}
This expression is identical to the one found by \cite{hahn1994efficiency}.

Maintaining the Weibull baseline hazard assumption, the efficient score for $\gamma$ equals:
\begin{multline}
    \phi_{\theta}^{\rm eff,SE,\gamma}(Y^2,X^2)=Y_1-\frac{\alpha}{1+\alpha}e^{-X_1'\frac{\beta}{\alpha}-\frac{\gamma}{\alpha}Y_0} \overline{P}^{\frac{1}{\alpha}}  \\
    -\left(Y_0\left(\widetilde{P}_1-\frac{1}{2}\right)+\left(Y_1(1-\widetilde{P}_1)-\left(\frac{\alpha}{1+\alpha}-\frac{\alpha}{1+2\alpha}\right)e^{-X_1'\frac{\beta}{\alpha}-\frac{\gamma}{\alpha}Y_0} \overline{P}^{\frac{1}{\alpha}}\right)\right) \\
    \times \overline{P}\mathbb{E}\left[e^{A}\,|\, Y_0,\overline{P},X^2\right]. \label{eq: efficient_score_gamma_specific_T=2_strictexo}
\end{multline}
The analysis of \cite{hahn1994efficiency} did not accomodate lagged duration dependence. The expression above is therefore new.
 
Finally, the efficient score for $\alpha$ equals:
\begin{align} \label{eq: efficient_score_alpha_specific_T=2_strictexo}
    \phi_{\theta}^{\rm eff,SE,\alpha}(Y^2,X^2)=&  \frac{1}{\alpha}\left(  2+\ln \widetilde{P}_1+\ln (1-\widetilde{P}_1)\right)  \\
    &-\frac{1}{\alpha}\left(\widetilde{P}_1\ln \widetilde{P}_1+(1-\widetilde{P}_1)\ln (1-\widetilde{P}_1)+\frac{1}{2}\right)\overline{P}\mathbb{E}\left[e^{A}\,|\,Y_0,\overline{P},X^2\right] \notag\\
    &- \phi_{\theta}^{\rm eff,SE,\beta}(Y^2,X^2)'\frac{\beta}{\alpha}- \phi_{\theta}^{\rm eff,SE,\gamma}(Y^2,X^2)\frac{\gamma}{\alpha}. \notag 
\end{align}
This expression coincides exactly with the one given in Lemma 3 of \cite{hahn1994efficiency} except that his formulation maintains the additional \emph{a priori} restriction $\gamma=0$. Setting $\gamma=0$ in the expression above yields Hahn's expression.

\subsubsection{MPH efficient score for average effects}
 Lemma \ref{proj_lemma_MPH} can also be used to derive expressions for the efficient moment function of average effects $\mu(\theta,\omega)$. \\
 
\noindent \textbf{Average structural hazard.} Consider first the ASH $\overline{\lambda}(y_{t}|x_t,y_{t-1})$ defined in \eqref{WMPH_ashf}. In the main text, we showed that $\varphi_{\theta}(Y^T,X^T)=\lambda_{\alpha}(y_{t})e^{x_{t}'\beta+\gamma y_{t-1}}\frac{T-1}{\overline{P}}$ is an identifying FHR moment function for $\overline{\lambda}(y_{t}|x_t,y_{t-1})$. Applying Lemma \ref{proj_lemma_MPH} yields the projection
\begin{align*}
        \Pi(\varphi_{\theta}(Y^{T},X^T)|\mathcal{T}_{\theta,\omega,L}^{\perp})=\sum_{t=1}^{T-1}\varphi_{\theta,t}^{\perp}(Y_0,\widetilde{P}^{t},\overline{P},X^{t})=0,
\end{align*}
since, for all $ t\in \{1,\ldots,T-1\}$, 
\begin{align*}
        \varphi_{\theta,t}^{\perp}(Y_0,\widetilde{P}^{t},\overline{P},X^{t}) &=\mathbb{E}\left[\varphi_{\theta}(Y^{T},X^T)|Y_0,\widetilde{P}^{t},\overline{P},X^{t}\right] -\mathbb{E}\left[\varphi_{\theta}(Y^{T},X^T)|Y_0,\widetilde{P}^{t-1},\overline{P},X^{t}\right] \\
        &=\lambda_{\alpha}(y_{t})e^{x_{t}'\beta+\gamma y_{t-1}}\frac{T-1}{\overline{P}}-\lambda_{\alpha}(y_{t})e^{x_{t}'\beta+\gamma y_{t-1}}\frac{T-1}{\overline{P}} \\
        &=0.
\end{align*}
This verifies the claim in Section \ref{sec: mph_deep_dive}. Recalling from Section \ref{subsec: efficiency} that the efficient moment function is given by ${\varphi}^{\rm eff}_{\theta,\omega}(Y^T,X^T)=\varphi_{\theta}(Y^T,X^T)-\Pi(\varphi_{\theta}(Y^T,X^T)\,|\,\mathcal{T}_{\theta,\omega,L}^{\perp})$, we conclude that the efficient moment function for the ASH is
\begin{align*}
    {\varphi}^{\rm eff}_{\theta,\omega}(Y^T,X^T)= \lambda_{\alpha}(y_{t})e^{x_{t}'\beta+\gamma y_{t-1}}\frac{T-1}{\overline{P}}.
\end{align*}

\noindent \textbf{Average structural function.} As a second example, consider the ASF $\mu(x_t,y_{t-1})$ defined in \eqref{WMPH_asf}. Under a Weibull baseline hazard, it takes the form
\begin{align*}
    \mu(x_t,y_{t-1})=\Gamma\left(1+\frac{1}{\alpha}\right)\exp(-x_t'\frac{\beta}{\alpha}-\frac{\gamma}{\alpha}y_{t-1})\mathbb{E}\left[\exp(-\frac{a}{\alpha})\right].
\end{align*}
In view of \eqref{eq: gamma_moments}, one candidate identifying moment function is 
\begin{align*}
    \varphi_{\theta}(Y^T,X^T) = \exp(-x_t'\frac{\beta}{\alpha}-\frac{\gamma}{\alpha}y_{t-1})\frac{\Gamma\left(1+\frac{1}{\alpha}\right)\Gamma(T)}{\Gamma(T+\frac{1}{\alpha})}\overline{P}^{\frac{1}{\alpha}}.
\end{align*}
Applying Lemma \ref{proj_lemma_MPH}, we again get $\Pi(\varphi_{\theta}(Y^{T},X^T)|\mathcal{T}_{\theta,\omega,L}^{\perp})=0$, which implies that the efficient moment function for the ASF is given by
\begin{align} \label{eq: eff_score_weibull_ASF}
    {\varphi}^{\rm eff}_{\theta,\omega}(Y^T,X^T) = \exp(-x_t'\frac{\beta}{\alpha}-\frac{\gamma}{\alpha}y_{t-1})\frac{\Gamma\left(1+\frac{1}{\alpha}\right)\Gamma(T)}{\Gamma(T+\frac{1}{\alpha})}\overline{P}^{\frac{1}{\alpha}}.
\end{align}
Alternatively, based on \eqref{eq: rho_is_exponential}, one could for example consider the FHR moment function 
\begin{align*}
    \varphi^{2}_{\theta_0}(Y^T,X^T) = \Gamma\left(1+\frac{1}{\alpha}\right)\exp(-x_t'\frac{\beta}{\alpha}-\frac{\gamma}{\alpha}y_{t-1})P_{1}^{\frac{1}{\alpha}},
\end{align*}
noting that $\mathbb{E}\left[P_{1}^{\frac{1}{\alpha}}|Y_{0},X_{1},A\right] = \mathbb{E}\left[\exp(-\frac{a}{\alpha})\right]$. For this choice, the identity $P_{1}=\widetilde{P}_{1}\overline{P}$ and an application of Lemma \ref{proj_lemma_MPH} yields
\begin{align*}
    \Pi(\varphi_{\theta}^{2}(Y^{T},X^T)|\mathcal{T}_{\theta,\omega,L}^{\perp}) &=  \Gamma\left(1+\frac{1}{\alpha}\right)\exp(-x_t'\frac{\beta}{\alpha}-\frac{\gamma}{\alpha}y_{t-1})\left(\widetilde{P}_{1}^{\frac{1}{\alpha}}\overline{P}^{\frac{1}{\alpha}} -\mathbb{E}\left[\widetilde{P}_{1}^{\frac{1}{\alpha}}|Y_0,\overline{P},X^{1}\right]\overline{P}^{\frac{1}{\alpha}}\right) \\
    &=\Gamma\left(1+\frac{1}{\alpha}\right)\exp(-x_t'\frac{\beta}{\alpha}-\frac{\gamma}{\alpha}y_{t-1})\left(\widetilde{P}_{1}^{\frac{1}{\alpha}}\overline{P}^{\frac{1}{\alpha}} -\frac{\Gamma\left(1+\frac{1}{\alpha}\right)\Gamma(T)}{\Gamma(T+\frac{1}{\alpha})}\overline{P}^{\frac{1}{\alpha}}\right) \\
    &=\varphi_{\theta}^{2}(Y^{T},X^T) - \Gamma\left(1+\frac{1}{\alpha}\right)\exp(-x_t'\frac{\beta}{\alpha}-\frac{\gamma}{\alpha}y_{t-1})\frac{\Gamma\left(1+\frac{1}{\alpha}\right)\Gamma(T)}{\Gamma(T+\frac{1}{\alpha})}\overline{P}^{\frac{1}{\alpha}},
\end{align*}
where the second equality leverages the distributional properties reported in Lemma \eqref{lem: helmert_tranformation_MPH}. The projection residual would then deliver the same expression of the efficient moment function as \eqref{eq: eff_score_weibull_ASF}. This aligns with the result discussed in Section \ref{subsec: efficiency} that the efficient moment function for average effects is invariant to the specific choice of $\varphi_{\theta}$.

\subsection{Detailed calculations}\label{app: detailed calculations}

\textbf{Derivation of equation \eqref{eq: det_for_rho_tilde_rho_bar_change_of_variables}:} 
Using the inverse mapping defined in Appendix \ref{app: MPH_derivations} we can write the (determinant of the) Jacobian of the mapping from $\mathbf{V}$ back into $\mathbf{P}$ as:
\begin{align*}
J^{(T)}(V_{1},\ldots,V_{T})=\left|\begin{smallmatrix}V_{T} & 0 & 0 & \ldots & 0 & V_{1}\\
-V_{2}V_{T} & (1-V_{1})V_{T} & 0 & 0 & \ldots & (1-V_{1})V_{2}\\
-(1-V_{2})V_{3}V_{T} & -(1-V_{1})V_{3}V_{T} & (1-V_{1})(1-V_{2})V_{T} & 0 & \ldots & (1-V_{1})(1-V_{2})V_{3}\\
\vdots & \vdots & \vdots & \vdots & \ddots & \vdots\\
-\prod_{s=2}^{T-2}(1-V_{s})V_{T-1}V_{T} & \ldots & \ldots & \prod_{s=1}^{T-2}(1-V_{s})V_{T} &  & \prod_{s=1}^{T-2}(1-V_{s})V_{T-1}\\
-\prod_{s=2}^{T-1}(1-V_{s})V_{T} & \ldots & \ldots & -\prod_{s=1}^{T-2}(1-V_{s})V_{T} &  & \prod_{s=1}^{T-1}(1-V_{s})
\end{smallmatrix}\right|
\end{align*}
A Laplace (co-factor) expansion along the first row gives 
\begin{align*}
J^{(T)}(V_{1},\ldots,V_{T}) & =V_{T}\left|\begin{smallmatrix}(1-V_{1})V_{T} & 0 & 0 & \ldots & (1-V_{1})V_{2}\\
-(1-V_{1})V_{3}V_{T} & (1-V_{1})(1-V_{2})V_{T} & 0 & \ldots & (1-V_{1})(1-V_{2})V_{3}\\
\vdots & \vdots & \vdots & \ddots & \vdots\\
-(1-V_{1})\prod_{s=3}^{T-2}(1-V_{s})V_{T-1}V_{T} & \ldots & \prod_{s=1}^{T-2}(1-V_{s})V_{T} &  & \prod_{s=1}^{T-2}(1-V_{s})V_{T-1}\\
-(1-V_{1})\prod_{s=3}^{T-1}(1-V_{s})V_{T} & \ldots & -\prod_{s=1}^{T-2}(1-V_{s})V_{T} &  & \prod_{s=1}^{T-1}(1-V_{s})
\end{smallmatrix}\right|\\
 & +(-1)^{T+1}V_{1}\left|\begin{smallmatrix}-V_{2}V_{T} & (1-V_{1})V_{T} & 0 & 0\\
-(1-V_{2})V_{3}V_{T} & -(1-V_{1})V_{3}V_{T} & (1-V_{1})(1-V_{2})V_{T} & 0\\
\vdots & \vdots & \vdots & \vdots\\
-\prod_{s=2}^{T-2}(1-V_{s})V_{T-1}V_{T} & \ldots & \ldots & \prod_{s=1}^{T-2}(1-V_{s})V_{T}\\
-\prod_{s=2}^{T-1}(1-V_{s})V_{T} & \ldots & \ldots & -\prod_{s=1}^{T-2}(1-V_{s})V_{T}
\end{smallmatrix}\right|
\end{align*}
and further factorizing common factors across columns yields 
\begin{align*}
J^{(T)}(V_{1},\ldots,V_{T}) & =V_{T}(1-V_{1})^{T-1}\left|\begin{smallmatrix}V_{T} & 0 & 0 & \ldots & V_{2}\\
-V_{3}V_{T} & (1-V_{2})V_{T} & 0 & \ldots & (1-V_{2})V_{3}\\
\vdots & \vdots & \vdots & \ddots & \vdots\\
-\prod_{s=3}^{T-2}(1-V_{s})V_{T-1}V_{T} & \ldots & \prod_{s=2}^{T-2}(1-V_{s})V_{T} &  & \prod_{s=2}^{T-2}(1-V_{s})V_{T-1}\\
-\prod_{s=3}^{T-1}(1-V_{s})V_{T} & \ldots & -\prod_{s=2}^{T-2}(1-V_{s})V_{T} &  & \prod_{s=2}^{T-1}(1-V_{s})
\end{smallmatrix}\right|\\
\\
 & +(-1)^{T}V_{1}(1-V_{1})^{T-2}V_{T}\left|\begin{smallmatrix}V_{2} & V_{T} & 0 & 0\\
(1-V_{2})V_{3} & -V_{3}V_{T} & (1-V_{2})V_{T} & 0\\
\vdots & \vdots & \vdots & \vdots\\
\prod_{s=2}^{T-2}(1-V_{s})V_{T-1} & \ldots & \ldots & \prod_{s=2}^{T-2}(1-V_{s})V_{T}\\
\prod_{s=2}^{T-1}(1-V_{s}) & \ldots & \ldots & -\prod_{s=2}^{T-2}(1-V_{s})V_{T}
\end{smallmatrix}\right|.
\end{align*}
Observe that the matrices involved in the first and second determinants
have the same set of columns. More specifically, the second matrix
can be obtained by swapping $\floor*{\frac{T}{2}}$ pairs of columns
in the first matrix, implying that the determinant of the second matrix
is $(-1)^{\floor*{\frac{T}{2}}}$ that of the first matrix. 

Next observe that the first determinant has exactly the same structure
as $J^{(T)}(V_{1},\ldots,V_{T})$, except that it involves $T-1$
variables instead of $T$ variables, namely $(V_{2},\ldots,V_{T})$
instead of $(V_{1},\ldots,V_{T})$. More precisely, this determinant
coincides with $J^{(T-1)}(V_{2},\ldots,V_{T})$. 

Together, these observations imply that: 
\begin{align*}
J^{(T)}(V_{1},\ldots,V_{T}) & =V_{T}(1-V_{1})^{T-1}J^{(T-1)}(V_{2},\ldots,V_{T})+(-1)^{T+\floor*{\frac{T}{2}}}V_{1}(1-V_{1})^{T-2}V_{T}J^{(T-1)}(V_{2},\ldots,V_{T})\\
 & =(1-V_{1})^{T-2}V_{T}J^{(T-1)}(V_{2},\ldots,V_{T})\\
 & =(1-V_{1})^{T-2}V_{T}(1-V_{2})^{T-3}V_{T}J^{(T-2)}(V_{3},\ldots,V_{T})\\
 & =(1-V_{1})^{T-2}V_{T}(1-V_{2})^{T-3}V_{T}\times\ldots\times(1-V_{T-2})V_{T}J^{(2)}(V_{T-1},V_{T})\\
 & =(1-V_{1})^{T-2}V_{T}(1-V_{2})^{T-3}V_{T}\times\ldots\times(1-V_{T-2})V_{T}V_{T}\\
 & =V_{T}^{T-1}\prod_{s=1}^{T-2}(1-V_{s})^{T-(s+1)},
\end{align*}
which coincides with \eqref{eq: det_for_rho_tilde_rho_bar_change_of_variables} as required.

\textbf{Derivation of equation \eqref{eq: density_of_rho_tilde_rho_bar}}
The change-of-variables formula, in conjunction with the determinant given in \eqref{eq: det_for_rho_tilde_rho_bar_change_of_variables}, yields   
\begin{align*}
f_{V}(V_{1},V_{2},\ldots,V_{T})= & f_{P_{1}}\left(V_{1}V_{T}\right)\prod_{t=2}^{T-1}f_{P_{t}}\left(\prod_{s=1}^{t-1}(1-V_{s})V_{t}V_{T}\right)f_{P_{T}}\left(\prod_{s=1}^{t-1}(1-V_{s})V_{T}\right)\abs{\det\left[\dv{G^{-1}(\mathbf{V})}{\mathbf{V}}\right]}\\
= & \frac{\beta^{\alpha_{1}}}{\Gamma(\alpha_{1})}(V_{1}V_{T})^{\alpha_{1}-1}e^{-\beta V_{1}V_{T}}\\
 & \times\prod_{t=2}^{T-1}\frac{\beta^{\alpha_{t}}}{\Gamma(\alpha_{t})}\left(\prod_{s=1}^{t-1}(1-V_{s})V_{t}V_{T}\right)^{\alpha_{t}-1}e^{-\beta\prod_{s=1}^{t-1}(1-V_{s})V_{t}V_{T}}\\
 & \times\frac{\beta^{\alpha_{T}}}{\Gamma(\alpha_{T})}\left(\prod_{s=1}^{T-1}(1-V_{s})V_{T}\right)^{\alpha_{t}-1}e^{-\beta\prod_{s=1}^{T-1}(1-V_{s})V_{T}}\\
 & \times V_{T}^{T-1}\prod_{s=1}^{T-2}(1-V_{s})^{T-(s+1)}
\\
= & \frac{\beta^{\sum_{t=1}^{T}\alpha_{t}}}{\Gamma(\alpha_{T})}V_{T}^{\sum_{t=1}^{T}\alpha_{t}-1}e^{-\beta V_{T}}\\
 & \times\frac{1}{\Gamma(\alpha_{1})}V_{1}^{\alpha_{1}-1}(1-V_{1})^{\sum_{t=2}^{T}\alpha_{t}-(T-1)+(T-(1+1))}\\
 & \times\frac{1}{\Gamma(\alpha_{2})}V_{2}^{\alpha_{2}-1}(1-V_{2})^{\sum_{t=3}^{T}\alpha_{t}-(T-2)+(T-(2+1))}\\
 & \vdots\\
 & \times\frac{1}{\Gamma(\alpha_{T-1})}V_{T-1}^{\alpha_{T-1}-1}(1-V_{T-1})^{\sum_{t=T-1}^{T}\alpha_{t}-(T-(T-1))} \\ 
= & \frac{\beta^{\sum_{t=1}^{T}\alpha_{t}}}{\Gamma\left(\sum_{t=1}^{T}\alpha_{t}\right)}V_{T}^{\sum_{t=1}^{T}\alpha_{t}-1}e^{-\beta V_{T}}\times\prod_{t=1}^{T-1}\frac{\Gamma\left(\sum_{s=t}^{T}\alpha_{s}\right)}{\Gamma\left(\alpha_{t}\right)\Gamma\left(\sum_{s=t+1}^{T}\alpha_{s}\right)}V_{t}^{\alpha_{t}-1}(1-V_{t})^{\sum_{s=t+1}^{T}\alpha_{s}-1},
\end{align*}
where the simplification on the penultimate equality follows from:
\begin{align*}
V_{T}=\sum_{t=1}^{T}P_{t}=V_{1}V_{T}+\prod_{s=1}^{t-1}(1-V_{s})V_{t}V_{T}+\prod_{s=1}^{T-1}(1-V_{s})V_{T}.
\end{align*}
This corresponds to equation \eqref{eq: density_of_rho_tilde_rho_bar} as needed.

\section{Additional details for the numerical experiments} \label{app: numericalexperiment_details}

In this section we again omit the dependence on $\theta$ in $P_{\theta}$ and related quantities. 

\begin{lemma} \label{lem: nofeedback_MPH} 
Consider the
MPH model defined by Example \ref{ex: mph_intro} with $T=2$. If there is no feedback, then (i) $\overline{P},X_2, \widetilde{P}_1$ are independent conditional on $Y_0,X_1,A$, and (ii) $\widetilde{P}_1$ is independent of $X_2, \overline{P},A,Y_{0},X_{1}$. If in addition $X_{2}$ is independent of $A$ conditional on $Y_0,X_1$, then (iii) $\overline{P},X_2, \widetilde{P}_1$ are independent conditional on $Y_0,X_1$
\end{lemma}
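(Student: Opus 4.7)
\textbf{Proof plan for Lemma \ref{lem: nofeedback_MPH}.} My plan is to start from the full joint density of $(Y_1, Y_2, X_2)$ given $(Y_0, X_1, A)$, change variables to $(P_1, P_2, X_2)$, and show that this joint density factorizes when the no feedback restriction \eqref{eq: no_feedback_restriction} holds. From the sequential factorization used in \eqref{eq: feedback_complete_data_likelihood}, we have, for $T=2$,
\[
f(x_2, y_1, y_2 \mid y_0, x_1, a) = f_\theta(y_2 \mid y_1, x_2, a)\, g(x_2 \mid y_1, y_0, x_1, a)\, f_\theta(y_1 \mid y_0, x_1, a).
\]
Under no feedback, $g(x_2 \mid y_1, y_0, x_1, a) = g(x_2 \mid y_0, x_1, a)$, and applying the (triangular) change of variables $(y_1, y_2) \mapsto (p_1, p_2)$ with $p_t = \Lambda_\alpha(y_t)\exp(\gamma y_{t-1} + x_t'\beta)$ collapses each MPH conditional density into the $\mathrm{Exponential}(e^a)$ density (cf.\ the derivation of Lemma \ref{lem: P1_PT_are_exponential}). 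The resulting joint density factorizes cleanly as
\[
f(p_1, p_2, x_2 \mid y_0, x_1, a) = \bigl[e^a e^{-p_1 e^a}\bigr] \cdot \bigl[e^a e^{-p_2 e^a}\bigr] \cdot g(x_2 \mid y_0, x_1, a).
\]
This establishes mutual conditional independence of $(P_1, P_2, X_2)$ given $(Y_0, X_1, A)$.

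For part (i), since $(\widetilde{P}_1, \overline{P})$ is a measurable bijection of $(P_1, P_2)$, the preceding factorization gives $X_2 \perp (\widetilde{P}_1, \overline{P}) \mid Y_0, X_1, A$. Combining this with part (ii) of Lemma \ref{lem: helmert_tranformation_MPH} (which gives $\widetilde{P}_1 \perp \overline{P} \mid Y_0, X_1, A$) yields the desired three-way conditional independence.

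For part (ii), specializing Lemma \ref{lem: helmert_tranformation_MPH} to $T=2$ shows that $\widetilde{P}_1 \mid Y_0, X_1, A \sim \mathrm{Beta}(1,1) = U[0,1]$, a distribution that does not depend on the conditioning variables. Multiplying this uniform density by the remaining factor $f(\overline{p}, x_2 \mid y_0, x_1, a)\pi(a \mid y_0, x_1)\nu(y_0, x_1)$ gives a joint density for $(\widetilde{P}_1, \overline{P}, X_2, A, Y_0, X_1)$ that factorizes into $\mathds{1}\{\widetilde{p}_1 \in [0,1]\}$ times the density of the remaining variables, proving the claimed full independence of $\widetilde{P}_1$ from $(X_2, \overline{P}, A, Y_0, X_1)$.

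For part (iii), I will integrate the conditional joint density from part (i) over $A$ using the additional assumption $X_2 \perp A \mid Y_0, X_1$, which allows the factor $g(x_2 \mid y_0, x_1, a)$ to be replaced by $g(x_2 \mid y_0, x_1)$ and pulled outside the integral:
\[
f(\widetilde{p}_1, \overline{p}, x_2 \mid y_0, x_1) = \mathds{1}\{\widetilde{p}_1 \in [0,1]\}\, g(x_2 \mid y_0, x_1) \int f(\overline{p} \mid y_0, x_1, a)\,\pi(a \mid y_0, x_1)\,\mathrm{d}a.
\]
The integral is exactly $f(\overline{p} \mid y_0, x_1)$, and this gives the required factorization of the conditional joint density of $(\widetilde{P}_1, \overline{P}, X_2)$ given $(Y_0, X_1)$. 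No step presents a genuine obstacle; the main care is in the bookkeeping of the change of variables and in invoking the no-feedback restriction at the right moment to obtain the factorization of $g(x_2 \mid \cdot)$.
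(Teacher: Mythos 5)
Your proposal is correct and takes essentially the same route as the paper's proof: both derive the factorized joint density of $(\overline{P},X_2,\widetilde{P}_1)$ conditional on $(Y_0,X_1,A)$ from the no-feedback restriction \eqref{eq: no_feedback_restriction} together with the exponential/Helmert structure (Lemmas \ref{lem: P1_PT_are_exponential} and \ref{lem: helmert_tranformation_MPH}), and then read off (i)--(iii) directly from that factorization. You merely make explicit the intermediate change of variables $(y_1,y_2)\mapsto(p_1,p_2)\mapsto(\widetilde{p}_1,\overline{p})$ that the paper compresses into a single displayed density.
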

\begin{proof}
     If there is no feedback, \eqref{eq: no_feedback_restriction} and Lemma \ref{lem: helmert_tranformation_MPH}  imply that the joint density of $(\overline{P},X_{2},\widetilde{P}_1)$ conditional on $(Y_0,X_1,A)$ is given by
     \begin{align*}
         f(\overline{p},x_{2},\widetilde{p}_1|y_{0},x_{1},a)=\overline{p} \exp\left(2a-\overline{p}e^{a}\right)\mathds{1}\{\overline{p}>0\}g(x_2|y_0,x_1,a)\mathds{1}\left\{\widetilde{p}_1\in (0,1)\right\},
     \end{align*}
     which establishes (i). This also proves that $\widetilde{P}_1\perp A,Y_0,X_1$ and hence (ii). If $X_{2}$ is independent of $A$ conditional on $Y_0,X_1$, $g(x_2|y_0,x_1,a)=g(x_2|y_0,x_1)$ whereupon the form of the above joint density implies (iii).
\end{proof}

\noindent \textbf{Locally efficient score}. 
The working model in our numerical experiment is $\widetilde{\omega}=(\widetilde{g},\widetilde{\pi})$ where
$\widetilde{g}$ is a $\text{Bernoulli}(p)$ and $\widetilde{\pi}(v) = \frac{1}{v} \mathds{1}\{v > 0\}$ where $V= \exp(A)$. Since $\overline{P}|Y_{0},X_{1},V \sim\mathrm{Gamma}\left(T,V\right)$ by Lemma \eqref{lem: helmert_tranformation_MPH}, it follows from Bayes rule that $V|Y_0,\overline{P},X_1\sim \mathrm{Gamma}\left(T,\overline{P}\right)$. Therefore, $\mathbb{E}_{\theta,\widetilde{\omega}}\left[V\,|\, Y_0,\overline{P},X_1\right]=\frac{T}{\overline{P}}$.  In view of \eqref{eq: efficient_score_beta_specific_T=2},  the locally efficient score for $\beta$ under $\widetilde{\omega}$ is
\begin{align}
    \widetilde{\phi}_{\theta,\widetilde{\omega}}^{\rm eff,\beta}(Y^2,X^2)&=-X_1\left(\widetilde{P}_1-\frac{1}{2}\right)\overline{P}\mathbb{E}_{\theta,\widetilde{\omega}}\left[V\,|\, Y_0,\overline{P},X_1\right]  \notag \\
    &+\mathbb{E}_{\theta,\widetilde{\omega}}\left[X_{2}|Y_0,\widetilde{P}_1,\overline{P},X_1\right] -\mathbb{E}_{\theta,\widetilde{\omega}}\left[X_{2}(1-\widetilde{P}_1)\overline{P}V|Y_0,\widetilde{P}_1,\overline{P},X_1\right] \notag\\
    &-\mathbb{E}_{\theta,\widetilde{\omega}}\left[X_{2}|Y_0,\overline{P},X_1\right]+\mathbb{E}_{\theta,\widetilde{\omega}}\left[X_{2}(1-\widetilde{P}_1)\overline{P}V|Y_0,\overline{P},X_1\right] \displaybreak[0] \notag\\
    &=-X_1\left(\widetilde{P}_1-\frac{1}{2}\right)\overline{P}\mathbb{E}_{\theta,\widetilde{\omega}}\left[V\,|\, Y_0,\overline{P},X_1\right] \notag\\
    &+\mathbb{E}_{\theta,\widetilde{\omega}}\left[X_{2}|Y_0,X_1\right] -(1-\widetilde{P}_1)\overline{P}\mathbb{E}_{\theta,\widetilde{\omega}}\left[X_{2}|Y_0,X_1\right]\mathbb{E}_{\theta,\widetilde{\omega}}\left[V|Y_0,\overline{P},X_1\right] \notag \\
    &-\mathbb{E}_{\theta,\widetilde{\omega}}\left[X_{2}|Y_0,X_1\right]+\overline{P}\mathbb{E}_{\theta,\widetilde{\omega}}\left[X_{2}|Y_0,X_1\right]\mathbb{E}_{\theta,\widetilde{\omega}}\left[(1-\widetilde{P}_1)\right] \mathbb{E}_{\theta,\widetilde{\omega}}\left[V|Y_0,\overline{P},X_1\right] \notag \\
    &=\left(p-X_1\right)\left(\widetilde{P}_1-\frac{1}{2}\right)T,  \label{eq: loceff_score_beta_T=2}
\end{align}
where the second equality follows from  the implications of Lemma \ref{lem: nofeedback_MPH}, and the third equality uses $\mathbb{E}_{\theta,\widetilde{\omega}}\left[V\,|\, Y_0,\overline{P},X_1\right]=\frac{T}{\overline{P}}$, $\mathbb{E}_{\theta,\widetilde{\omega}}\left[\widetilde{P}_1\right]=\frac{1}{2}$ by Lemma \ref{lem: helmert_tranformation_MPH}, and the definition of $\widetilde{g}$. For the remaining components of the efficient score, we use once more, $\mathbb{E}_{\theta,\widetilde{\omega}}\left[V\,|\, Y_0,\overline{P},X_1\right]=\frac{T}{\overline{P}}$ and obtain
\begin{align} 
    \widetilde{\phi}_{\theta,\widetilde{\omega}}^{\rm eff,\gamma}(Y^2,X^2)&=Y_1-\frac{\alpha}{1+\alpha}e^{-X_1'\frac{\beta}{\alpha}-\frac{\gamma}{\alpha}Y_0} \overline{P}^{\frac{1}{\alpha}}  \notag \\
    &-\left(Y_0\left(\widetilde{P}_1-\frac{1}{2}\right)+\left(Y_1(1-\widetilde{P}_1)-\left(\frac{\alpha}{1+\alpha}-\frac{\alpha}{1+2\alpha}\right)e^{-X_1'\frac{\beta}{\alpha}-\frac{\gamma}{\alpha}Y_0} \overline{P}^{\frac{1}{\alpha}}\right)\right)T \label{eq: loceff_score_gamma_T=2}
\end{align}
from \eqref{eq: efficient_score_gamma_specific_T=2}, and finally
\begin{align}
    \widetilde{\phi}_{\theta,\widetilde{\omega}}^{\rm eff,\alpha}(Y^2,X^2) =&  \frac{1}{\alpha}\left(  2+\ln \widetilde{P}_1+\ln (1-\widetilde{P}_1)\right) \notag \\
    &-\frac{1}{\alpha}\left(\widetilde{P}_1\ln \widetilde{P}_1+(1-\widetilde{P}_1)\ln (1-\widetilde{P}_1)+\frac{1}{2}\right)T  \notag \\
    &- \widetilde{\phi}_{\beta,\widetilde{\omega}}^{\rm eff}(Y^2,X^2)'\frac{\beta}{\alpha}- \widetilde{\phi}_{\gamma,\widetilde{\omega}}^{\rm eff}(Y^2,X^2)\frac{\gamma}{\alpha} \label{eq: loceff_score_alpha_T=2}
\end{align}
from \eqref{eq: efficient_score_alpha_specific_T=2}. \\

\noindent \textbf{Implementation details.} The asymptotic standard errors reported in Table \ref{tab1_montecarlo_SE} are obtained via Monte Carlo integration using  $N=1,000,000$ simulation draws from the DGP of Experiment (A).  As a benchmark, we computed the square root of the diagonal elements of  $\mathbb{E}\left[\phi_{\theta_0}^{\rm eff,SE}(Y_{i}^2, X_{i}^2)\phi_{\theta_0}^{\rm eff,SE}(Y_{i}^2, X_{i}^2)'\right]^{-1}$, which corresponds to the semiparametric efficiency bound under strict exogeneity.  This quantity serves as a reference for the remaining estimators, which explains the normalization in the first row. Similarly, the second row displays the square root of the (normalized) diagonal elements of $\mathbb{E}\left[\phi_{\theta_0}^{\rm eff}(Y_{i}^2, X_{i}^2)\phi_{\theta_0}^{\rm eff}(Y_{i}^2, X_{i}^2)\right]^{-1}$, i.e the semiparametric efficiency bound with feedback. For the locally efficient estimator and the simple moments approach, we report the (normalized) asymptotic standard errors using the method-of-moments variance formula:
$H^{-1}VH^{-1}$. In the case of the locally efficient estimator, the matrices are defined as  $H=\mathbb{E}\left[\pdv{\widetilde{\phi}_{\theta_0,\widetilde{\omega}}^{\rm eff}(Y_{i}^2, X_{i}^2)}{\theta}\right]$ and $V = \mathbb{E}\left[\widetilde{\phi}_{\theta_0,\widetilde{\omega}}^{\rm eff}(Y_{i}^2, X_{i}^2)\widetilde{\phi}_{\theta_0,\widetilde{\omega}}^{\rm eff}(Y_{i}^2, X_{i}^2)'\right]$. For the simple moment-based estimator, we use the same expressions, replacing $\widetilde{\phi}_{\theta_0,\widetilde{\omega}}^{\rm eff}(Y_{i}^2, X_{i}^2)$ with $\phi_{\theta_0}(Y_{i}^2, X_{i}^2)$.\\
\indent In Table \ref{tab1_montecarlo_FB}, we follow an analogous procedure with $N=1,000,000$ simulation draws from the DGP of Experiment (B) to evaluate population expectations. The main difference is that we now treat the square root of the diagonal elements of $\mathbb{E}\left[\phi_{\theta_0}^{\rm eff}(Y_{i}^2, X_{i}^2)\phi_{\theta_0}^{\rm eff}(Y_{i}^2, X_{i}^2)\right]^{-1}$ as the benchmark for efficiency comparisons. For the ASH, we report the corresponding asymptotic standard errors for the method of moments estimator
\begin{align*}
    \widehat{\overline{\lambda}}(y_{t}|x_t,y_{t-1};\widehat{\theta})&=\frac{1}{N}\sum_{i=1}^N  {\varphi}_{\widehat{\theta}}(Y_{i}^2, X_{i}^2),
\end{align*}
where $\varphi_{\theta}(Y_{i}^2, X_{i}^2)=\lambda_{\alpha}(y_{t})e^{x_{t}'\beta+\gamma y_{t-1}}\frac{T-1}{\overline{P}_{i}}$. The benchmark in this case is the square root of the covariance of the efficient influence function for the ASH:
\begin{align*}
    {\varphi}_{\theta_{0}}(Y_{i}^2, X_{i}^2) + \mathbb{E}\left[\pdv{\varphi_{\theta_{0}}(Y_{i}^2, X_{i}^2)}{\theta}\right]\mathbb{E}\left[\phi_{\theta_0}^{\rm eff}(Y_{i}^2, X_{i}^2)\phi_{\theta_0}^{\rm eff}(Y_{i}^2, X_{i}^2)'\right]^{-1}\phi_{\theta_0}^{\rm eff}(Y_{i}^2, X_{i}^2).
\end{align*}
We apply the same strategy for the remaining two estimators, modifying the influence function accordingly. Specifically, for the estimator based on working models, we use:
\begin{align*}
        {\varphi}_{\theta_{0}}(Y_{i}^2, X_{i}^2) - \mathbb{E}\left[\pdv{\varphi_{\theta_{0}}(Y_{i}^2, X_{i}^2)}{\theta}\right]\mathbb{E}\left[\pdv{\widetilde{\phi}_{\theta_0,\widetilde{\omega}}^{\rm eff}(Y_{i}^2, X_{i}^2)}{\theta}\right]^{-1}\widetilde{\phi}_{\theta_0,\widetilde{\omega}}^{\rm eff}(Y_{i}^2, X_{i}^2),
\end{align*}
and for the method based on simple moments:
\begin{align*}
        {\varphi}_{\theta_{0}}(Y_{i}^2, X_{i}^2) - \mathbb{E}\left[\pdv{\varphi_{\theta_{0}}(Y_{i}^2, X_{i}^2)}{\theta}\right]\mathbb{E}\left[\pdv{\phi_{\theta_0}(Y_{i}^2, X_{i}^2)}{\theta}\right]^{-1}\phi_{\theta_0}(Y_{i}^2, X_{i}^2).
\end{align*}
Finally, in both experiments, we evaluate $p=\mathbb{E}\left[X_2\right]$ via Monte Carlo integration to compute the locally efficient score for $\beta$ in equation \eqref{eq: loceff_score_beta_T=2}. \\

\noindent \textbf{Computation of efficient scores.} In the absence of feedback, Lemma \ref{lem: nofeedback_MPH} and Lemma \ref{lem: helmert_tranformation_MPH} imply that \eqref{eq: efficient_score_beta_specific_T=2} simplifies to
\begin{align}
    \phi_{\theta}^{\rm eff,\beta}(Y^2,X^2)=\left(\widetilde{P}_1-\frac{1}{2}\right)\overline{P}\left(\mathbb{E}\left[X_{2}V|Y_0,\overline{P},X_1\right]-X_1\mathbb{E}\left[V\,|\, Y_0,\overline{P},X_1\right]\right) \label{eq: efficient_score_beta_specific_T=2_nofeedback}
\end{align}
where $V= \exp(A)$. As a result, the efficient score for $\alpha$ in  \eqref{eq: efficient_score_alpha_specific_T=2} also simplifies. On the other hand, the  efficient score for $\gamma$ in \eqref{eq: efficient_score_gamma_specific_T=2} remains unchanged. 
 To compute the efficient score, we must evaluate two quantities: $\mathbb{E}\left[X_2 V \,|\,Y_0,\overline{P},X_1\right]$ and $\mathbb{E}\left[V\,|\,Y_0,\overline{P},X_1\right]$.  In both numerical experiments $V\sim \text{Gamma}(\kappa_0,\lambda_0)$ where $\kappa_0=\lambda_0=5$. Given that $\overline{P}|Y_{0},X_{1},V \sim\mathrm{Gamma}\left(T,V\right)$ by Lemma \ref{lem: helmert_tranformation_MPH}, Bayes rule implies that $V|Y_0,\overline{P},X_1\sim \mathrm{Gamma}\left(T+\kappa_0,\overline{P}+\lambda_0\right)$ and hence $\mathbb{E}\left[V\,|\, Y_0,\overline{P},X_1\right]=\frac{T+\kappa_0}{\overline{P}+\lambda_0}$. Furthermore, the success probability for $X_2$ in Experiment (A) is $p(Y_0,Y_1,X_1,V)=1-\exp(-\tau_{A}(Y_{0},X_{1},Y_{1})V)$ with $\tau_{A}(Y_{0},X_{1},Y_{1})=Y_{0}+X_{1}$. Using this specification, we derive:
\begin{align*}
    \mathbb{E}\left[X_2V\,|\, Y_0,\overline{P},X_1\right]= \frac{T+\kappa_0}{\left(\lambda_0+\overline{P}\right)} - (T+\kappa_0)\frac{\left(\lambda_0+\overline{P}\right)^{T+\kappa_0}}{\left(\lambda_0+\overline{P}+\tau_{A}(Y_{0},X_{1},Y_{1})\right)^{T+\kappa_0+1}},
\end{align*}
and 
\begin{align*}
    \mathbb{E}\left[e^{A}|Y_0,X_1,X_2,\overline{P}\right]&=X_2 \frac{w_1}{w_1-w_2}\frac{T+\kappa_0}{(\lambda_0+\overline{P})}+\left((1-X_2)-X_2\frac{w_2}{w_1-w_2}\right)\frac{T+\kappa_0}{(\lambda_0+\overline{P}+\tau_{A}(Y_{0},X_{1},Y_{1}))},
\end{align*}
for $w_1=\frac{\overline{P}^{T-1}\lambda_0^{\kappa_0}}{\Gamma(T)\Gamma(\kappa_0)}\frac{\Gamma(T+\kappa_0)}{\left(\lambda_0+\overline{P}\right)^{T+\kappa_0}}$, $w_2=\frac{\overline{P}^{T-1}\lambda_0^{\kappa_0}}{\Gamma(T)\Gamma(\kappa_0)}\frac{\Gamma(T+\kappa_0)}{\left(\lambda_0+\overline{P}+\tau_{A}(Y_{0},X_{1},Y_{1})\right)^{T+\kappa_0}}$. This last expression enables us to compute the efficient score under strict exogeneity given in \eqref{eq: efficient_score_beta_specific_T=2_strictexo}-\eqref{eq: efficient_score_gamma_specific_T=2_strictexo}-\eqref{eq: efficient_score_alpha_specific_T=2_strictexo} for Experiment (A). \\
\indent Equations \eqref{eq: efficient_score_beta_specific_T=2}–\eqref{eq: efficient_score_gamma_specific_T=2}–\eqref{eq: efficient_score_alpha_specific_T=2} reveal that, in the presence of feedback, computing the efficient score requires evaluating the following four conditional expectations: $ \mathbb{E}\left[X_{2}|Y_0,\widetilde{P}_1,\overline{P},X_1\right]$, $\mathbb{E}\left[X_{2}|Y_0,\overline{P},X_1\right]$, $ \mathbb{E}\left[X_{2}V|Y_0,\widetilde{P}_1,\overline{P},X_1\right]$, $ \mathbb{E}\left[X_{2}(1-\widetilde{P}_1)V|Y_0,\widetilde{P}_1,\overline{P},X_1\right]$. In Experiment (B), the success probability for $X_2$ is  $p(Y_0,Y_1,X_1,V)=1-\exp(-\tau_{B}(Y_{0},X_{1},Y_{1})V)$ with $\tau_{B}(Y_{0},X_{1},Y_{1})=Y_{0}+X_{1}+Y_{1}$. Under this specification, we obtain the following expressions:
\begin{align*}
    \mathbb{E}\left[X_2|Y_0,X_1,\widetilde{P}_1,\overline{P}\right]
    &=1-\frac{\left(\lambda_0+\overline{P}\right)^{T+\kappa_0}}{\left(\lambda_0+\overline{P}+\tau_{B}(Y_{0},X_{1},Y_{1}) \right)^{T+\kappa_0}} \\
    \mathbb{E}\left[X_2|Y_0,X_1,\overline{P}\right]
    &=1-\left(\lambda_0+\overline{P}\right)^{T+\kappa_0}\times \frac{1}{C_1^{(T+\kappa_0)}} \times {}_2F_1\left(T+\kappa_0, \alpha; 1+ \alpha; -\frac{C_2}{C_1}\right) \\
    \mathbb{E}\left[X_2V\,|\, Y_0,\widetilde{P}_1,\overline{P},X_1\right]
    &=\frac{T+\kappa_0}{\lambda_0+\overline{P}}-(T+\kappa_0)\frac{\left(\lambda_0+\overline{P}\right)^{T+\kappa_0}}{\left(\lambda_0+\overline{P}+\tau_{B}(Y_{0},X_{1},Y_{1})  \right)^{T+\kappa_0+1}},
\end{align*}
and 
\begin{align*}
    &\mathbb{E}\left[X_2(1-\widetilde{P}_1)V\,|\, Y_0,\overline{P},X_1\right] \\
    &=\frac{1}{2}\frac{T+\kappa_0}{\lambda_0+\overline{P}} -(T+\kappa_0)\left(\lambda_0+\overline{P}\right)^{T+\kappa_0}\times \\
    &\frac{1}{C_1^{(T+\kappa_0+1)}}\left({}_2F_1\left(T+\kappa_0+1, \alpha; 1+ \alpha; -\frac{C_2}{C_1}\right)-\frac{1}{2}\times {}_2F_1\left(T+\kappa_0+1,2\alpha; 1+ 2\alpha; -\frac{C_2}{C_1}\right)\right),
\end{align*}
where we use the shorthands $C_1=\lambda_0+\overline{P}+Y_0+X_1,C_2=\overline{P}^{\frac{1}{\alpha}}e^{-X_1'\frac{\beta}{\alpha}-\frac{\gamma}{\alpha}Y_0}$, and ${}_2F_1\left(a, b; c; z\right)$ denotes the hypergeometric function
\begin{align*}
    {}_2F_1\left(a, b; c; z\right) = \frac{\Gamma(c)}{\Gamma(b)\Gamma(c-b)}\int_{0}^1 \frac{t^{b-1}(1-t)^{c-b-1}}{(1-zt)^a}dt.
\end{align*}

\end{document}